\DeclareMathOperator*{\argmax}{arg\,max}
\tikzset{
>=stealth',
help lines/.style={dashed, thick},
axis/.style={<->},
important line/.style={thick},
connection/.style={thick, dotted},
}
\renewcommand{\algorithmcfname}{Mechanism}
\newenvironment{subproof}{\paragraph{Proof}}{\ \ \ $\blacksquare$}
\newif\ifabstract
\newif\iffull
\let\realbfseries=\bfseries
\def\bfseries{\realbfseries\boldmath}
\def\GrabProofArgument[#1]{ #1: \egroup\ignorespaces}
\def\proof{\noindent\textbf\bgroup Proof%
           \@ifnextchar[{\GrabProofArgument}{. \egroup\ignorespaces}}
\begin{document}
\newtheorem{conj}{Conjecture}
\newtheorem{lemma}{Lemma}
\newtheorem{definition}{Definition}
\newtheorem{corol}{Corollary}
\newtheorem{theorem}{Theorem}
\newtheorem{rem}{Remark}
\newtheorem{corollary}{Corollary}
\newtheorem{proposition}{Proposition}
\newtheorem{claim}{Claim}
\newtheorem{fact}{Fact}

\def\figs{.}


\renewcommand{\algorithmcfname}{Mechanism}
\def\rbar{\ensuremath{\overline{r}}}
\def\rstar{\ensuremath{{r}^{\star}}}
\def\sstar{\ensuremath{{S}^{\star}}}
\def\over{\overline}
\def\fit{budget-tight}
\def\uopt{\ensuremath{U^{\star}}}
\def\ualg{\ensuremath{{U_\mathsf{ALG}}}}
\def\aux{\textsf{aux}}
\def\cbar{\ensuremath{\overline{c}}}
\def\cmax{\ensuremath{{c_{\textsf{max}}}}}
\def\umax{\ensuremath{{u_{\textsf{max}}}}}
\def\umin{\ensuremath{{u_{\textsf{min}}}}}
\def\bmax{\max_{i\in S}\{c_i/u_i\}}
\def\over{\overline}
\def\scale{\ref{alg.uniformratio}($f$)}
\def\r{\mathsf{r}}
\def\chibar{\over{\chi}}
\def\chidbar{\over{\over{\chi}}}
\def\F{\mathcal{F}}
\def\Scale{\textsc{Envy-Free}}
\def\Real{\mathbb{R}}
\def\bbR{\mathbb{R}}
\def\tix{\tilde{x}}
\def\tip{\tilde{p}}
\def\bbE{\mathbb{E}}
\def\calP{\mathcal{P}}
\def\np{unit-payment}
\def\Np{Unit-payment}
\def\umax{u_{\textsf{max}}}
\def\calD{\mathcal{D}}
\def\cutoff{\ensuremath{{\textsf{cutoff}}}}

\title{Mechanism Design for Crowdsourcing: An Optimal 1-1/e Competitive Budget-Feasible Mechanism for Large Markets}
\author{Nima Anari\footnote{University of California, Berkeley. \href{mailto:anari@eecs.berkeley.edu}{anari@eecs.berkeley.edu}},
Gagan Goel\footnote{Google Research, New York. \href{mailto:gagangoel@google.com}{gagangoel@google.com}},
Afshin Nikzad\footnote{Stanford University, Stanford.
\href{mailto:nikzad@stanford.edu}{nikzad@stanford.edu}
}}

\date{}


\maketitle

\begin{abstract}

In this paper we consider a mechanism design problem in the context of large-scale crowdsourcing markets such as Amazon's Mechanical Turk \citepalias{mturk}, ClickWorker~\citepalias{clickworker}, CrowdFlower~\citepalias{crowdflower}. In these markets, there is a requester who wants to hire workers to accomplish some tasks. Each worker is assumed to give some utility to the requester on getting hired. Moreover each worker has a minimum cost that he wants to get paid for getting hired. This minimum cost is assumed to be private information of the workers. The question then is - if the requester has a limited budget, how to design a direct revelation mechanism that picks the right set of workers to hire in order to maximize the requester's utility?

We note that although the previous work (\cite{2010-focs_singer_budget-feasible-mechanisms, 2011-soda_improved_budget_feasible}) has studied this problem, a crucial difference in which we deviate from earlier work is the notion of {\em large-scale} markets that we introduce in our model. The notion of a large-scale market that we consider is a natural one which states that the (private) cost of each worker is {\em small} compared to the budget of the requester. Without the large market assumption, it is known that no mechanism can achieve a competitive ratio better than $0.414$ and $0.5$ for deterministic and randomized mechanisms respectively (while the best known deterministic and randomized mechanisms achieve an approximation ratio of $0.292$ and $0.33$ respectively). In this paper, we design a budget-feasible mechanism for large markets that achieves a competitive ratio of $1-1/e \simeq 0.63$. Our mechanism can be seen as a generalization of an alternate way to look at the {\em proportional share} mechanism, which is used in all the previous works so far on this problem. Interestingly, we can also show that our mechanism is optimal by showing that no truthful mechanism can achieve a factor better than $1-1/e$; thus, fully resolving this setting. Finally we consider the more general case of submodular utility functions and give new and improved mechanisms for the case when the market is large.

\end{abstract}




\maketitle

\pagebreak

\tableofcontents
\newpage
\onehalfspacing

\setcounter{page}{1}


\section{Introduction}
Crowdsourcing is a recent phenomenon that is used to describe the procurement of a large number of workers to do certain tasks. These tasks can be of a variety of natures and - to give a few examples - include image annotation, data labeling for machine learning systems, consumer surveys, rating search engine results, spam detection, product reviews, etc. There are several platforms (such as Amazon's Mechanical Turk \citepalias{mturk}) that facilitate and automate various steps involved in setting up and executing crowdsourcing tasks.

A key challenge in these online labor markets is to be able to properly price the tasks. Since the requester (the one who wants to procure workers) is usually budget constrained, pricing the tasks too high can result in lower output for the requester. On the other hand, pricing the tasks too low can disincentivize workers to work on the tasks. This makes pricing a non-trivial step for the requester when setting up a crowdsourcing task. One idea - to make pricing more automated and to prevent economic loss from poor pricing - is to design a direct revelation mechanism that solicits bids from workers to report their cost of participation, and based on this decide which workers to hire and how much to pay them. 


A simple model that captures the above problem is as follows: There is a set $S$ of workers. Worker $i$ has a private cost $c_i$ and provides utility $u_i$ to the requester on getting hired. We want to design a truthful mechanism that decides which workers to recruit and how much to pay them. The goal is to maximize the requester's utility without violating her budget constraint.

For the above model, ~\cite{2010-focs_singer_budget-feasible-mechanisms} gave an incentive-compatible mechanism that achieves an approximation ratio of $1/6$ compared to the {\em offline optimum} that knows the costs of the workers. Later on ~\cite{2011-soda_improved_budget_feasible} improved the approximation ratio to $\frac{1}{2 + \sqrt(2)} \simeq 0.292$ (and to $1/3$ for randomized mechanisms). Chen et. al. also showed that no deterministic mechanism can achieve an approximation ratio better than $\frac{1}{1+\sqrt(2)} \simeq 0.414$, and no randomized mechanism can achieve an approximation ratio better than $0.5$.

Our work is motivated by the following observation: Most of the crowdsourcing tasks are {\em large-scale} in nature in terms of the number of workers involved. On the other hand if one looks at the impossibility result of ~\cite{2011-soda_improved_budget_feasible}, they involve only a small number of workers (specifically, only  3 workers). Thus, this leads to a natural open question - {\em Do these lower bounds extend to the case of large markets? or can one design better mechanisms for this important case of large markets?}

In this paper, we seek to understand the above question. We show that one can significantly improve the approximation ratio for the case of large markets. We give a mechanism that achieves an approximation ratio of $1-1/e \simeq 0.63$ for large markets. In addition, we show that our mechanism is the best possible mechanism by showing that no truthful budget-feasible mechanism can achieve a factor better than $1-1/e$. Finally, we look at the more general case of submodular utility functions.






\subsection{The Model}
We define the model abstractly: Consider a reverse auction scenario with one buyer and $n$ sellers, where the set of sellers is denoted by $S$. 
Each seller $i\in S$ owns a single item (denoted by item $i$) and has a {\em private} cost $c_i$ for it. The buyer derives a utility of $u_i$ from item $i$. The buyer has a limited budget $B$, and its goal is to buy a subset of items that maximizes her utility without exceeding her budget.

Note that if the sellers are not strategic and the costs are known to the buyer, then this is the well-known {\em knapsack} optimization problem.
However, the cost $c_i$ is assumed to be a private information of seller $i$. Thus we are interested in designing direct-revelation mechanisms where the buyer solicits bids from the sellers, and then computes which sellers to buy from and how much to pay them. More formally, a mechanism $\mathcal{M}$ consists of two functions $A: (\bbR_+)^n \rightarrow \{0,1\}^n$ and $P: (\bbR_+)^n \rightarrow (\bbR_+)^n$. The allocation function $A(\cdot)$ takes as input the costs of $n$ sellers and reports the set of winners. The payment function $P(\cdot)$ takes as input the costs of $n$ sellers and reports how much each seller should pay. Sometimes we will use functions $A_i: (\bbR_+)^n \rightarrow \{0,1\}$ (and similarly $P_i(\cdot)$) for each $i\in S$ to refer to the restriction of functions $A(\cdot)$ and $P(\cdot)$ for seller $i$. \\

The mechanism $\mathcal{M} = (A, P)$  should satisfy the following properties:
\begin{enumerate}
\item Budget Feasibility: The sum of the payments made to the sellers should not exceed $B$, i.e., $\sum_i P_i(\mathbf{c})  \leq B$ for all $\mathbf{c}$.
\item Individual rationality: A winner $i\in S$ is paid at least $c_i$.
\item Truthfulness/Incentive-Compatibility: Reporting the true cost should be a dominant strategy of the sellers, i.e. for all non-truthful reports $\over{c}_i$ from seller $i$, it holds that
\begin{align*}
 P_i(\over{c}_i, \mathbf{c_{-i}}) - c_i\cdot A_i(\over{c}_i, \mathbf{c_{-i}})   \leq  P_i(c_i, \mathbf{c_{-i}}) - c_i \cdot A_i(c_i, \mathbf{c_{-i}})
\end{align*}
\end{enumerate}

Among all mechanisms that satisfy the above properties, we are interested in the ones that give high utility to the buyer. Note that no mechanism can achieve utility larger than $U^*(\mathbf{c, u})$, where $U^*(\mathbf{c, u})$ (or simply $U^*$ for brevity) is the utility of the knapsack optimization problem assuming costs of the sellers are known to the buyer. We say a mechanism $\mathcal{M}$ is an $\alpha$-approximation (for $\alpha \leq 1$) if it gives utility at least $\alpha \cdot U^*(\mathbf{c, u})$ for any $\mathbf{c}$ and $\mathbf{u}$. 

{\bf Indivisible vs Divisible Items.} Note that the above description is given for indivisible items, however, we can define the above problem for divisible items as well. For instance, if the item being sold by a seller is his own time, then it can be modeled as a divisible item. For fraction $x  \leq 1$ of a divisible item,  the cost of seller $i$ is $x.c_i$ and the utility obtained by the buyer is $x. u_i$.  The allocation function for divisible items is defined as $A: (\bbR_+)^n \rightarrow [0,1]^n$.


{\bf More general utility functions.} An interesting generalization of the above model is when the utility function over the set of items is a submodular function rather than additive functions.  We denote this function by $U: 2^S \rightarrow \bbR_+$  (for additive functions, $U(T) = \sum_{i \in T} u_i$, for $\forall T \subseteq S$). We assume that the utility function is known to the buyer.




\subsubsection{{\bf The Large Market Assumption}} \label{sec.linearlargemarket}
Crowd-sourcing systems are excellent examples of {\em large markets}. Informally speaking, a market is said to be large if the number of participants are large enough that no single person can affect the market outcome significantly. Our results take advantage of this nature of the crowdsourcing markets to give better mechanisms.

We define the {\bf large market assumption} as follows: We assume that in our model, the cost of a single item is very small compared to the buyer's budget $B$. More formally, let $\cmax = \max_{i\in S}\{c_i\}$. Then, the large market assumption is defined as below.\\

{{\bf The Large Market Assumption:}} $\cmax \ll B$.
\\

In other words, we define the {\em largeness ratio} of the market to be $\theta=\frac{\cmax}{B}$ and analyze our mechanisms for $\theta\to 0$.

This assumption - also known as the {\em small bid to budget ratio assumption} - is used in other large-market problems as well (for instance, see \cite{MehtaSVV07} for a similar definition with application in online advertising).
All the mechanisms that we present in the main body of the paper (mechanisms for additive utility functions) will be analyzed under this assumption. 
The mechanisms that we design for submodular utility functions work under a different large market assumption which is explained below.

\paragraph{An Alternative Assumption}
We also suggest another definition for large markets, the discussion of which will be deferred to the appendix. Our mechanisms for submodular utility functions work under this assumption; moreover, we can slightly modify our mechanisms for additive utility functions so that they work under this assumption as well, while preserving their approximation ratio. We define this assumption below.

Let $\umax =\max_{i\in S}{u_i}$ and $U^*$ be the total utility of the optimum solution (i.e. the maximum utility that the buyer can achieve when the costs are known to her). This large market assumption states that:
\\

{{\em The Alternative Large Market Assumption:}}  $\umax \ll U^*$.
\\

In other words, we define the largeness ratio of the market to be $\theta=\max_{i\in S} \frac{u_i}{U^\star}$ and analyze our mechanisms for when $\theta\to 0$.

We note that our impossibility result for additive utilities (Section \ref{sec.hardness}) holds for either of the two definitions.

\subsection{Our Results}
In this paper, we design optimal budget-feasible mechanisms for {\em large markets}. To the best of our knowledge, we are the first ones to study the case of large markets. We list our results below:

\begin{enumerate}
\item If the items are divisible, we design a deterministic mechanism which satisfies all the required properties and has an approximation ratio of $1-1/e$ (Section \ref{sec.optimumf}). Note that previously, no mechanism was known for the case of divisible items. In fact, one can show that no bounded approximation ratio is possible for divisible items if the large market assumption is dismissed. 

\item If the items are indivisible, we can modify our mechanism and give a randomized truthful mechanism for this case which achieves an approximation ratio of $1-1/e$. (Section \ref{sec.rounding})

\item In Section \ref{sec.hardness}, we show that the above results are optimal by proving that no truthful (and possibly) randomized mechanism can achieve approximation ratio better than $1-1/e$. Our hardness result holds for both cases of divisible and indivisible items.


\item For the case of submodular utility functions, we design deterministic mechanisms that achieve approximation ratios of $\frac{1}{2}$ and $\frac{1}{3}$ with exponential and polynomial running times respectively. Note that we only consider the case of indivisible items for submodular utility functions. (Section \ref{sec.submod})
\end{enumerate}

As we saw in Section \ref{sec.linearlargemarket}, one could define a notion of $\theta$-large market, i.e. a market with largeness ratio $\theta$. To gain a better understating of the problem, we focus on large markets (i.e. when $\theta\to 0$) and state our main theorems for this setting. However, our mechanisms do not need ``very large'' markets to perform well; for instance, in the knapsack problem with additive utilities, the approximation ratio \footnote{we didn't try to optimize the dependence on $\theta$ in our analysis as we focus on the main ideas for the sake of better understanding.} is $(1-1/e)\cdot (1-6\theta/5)$ when all the items have equal utilities (Section \ref{SEC.TRUTHU1}). 
Thus, say for $\theta = 1/20$ and $\theta=1/40$ (which are reasonable assumptions in many settings) we get approximation factors $0.592$ and $0.613$ respectively.

Also we point out that the above results have applications beyond crowdsourcing - for instance, see \cite{2011-wsdm_singer_how-to-win-friends-and-influence-people} for application in marketing over social networks, and \cite{2103_corr_budget_experimental_design} for application in experiment design. \cite{2011-wsdm_singer_how-to-win-friends-and-influence-people} provides a truthful mechanism with approximation ratio $\approx 0.032$ and \cite{2103_corr_budget_experimental_design} provides an approximately truthful mechanism with approximation ratio $\approx 0.077$.  For both these settings, large market assumption is a very reasonable assumption to make; thus, our results apply to these applications as well. In particular, our results give fully truthful mechanisms for these applications with approximation ratios $\frac{1}{2},\frac{1}{3}$ (for exponential and polynomial running time respectively) in large markets. 


\subsection{Related Work}
The most relevant related work is that of \cite{2010-focs_singer_budget-feasible-mechanisms} and \cite{2011-soda_improved_budget_feasible}. \cite{2010-focs_singer_budget-feasible-mechanisms} first introduced this model (without the large market assumption). For the case of additive utilities and indivisible items, he gave a deterministic mechanism with an approximation ratio of $1/6$.  \cite{2011-soda_improved_budget_feasible} later improved it to $1/(2+ \sqrt{2})$, and also gave a randomized mechanism with an approximation ratio of $1/3$. They gave a lower bound of $1/(1 + \sqrt{2})$ and $1/2$ for deterministic and randomized mechanisms respectively.  For the case of submodular utilities,  \cite{2010-focs_singer_budget-feasible-mechanisms} gave a randomized mechanism with an approximation ratio of $1/112$ which was improved to $1/7.91$ by \cite{2011-soda_improved_budget_feasible}. \cite{2011-soda_improved_budget_feasible} also gave an exponential time deterministic mechanism for submodular utility functions with an approximation ratio of $1/8.34$. 

\cite{DobzinskiPS11} looked at the more general sub-additive utility functions and gave a $1/log^2(n)$ and $1/log^3(n)$ approximation ratio for randomized and deterministic mechanisms respectively. \cite{singla13incentives} design budget feasible mechanisms for adaptive submodular functions with applications in community sensing.

In another work, \cite{2012-stoc_budget_feasible_prior_free_to_bayesian} study this problem in the bayesian setting. \cite{2011-wsdm_singer_how-to-win-friends-and-influence-people} looks at the application of this model in marketing over social networks. ~\cite{2103_corr_budget_experimental_design} study the application of this model in experiment design.

Another related model that has been inspired from crowdsourcing applications is when the workers arrive online. A sequence of papers model this as an online learning problem. See \cite{2013-www_truthful-incentives, 2012-ec_singer_budget-feasible-posted-prices, 2013-www_pricing_mechanisms} for more details.

Finally, we note that our assumption for large markets is similar to the assumption made in other application areas; notably in the Adwords problem as studied by \cite{MehtaSVV07}. See ~\cite{gagan-mehta, 2009-ec_adwords, FeldmanHKMS10, FeldmanKMMP09} for other models motivated by online advertising where they make similar assumptions.

\subsection{Roadmap} \label{sec.organization}
The readers are encouraged to read this section before proceeding further.
We begin by developing some intuition in Section \ref{sec.intuition}. In Section \ref{sec.uniscalef},
a simple {\em proportional share mechanism} which forms the basis for \cite{2010-focs_singer_budget-feasible-mechanisms,
2011-soda_improved_budget_feasible} is introduced. The mechanism picks a single cutoff for
the utility to cost ratio in such a way that the whole budget is consumed. In Section \ref{sec.nonuniscalef}, we generalize the simple {\em proportional share
mechanism} to a class of mechanisms parameterized by a single-variable allocation function. In later sections, we show that this generalization improves the approximation ratio \footnote{Although the truthfulness is sacrificed, later we augment the mechanism so that it becomes truthful without compromising the approximation ratio in
large market.}.
We develop some intuition by considering a simple instance of our generalized mechanisms: instead of a hard cutoff that is used in the {\em proportional share mechanism}, i.e. a two-level allocation rule, we consider
a special class of three-level allocation rules and show that they can improve the approximation
ratio.

The generalized mechanisms introduced are not in general truthful. In Section \ref{sec.optmec}, we introduce
a simple method to make them truthful, while maintaining their individually rational and budget-feasibility. Later when we introduce the optimal mechanism, we show that its approximation ratio does not get
compromised by utilizing this method in large markets.

In Section \ref{sec.optimumf}, we find one of the generalized mechanisms which provides an approximation ratio of $1-1/e$ in large markets. 
In Section \ref{sec.hardness}, we complement this result by showing that no 
truthful mechanism can achieve approximation ratio better than $1-1/e$.
In Section \ref{sec.rounding}, we adapt our mechanism to the case of indivisible items.


In Section \ref{sec.submod}, we present two mechanisms for submodular utility functions which have exponential and polynomial running times and approximation ratios $\frac{1}{2}$ and $\frac{1}{3}$, respectively. 

\section{A Simple $\frac{1}{2}$-Approximate Truthful Mechanism} \label{sec.uniscalef}
In this section, we briefly explain the previous mechanism designed for this problem for the additive utility functions that gives an approximation factor of $\frac{1}{2}$ in large markets. 

\begin{definition}
{\em Cost-per-utility rate} of a seller $i$ is equal to $c_i/u_i$. 
\end{definition}

A natural approach to this problem tries to find a single {\em payment-per-utility rate} (denoted by rate $r$) at which all the winning sellers get paid. In other words, this approach picks a single number $r$ and makes a payment of $r.u_i$ to seller $i$ if she wins and pays her $0$ otherwise. For brevity, we sometimes call the payment-per-utility rate $r$ simply the {\em rate $r$} when there is no risk of confusion.

Individual rationality implies that a seller $i$ is willing to sell her item at rate $r$ iff $r\geq c_i/u_i$.
 Initially the buyer declares a very large rate $r$, and then sees which sellers are willing to sell at this rate. If the total cost to buy from all these sellers at rate $r$ is higher than the budget $B$, then the buyer decreases the rate $r$. More formally, a natural descending price auction for this problem works as follows: 
\begin{enumerate}
\item Let $A$ denote the set of active sellers, and initially set $A=S$. \item Start with a very high rate $r$. 
\item Verify if all the active sellers can be paid with rate $r$, i.e. whether $ \sum_{i\in A} r.u_i \leq B$ or not. 
\item If the payment is feasible, then allocate the subset $A$, make the payment and stop. 
\item If the payment is not feasible then decrease $r$ slightly; update $A$ accordingly by removing the sellers $i$ for whom $c_i/u_i > r$; go to Step 3.
\end{enumerate}

The above auction captures the main idea behind the {\em proportional share} mechanisms designed in \cite{2010-focs_singer_budget-feasible-mechanisms, 2011-soda_improved_budget_feasible}\footnote{
It is worth pointing out that for submodular utilities, they need to use an additional trick: constructing a (sorted) list of sellers in a greedy manner before running the auction.}, although they describe it in a forward auction format. It is not hard to see that the above mechanism is truthful, budget-feasible, and in large markets achieves an approximation ratio of $\frac{1}{2}$ (with small modifications, this can be converted to a randomized $\frac{1}{2}$-approximation for arbitrary markets as well \cite{2011-soda_improved_budget_feasible}).

\section{Our Approach} \label{sec.intuition}

In this section, we give a high level overview of our approach. Sections \ref{sec.allocationrule} and \ref{sec.paymentrule} are preliminary sections and must be read before proceeding further. The notions defined in these sections are explained more intuitively using an example in Section \ref{sec.example}.
Also for rest of the paper we will assume that the sellers' items are divisible, unless we explicitly talk about indivisible items.

\subsection {A Notion of An Allocation Rule} \label{sec.allocationrule}
To build our new ideas, we first introduce and formalize the notion of an allocation rule.

An allocation rule $f:\mathbb{R}^+\rightarrow [0,1]$ is a function which determines how much to buy from a given seller. The domain of allocation rules is the {\em cost per utility rate}; meaning, given a $(u_i, c_i)$ pair of seller $i$, the allocation rule $f$ says that we should buy $f(\frac{c_i}{u_i})$ fraction of seller $i$'s item. We do not enforce using the same allocation rule for all sellers.

We say an allocation rule $f: \Real^+\rightarrow [0,1]$ is a {\bf \em Standard Allocation Rule} if $f$ is a decreasing function such that $f(0)=1$ and $f(e-1)=0$.\footnote{The choice of $e-1$ is just for simplifying the future calculations; it can be replaced with any other constant.}

For any standard allocation rule $f:\Real_+\to [0,1]$, we can define an associated family of allocation rules 
\begin{align*}
\F(f)= \left\{f_r: \Real_+\to [0,1]\right\}_{r>0}
\end{align*}
 where $f_r$ denotes an allocation rule which is same as $f$ except that it is stretched along the horizontal axis with ratio $r$, i.e. $f_r(x)=f(x/r)$ for all $x\geq 0$.

As we will see later, any single standard allocation rule $f$ and its corresponding family of allocation rules $\F(f)$ will uniquely specify our mechanism. At a high level, our mechanism will work as follows: we will pick the largest positive $r$ such that $f_r \in \F(f)$ is budget-feasible; meaning the sum of the payments with allocation rule $f_r$ does not exceed $B$. However, note that we have not yet defined a payment rule given an allocation rule $f_r$ - we define it next.

\subsection{Payment Rule} \label{sec.paymentrule}
Recall that given a function $f_r \in \F(f)$ and $(u_i, c_i)$ pair for a seller $i$, the value of $f_r(\frac{c_i}{u_i})$ only tells us what fraction of seller $i$'s item to buy. But how much should we pay seller $i$ in order to give incentives to seller $i$ to report its cost truthfully to the mechanism? We compute these payments based on the well known Myerson's characterization of the truthful mechanisms \cite{myersonslemma}.

Let the payment rule for seller $i$ is denoted by $P_{i,r}:\bbR_+\to\bbR_+$ for an allocation rule $f_r$. Here $P_{i,r}$ maps the reported cost of seller $i$ into its payment. 

To define $P_{i,r}$, we do the following thought process: Let's divide seller $i$'s item into $u_i$ different pieces. Note that now the seller's cost for each piece is $\frac{c_i}{u_i}$. Thus function $f_r$ can now be seen as mapping the cost of a single piece into the fraction of that piece that we will buy. Let  $Q_r (x) :\bbR_+\to\bbR_+$ denote the function that maps the cost for a single piece into a payment for that piece. Now, Myerson's characterization \cite{myersonslemma} says that the payment for each piece is given by the following formula:

\begin{align*}
Q_r (x)=x\cdot f_r(x) + \int_{x}^{\infty} f_r(y) \operatorname{d}y.
\end{align*}

\begin{figure} 
\center
 \begin{tikzpicture}[scale=1] 
    \coordinate (y) at (0,3);
    \coordinate (x) at (5,0);
    \draw[axis] (y) -- (0,0) --  (x);
    
 \draw[black, ultra thick, domain=0:4] plot (\x, {1.2*ln(5-\x)});
\draw[dashed] (2.2,0) -- (2.2,{1.2*ln(5-2.2)});
\draw[ultra thick][dashed] (0,{1.2*ln(5-2.2)-0.01}) -- (2.2,{1.2*ln(5-2.2)-0.01});
\fill[fill=gray, opacity=0.2] (0,0) -- plot [domain=0:2.2] (\x,{1.2*ln(5-2.2)}) -- (2.2,0)  -- cycle;
\fill[fill=gray, opacity=0.2]  (2.2,0) -- plot [domain=2.2:4] (\x,{1.2*ln(5-\x)}) -- (4,0)  -- cycle;

    \node at (0,3.3) {$f_r$};
    \node at (-0.2,2) {$1$};
    \node at (2.3,-0.2) {${c_i}/{u_i}$};
 \end{tikzpicture}
 \caption{With allocation rule $f_r$, the payment to seller $i$ is defined to be $u_i$ times the shaded area under the curve.}
 \label{fig.payment}
 \end{figure}
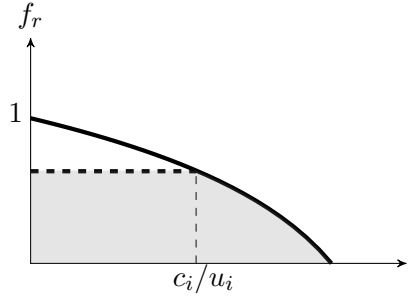

Intuitively, $Q_r (x)$ represents the area under the curve as seen in Figure~\ref{fig.payment}. Going forward, we will call the function $Q_r$ a {\em unit-payment rule}. Note that $P_{i,r}(x)$ and  $Q_r$ are related by the following formula:
\begin{align*}
P_{i,r}(x) = u_i \cdot Q_r(x/u_i),
\end{align*}

Thus, to summarize, for an allocation rule $f_r$, we buy $f_r(c_i/u_i)$ units of her item, and pay her $P_{i,r}(c_i)$ amount of money.

{\em Remark}: We make a remark that the above payment rule is truthful only if the allocation rule $f_r$ that is offered to seller $i$ doesn't depend on the private information (cost $c_i$ in this case) of the seller $i$. If the allocation rule $f_r$ does depend on the private information of the seller, then the mechanism may or may not be truthful. In the next section we give a mechanism in which the allocation rule $f_r$ for a seller $i$ depends on its reported costs. Later in section \ref{sec.truthfulmechanisms}, we give a mechanism where the allocation rule $f_r$ for a seller $i$ doesn't depend on its reported cost, thus our payment rule will ensure that the resulting mechanism is truthful.

\subsection{Example} \label{sec.example}
Suppose the buyer has budget $B=13/3$, $S=\{s_1,s_2\}$ and sellers $s_1,s_2$ each own an item with cost $2$ and $4$, respectively. Also, suppose both of the items have utility $1$. 

Let the mechanism use the family of curves $\mathcal{F}(f)$ for $f(x)=1-x$ where the domain of $f$ is $[0,1]$. The mechanism should find the largest $r$ for which $f_r$ is budget feasible. To this end, we set $r$ to be a very large number and decrease $r$ until $f_r$ is budget feasible. For instance, suppose we start from $r=10$ (see Figure \ref{fig.infeasible}). Observe that the payment of the mechanism in this case would be $4.8$ and $4.2$ to $s_1$ and $s_2$, respectively. Since the sum of payments exceed $13/3$, then $r=10$ is not budget feasible. Consequently, we reduce $r$ further until the mechanism becomes budget feasible at $r=4$: At this point, the payment of the mechanism to $s_1$ and $s_2$ would respectively be $8/3$ and $5/3$, which sum up to be exactly $B$. (see Figure \ref{fig.feasible})
\begin{figure}[h!]
  \centering
    \includegraphics[width=0.3\textwidth]{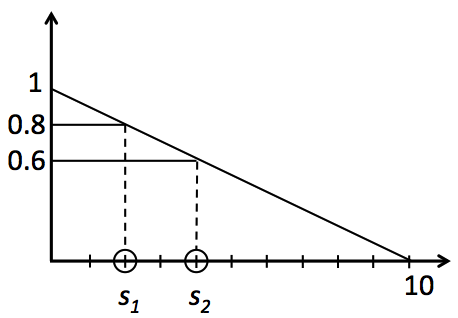}
      \caption{The mechanism is not budget feasible at $r=10$.}
      \label{fig.infeasible}
\end{figure}

\begin{figure}[h!]
  \centering
    \includegraphics[width=0.3\textwidth]{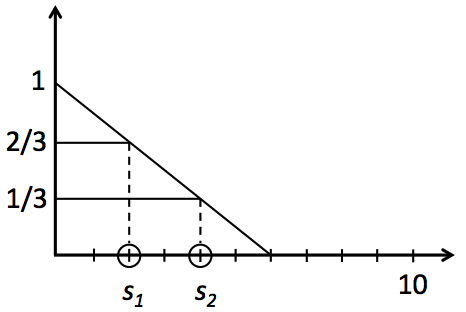}
      \caption{The mechanism is budget feasible at $r=6$.}
            \label{fig.feasible}
\end{figure}

\subsection{First Attempt: A Parameterized Class of Envy-Free Mechanisms} \label{sec.envyfree}
In this section we describe a mechanism (denoted by  Mechanism \scale{}) that is not always truthful, but it will form the basis of our truthful mechanism. Moreover, some structural results about this mechanism will be useful while analyzing our truthful mecahanism, thus we will be talking about this mechanism throughout the paper. This mechanism is parameterized by the choice of a standard allocation rule $f$. The mechanism described in this section {\em offers} a single allocation rule $f_r \in \F(f)$ to all the sellers, thus it is envy-free (although it may not be truthful).

\begin{definition}
We say that an allocation rule $f_r$ is a {\em budget-feasible} allocation rule if $\sum_{i\in S} P_{i,r} (c_i) = B$, i.e. the payments defined with respect to $f_r$ sum up to $B$.
\end{definition}

Now given any standard allocation rule $f$, the mechanism starts with a very large scaling ratio $r=\infty$ so that we are guaranteed to have $\sum_{i\in S} P_{i,r} (c_i) > B$. 

Then, the mechanism decreases $r$ until the rule $f_r$ becomes a {\em budget-feasible} rule (say at $r = \rstar$). The mechanism stops at this point and uses $f_{\rstar}$ and $\{P_{i,\rstar}\}_{i\in S}$ to determine the allocations and payments. The ratio $\rstar$ is also called the {\em stopping rate} of the mechanism. 
We define this process formally in Mechanism \scale{}.

\begin{algorithm}
\SetAlgoRefName{Envy-Free}
\LinesNotNumbered
\SetKwInOut{Input}{input}
\SetKwInOut{Output}{output}
\Input{Budget $B$, $(u_i, c_i)$ pair for each seller $i$}
\Output{A scaling ratio $\rstar$}
\BlankLine
$r\leftarrow \infty$\;
\While {$f_r$ is not a budget-feasible rule}{
	Decrease $r$ slightly\;
}
$\rstar\leftarrow r$\;
Output the scaling ratio $\rstar$\;
\caption{Parameterized by a standard allocation rule $f$}
\label{alg.uniformratio}
\end{algorithm}

One can easily see that the above mechanism is budget-feasible, individually rational, and envy-free; however, it may not be truthful. Also, the efficiency of the above mechanism depends on the choice of function $f$. Thus, an important question is: What is the optimal choice of function $f$? Let's first understand the performance of the above mechanism for a simple choice of function $f$.

\begin{definition}
A standard allocation rule $f:\mathbb{R}^+\rightarrow [0,1]$ is called a uniform standard allocation rule if $f(x) = 1$ for $x < e - 1$, and $f(x) = 0$ otherwise. Figure \ref{fig.uniform} depicts this curve.
\end{definition}

\begin{figure}
\center
 \begin{tikzpicture}[scale=1] 
    \coordinate (y) at (0,3);
     \node at (0,3.2) {\tiny \textsf{Allocation}}; 
    \coordinate (x) at (6,0);
     \node at (6.2,-0.2) {\tiny \textsf{cost per utility rate}}; 
    \draw[axis] (y) -- (0,0) --  (x);
    \draw [ultra thick] (0,2) -- (4,2)  -- (4,0);
    \node at (-0.2,2) {$1$};
    \node at (4,-0.2) {$e-1$};    

 \end{tikzpicture}
 \caption{The Uniform Standard Allocation Rule}
  \label{fig.uniform}
\end{figure}
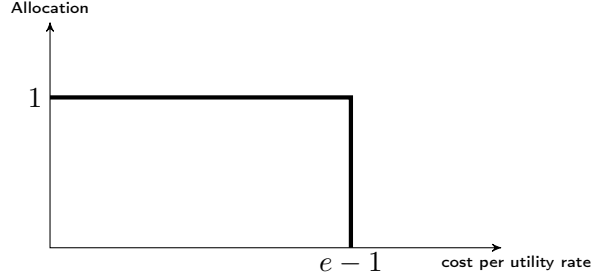

One can show that the above envy-free mechanism when run using a uniform standard allocation rule, mimics the simple factor $\frac{1}{2}$ mechanism presented earlier. Thus, it turns out to be truthful as well for this choice of standard allocation rule. However for more general allocation rules, the above envy-free mechanism might not be truthful. Thus before we answer the harder question about the optimal choice of function $f$, we next describe the truthful version of the above envy-free mechanism.

\section{A Parameterized Class of Truthful Mechanisms}\label{sec.truthfulmechanisms}

We use a simple trick to convert Mechanism \scale{} to a truthful mechanism. The idea is to define, for each seller $i$, an allocation rule which does not depend on $c_i$. In particular, we define the allocation rule for seller $i$ to be $f_{r_i}$, where $r_i$ will be chosen independently of $c_i$. 
For finding $r_i$, we run Mechanism \scale{} on the instance which is obtained by setting $c_i$ to be $0$ while keeping cost of the other sellers intact; $r_i$ would be the stopping rate of the mechanism \scale{}. The formal definition of the truthful mechanism appears in Mechanism \ref{alg.truthful}.

\begin{algorithm}

\SetAlgoRefName{Truthful$(f)$}
\LinesNotNumbered
\SetKwInOut{Input}{input}
\SetKwInOut{Output}{output}
\Input{Budget $B$, $(u_i, c_i$) pair for each seller $i$}
\BlankLine
\ForEach{$i\in S$}{
	$temp\leftarrow c_i$\;
	$c_i\leftarrow 0$\;
	$r_i\leftarrow $ Mechanism \ref{alg.uniformratio}$(f)$\;
	$c_i\leftarrow temp$\;
}
\ForEach{$i\in S$}{
Allocate $f_{r_i}(c_i)$ from seller $i$\;
Pay $P_{i,r_i}(c_i)$ to seller $i$\;
}
\caption{Parameterized by a standard allocation rule $f$ }
\label{alg.truthful}
\end{algorithm}

In Lemma \ref{lem.scalecanbegood}, we prove that Mechanism \ref{alg.truthful} is individually rational, truthful, and budget-feasible for any given standard allocation rule $f$. First, we state the following useful lemma.
\begin{lemma} \label{lem.rdec}
For any seller $i\in S$ we have $\rstar\geq r_i$.
\end{lemma}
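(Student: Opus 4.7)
The plan is to exploit the fact that passing from the original instance to the one where seller $i$'s cost is replaced by $0$ can only increase the payment owed to seller $i$ at every fixed scaling ratio $r$, which in turn forces the stopping rate to drop. I would start by recalling the closed form $P_{j,r}(c) = u_j \cdot Q_r(c/u_j)$ with $Q_r(x) = x f_r(x) + \int_x^\infty f_r(y)\,dy$, and observe that $Q_r'(x) = f_r(x) + x f_r'(x) - f_r(x) = x f_r'(x) \leq 0$, since $f$ (hence $f_r$) is weakly decreasing. Thus $Q_r$ is non-increasing in its argument, so in particular $Q_r(0) \geq Q_r(c_i/u_i)$ for every $r > 0$.

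Next, define $T(r) = \sum_{j \in S} u_j\, Q_r(c_j/u_j)$ and let $\tilde T(r)$ denote the analogous total when $c_i$ is replaced by $0$. The preceding inequality gives $\tilde T(r) - T(r) = u_i \big( Q_r(0) - Q_r(c_i/u_i) \big) \geq 0$, so $\tilde T(r) \geq T(r)$ pointwise in $r$. By the definition of Mechanism~\ref{alg.uniformratio}, $\rstar$ is the largest $r$ satisfying $T(\rstar) = B$, while $r_i$ is the largest $r$ satisfying $\tilde T(r_i) = B$.

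The remaining ingredient is monotonicity of the total-payment curves in $r$. After the substitution $u = y/r$ in the integral, one gets $Q_r(x) = x f(x/r) + r \int_{x/r}^\infty f(u)\,du$, and a direct differentiation shows $\partial Q_r/\partial r \geq 0$ using $f \geq 0$ and $f' \leq 0$ (each of the three resulting terms is non-negative). Hence both $T$ and $\tilde T$ are non-decreasing in $r$. Combining the pieces, $\tilde T(\rstar) \geq T(\rstar) = B = \tilde T(r_i)$, and monotonicity of $\tilde T$ forces $r_i \leq \rstar$, as required. The only mildly delicate point is verifying monotonicity of $Q_r$ in $r$; the rest is bookkeeping from the definitions.
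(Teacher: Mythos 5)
Your proposal is correct and follows essentially the same route as the paper: both arguments rest on the two monotonicity facts that the per-seller payment is decreasing in the reported cost (so zeroing $c_i$ can only raise the total payment at any fixed $r$) and increasing in $r$, combined with the fact that both stopping rates exhaust the budget exactly. The paper phrases this as a proof by contradiction while you argue directly (and you additionally verify the monotonicity claims that the paper merely asserts), but the substance is identical.
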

\begin{proof}
The proof is based on the fact that $P_{i,r}(x)$ is an increasing function of $r$ (for a fixed $x$) and 
is a decreasing function of $x$ (for a fixed $r$). 
The proof is by contradiction, suppose $\rstar < r_i$. 
Let $c'_j =c_j$ for all $j\in S\backslash\{i\}$ and let $c'_i=0$. Observe that 
\begin{align*}
B = \sum_{j\in S} P_{j,\rstar}(c_j) &\leq \sum_{j\in S} P_{j,\rstar}(c'_j)\\
&<  \sum_{j\in S} P_{j,r_i}(c'_j) 
\end{align*}
where the first inequality is due to the fact that $P_{j,\rstar}(x)$ is a decreasing function of $x$ and the second inequality is due to the fact that $\rstar < r_i$. However, note that the above inequalities imply that $B < \sum_{j\in S} P_{j,r_i}(c'_j)$, which contradicts with the budget feasibility of Mechanism \scale{}: see that $\sum_{j\in S} P_{j,r_i}(c'_j)$ represents the payment of \scale{} when the costs are $c'_1,\ldots,c'_n$, and so it can not be larger than $B$.
\end{proof}

\begin{lemma} \label{lem.scalecanbegood}
Mechanism \ref{alg.truthful} is individually rational, truthful, and budget-feasible.
\end{lemma}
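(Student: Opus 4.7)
The plan is to verify the three properties in turn, with budget-feasibility being the main step where we invoke Lemma \ref{lem.rdec}.

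For \textbf{truthfulness}, the key observation is that for each seller $i$, the value $r_i$ is computed from an instance in which $c_i$ has been overwritten by $0$, so $r_i$ depends only on $\{c_j\}_{j\neq i}$ and not on $c_i$. Hence, from seller $i$'s perspective, she is facing a fixed allocation curve $x \mapsto f_{r_i}(x/u_i)$ (monotone nonincreasing in her reported cost $x$, since $f$ is decreasing by the definition of a standard allocation rule), and her payment $P_{i,r_i}(c_i) = u_i\cdot Q_{r_i}(c_i/u_i)$ is exactly the Myerson payment derived from that curve via the formula
\begin{align*}
Q_{r_i}(x) = x\cdot f_{r_i}(x) + \int_x^\infty f_{r_i}(y)\,\mathrm{d}y.
\end{align*}
Since $f_{r_i}$ is weakly decreasing and the payment is the Myerson integral, Myerson's lemma \cite{myersonslemma} gives that truthful reporting maximizes seller $i$'s utility, regardless of the other sellers' reports.

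For \textbf{individual rationality}, I would simply note that for any $x\geq 0$,
\begin{align*}
Q_{r_i}(x) = x\cdot f_{r_i}(x) + \int_x^\infty f_{r_i}(y)\,\mathrm{d}y \;\geq\; x\cdot f_{r_i}(x),
\end{align*}
so multiplying by $u_i$ and setting $x=c_i/u_i$ gives $P_{i,r_i}(c_i) \geq c_i\cdot f_{r_i}(c_i/u_i)$, which is exactly the allocated fraction times $c_i$.

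For \textbf{budget-feasibility}, this is where the heavy lifting lives, and Lemma \ref{lem.rdec} does most of it. I would use two monotonicity facts about $P_{i,r}(x)$ that are already exploited in the proof of that lemma: $P_{i,r}(x)$ is nondecreasing in $r$ (stretching $f$ horizontally can only increase the Myerson integral) and nonincreasing in $x$. By Lemma \ref{lem.rdec}, $r_i\leq \rstar$ for every $i$, so
\begin{align*}
\sum_{i\in S} P_{i,r_i}(c_i) \;\leq\; \sum_{i\in S} P_{i,\rstar}(c_i) \;=\; B,
\end{align*}
where the equality is the stopping condition defining $\rstar$ in Mechanism \ref{alg.uniformratio}. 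The main obstacle is really just ensuring these monotonicity claims are clean; once we invoke Lemma \ref{lem.rdec} as a black box, budget-feasibility is one inequality, and truthfulness is an immediate consequence of the construction being a Myerson mechanism on an allocation curve that is independent of the reporting agent's own bid.
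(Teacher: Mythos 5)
Your proof is correct and follows essentially the same route as the paper's: truthfulness and individual rationality from the fact that $f_{r_i}$ and the Myerson payment are independent of $c_i$, and budget-feasibility by combining Lemma \ref{lem.rdec} with the monotonicity of $P_{i,r}(x)$ in $r$ and the stopping condition $\sum_i P_{i,\rstar}(c_i)=B$. You spell out the individual-rationality inequality a bit more explicitly than the paper does, but the argument is the same.
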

\begin{proof}
Note that the allocation and payment rules for seller $i$, i.e. $f_{r_i}, P_{i,r_i}$, do not depend on the cost reported by her. 
This fact, along with the fact that $f_{r_i}$ is a monotone rule (decreasing function) implies individual rationality and truthfulness. The proof is almost identical to the proof of Myerson's Lemma and we do not repeat it here. 

The proof for budget feasibility needs a bit more work. 
Let $p_i, p'_i$ denote the payments to seller $i$ respectively in Mechanism \ref{alg.truthful} and Mechanism \scale, i.e. $p_i = P_{i,r_i}(c_i)$ and $p'_i=P_{i,\rstar}(c_i)$.
The lemma is proved if we show that $p_i\leq p'_i$, since we have $\sum_{i\in S} p'_i = B$. 

To see $p_i\leq p'_i$, note that $P_{i,r}(x)$ is an increasing function of $r$ (for a fixed $x$). So, since we have $\rstar\geq r_i$ due to Lemma \ref{lem.rdec}, it must be the case that $P_{i,r_i}(c_i) \leq P_{i,\rstar}(c_i)$. 
\end{proof}

\section{A ($1-1/e$)-Approximate Optimal Truthful Mechanism} \label{sec.optimumf}
So far, we have introduced a parameterized class of individually rational, truthful, and budget-feasible mechanisms for the problem: Passing any standard allocation rule $f$ to Mechanism \ref{alg.truthful} fixes the mechanism which we denote by $\mathcal{M}_f$. 
Our goal in this section is to find the {\em most efficient} mechanism in this class. Formally, given a standard allocation rule $f$, we denote the approximation ratio of $\mathcal{M}_f$ by $\mathcal{R}_f$ and define it as:
\begin{align*}
\mathcal{R}_f=\inf_{{I}} \frac{\mathcal{U}_f ({I})}{\mathcal{U}^{\star}(I)},
\end{align*}
where the infimum is taken over all instances $I$ of the problem~\footnote{If we are focused on large markets, we take only instances $I$ for which the largeness ratio is smaller than some threshold, and take the limiting approximation factor as the threshold goes to $0$.} . Here $\mathcal{U}_f ({I})$ denotes the utility obtained by $\mathcal{M}_f$ in instance $I$, and $\mathcal{U}^{\star}(I)$ denotes the optimum utility in instance~$I$.

The most efficient allocation rule $f$, is the one which maximizes $\mathcal{R}_f$. Our goal, in this section and Section \ref{sec.hardness}, is to find the most efficient allocation rule and its corresponding approximation ratio. Formally, we prove the following theorem.

\begin{theorem} \label{thm.main}
The most efficient standard allocation rule for Mechanism \ref{alg.truthful} is $f(x)=\ln(e-x)$, for which we get $\mathcal{R}_f=1-1/e$, i.e. it has an approximation ratio $1-1/e$.
\end{theorem}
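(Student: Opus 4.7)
\textbf{Plan for proving Theorem \ref{thm.main}.} The plan is to verify that the standard rule $f(x) = \ln(e-x)$ attains approximation ratio $1 - 1/e$ through Mechanism \ref{alg.truthful}; optimality of this $f$ then follows from the matching hardness result in Section \ref{sec.hardness}. First I would reduce to the envy-free Mechanism \scale{}. Lemma \ref{lem.rdec} already gives $r_i \leq \rstar$; to upgrade this to $r_i = (1-o(1))\,\rstar$ in the large-market limit, I would bound the increase in the envy-free total payment caused by zeroing out $c_i$ by $P_{i,\rstar}(0) - P_{i,\rstar}(c_i) \leq u_i \rstar = O(\theta)\,B$, and combine this with the fact that the envy-free total payment is strictly increasing in $r$ with derivative (using $\phi'(t) = tf'(t)$) of order $B/\rstar$ in the relevant range. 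Hence the utility extracted by Mechanism \ref{alg.truthful} is within a $(1-o(1))$ factor of that of \scale{}, and it suffices to bound the ratio for the envy-free mechanism.

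Next I would pass to a continuous reformulation. Let $h(t)$ denote the density of available utility at rescaled cost-per-utility rate $t = x/\rstar$, and set $\phi(t) = tf(t) + \int_t^{e-1} f(s)\,ds$ (the unit-payment of the standard rule at scale one). Then the mechanism's utility is $U = \int_0^{e-1} f(t)\,h(t)\,dt$, the budget-feasibility condition is $\rstar\int_0^{e-1} \phi(t)\,h(t)\,dt = B$, and the offline optimum is $\uopt = \int_0^{t^*} h(t)\,dt$ with threshold $t^*$ fixed by $\rstar\int_0^{t^*} t\,h(t)\,dt = B$. Subtracting the two budget equations yields the key identity
\[
\int_0^{t^*}\bigl(t-\phi(t)\bigr)\,h(t)\,dt \;=\; \int_{t^*}^{e-1} \phi(t)\,h(t)\,dt.
\]
Specializing to $f(t) = \ln(e-t)$ gives $\phi(t) = 1 - (e-t) + e\ln(e-t)$ by direct integration, and the equation $\phi(t) = t$ has the unique solution $t_0 = e - e^{(e-1)/e}$ at which $f(t_0) = (e-1)/e = 1 - 1/e$.

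Since the ratio $U/\uopt$ is scale-invariant in $h$ and subject to one binding equality together with nonnegativity, the LP worst case over $h$ is supported on at most two rates. For a single-point mass at rate $t$: if $t \leq t_0$ then $\phi(t) \geq t$, so the optimum buys all the available utility and the ratio equals $f(t) \geq f(t_0) = 1 - 1/e$ by monotonicity of $f$; if $t > t_0$ then the optimum is budget-constrained and the ratio equals $tf(t)/\phi(t)$, which I would verify equals $1 - 1/e$ at $t = t_0$, is non-decreasing on $(t_0, e-1)$, and tends to $1$ as $t \to e-1$ via L'H\^opital. For two-point supports I would invoke LP duality: exhibit a Lagrange multiplier $\lambda$ for the identity above such that the pointwise inequality
\[
f(t) - (1-1/e) \;\geq\; \lambda\!\left[(t-\phi(t))\mathbf{1}_{t\leq t^*} - \phi(t)\mathbf{1}_{t>t^*}\right]
\]
holds on all of $[0, e-1]$, and then integrate against $h$ to conclude $U \geq (1-1/e)\,\uopt$. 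The main technical obstacle is producing such a $\lambda$ that works globally; the specific form $f(x) = \ln(e-x)$ is engineered so that the pointwise inequality becomes tight at $t_0$ and feasible everywhere else, which is what makes this $f$ extremal and yields the promised $1-1/e$ bound.
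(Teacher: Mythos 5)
Your architecture matches the paper's: reduce Mechanism \ref{alg.truthful} to the envy-free Mechanism \scale{} with a $(1-o(1))$ loss, prove the $1-1/e$ bound for \scale{} via a pointwise dual inequality integrated against the budget constraints, and get optimality of $f$ from Section \ref{sec.hardness}. Your setup is also correct in its details ($\phi(t)=1-(e-t)+e\ln(e-t)$, the critical point $t_0=e-e^{(e-1)/e}$ with $f(t_0)=1-1/e$, and the single-atom case analysis). However, there are two genuine gaps.

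First, the step you label ``the main technical obstacle'' --- exhibiting a multiplier $\lambda$ making the pointwise inequality hold globally --- is the heart of the proof, and you have not produced it. The paper does: Lemma \ref{lem.key} states $G(x)-\alpha g(x)\le e\,(x-\alpha(1-1/e))$ for all $x,\alpha\in[0,1]$, which in your notation (with $x=f(t)$, $g(x)=t$, $G(x)=\phi(t)$, and $\alpha$ the fraction OPT buys) is exactly the certificate with multiplier $1/e$, verified by the two-line computation $\alpha(e^x-1)\le e^x-1$. Note also that the paper's form is parameterized by the continuous OPT-allocation $\alpha\in[0,1]$ rather than by a hard indicator $\mathbf{1}_{t\le t^*}$; your indicator version has a bookkeeping problem at and above $t^*$ (the $(1-1/e)$ term should only multiply the OPT-allocated mass), and the paper replaces your exact budget identity by the one-sided inequalities $\sum_i u_iG(x_i)=B$ and $\sum_i\alpha_iu_ig(x_i)\le B$, which is all that is needed.

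Second, your reduction from the truthful mechanism to \scale{} is not valid as stated. From $r_i=(1-o(1))\rstar$ you cannot conclude that the total utility drops by only a $(1-o(1))$ factor: for a seller with $c_i/u_i$ close to the cutoff $(e-1)\rstar$, one can have $f_{r_i}(c_i/u_i)=0$ while $f_{\rstar}(c_i/u_i)>0$, so the pointwise ratio $f_{r_i}/f_{\rstar}$ is not uniformly $1-o(1)$. The paper's Section \ref{sec.generalu} handles exactly this: it introduces a threshold $s<e-1$, uses concavity of $\ln f$ to control the ratio $f_{r_i}(c_i/u_i)/f(c_i/u_i)$ only for sellers with $c_i/u_i\le s$, and for the remaining sellers compares everything against $f_{r^+}$ with $r^+=\min_{i\in\mathcal{J}}r_i$, showing the budget consumed at rate $r^+$ is still $(1-\theta/s)B$ and invoking concavity of $u^\star(\cdot)$ (Lemma \ref{lem.concave}). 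Your plan needs this (or an equivalent) case split to go through.
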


We prove this theorem in two parts: In the first part we show that $\mathcal{R}_f\geq 1-1/e$ for $f(x)=\ln(e-x)$; this is proved in the current section. In the second part, we show that $\mathcal{R}_g\leq 1-1/e$ for any standard allocation rule $g$. This fact can be seen as a consequence of our hardness result in Section \ref{sec.hardness}, 
which states that no truthful mechanism can achieve approximation ratio better than $1-1/e$. We also provide a more direct (alternative) proof in Section \ref{sec.lb} that shows our choice of $f(x)=\ln(e-x)$ is optimal among all possible choices of the standard allocation rules.


\subsection{Finding an optimal $f$ for the (non-truthful) Mechanism \scale}
In this section, we prove that Mechanism \scale{} has approximation ratio $1-1/e$ for $f(x)=\ln(e-x)$. 
Note that the Mechanism \scale{} is not truthful, however its analysis will be helpful when analyzing our truthful mechanism in Section \ref{SEC.TRUTHU1} and in Section \ref{sec.generalu}.
Here, we analyze Mechanism \scale{} assuming that the true costs are known; later, in Section \ref{SEC.TRUTHU1}, we use this result to prove that Mechanism \ref{alg.truthful} has approximation ratio $1-1/e$ for the same choice of $f$. 

\subsubsection{Preliminaries}

We use $g_r$ to denote the inverse of an allocation rule $f_r$, i.e. $g_r(x)=f_r^{-1}(x)$.
Given an allocation rule $f_r$, we also write an alternative definition of its corresponding \np{} rule $Q_r$. This definition, rather than being in terms of $\frac{c_i}{u_i}$, would be in terms of $f_r(\frac{c_i}{u_i})$. This alternative definition is denoted by $G_r$, and is defined such that $Q_r(\frac{c_i}{u_i}) = G_r(f(\frac{c_i}{u_i}))$. For instance, if a seller owns an item with utility $1$, then we pay her $G_r(x)$ when a fraction $x$ of her item is allocated.
To be more precise, for $y=f_r(\frac{c_i}{u_i})$ we define
\begin{align*}
G_r(y)=\int_0^y g_r(x)\operatorname{d}x=Q_r(\frac{c_i}{u_i}).
\end{align*}
We also denote $g_1$ and $G_1$ respectively by $g$ and $G$.

\begin{proposition}
Given the standard allocation rule $f(x)=\ln(e-x)$, it is straight-forward to verify that $g(x)=e-e^x$ and $G(x)=ex-e^x+1$. Also, $f_r(x)=\ln\left(\frac{er-x}{r}\right)$.
\end{proposition}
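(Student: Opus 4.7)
The plan is to verify each of the three claimed identities by direct computation from the definitions given earlier in the paper. All three are routine, and I expect no substantive obstacle; the only care needed is to keep the definitions of $g_r$, $G_r$, and $f_r$ straight.

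First I would invert $f(x)=\ln(e-x)$ to obtain $g$. Setting $y=\ln(e-x)$ and exponentiating gives $e^{y}=e-x$, hence $x=e-e^{y}$, so $g(y)=f^{-1}(y)=e-e^{y}$, as claimed. Next, using the definition $G(y)=\int_{0}^{y} g(t)\,\mathrm{d}t$, I would compute
\begin{align*}
G(y)=\int_{0}^{y}\bigl(e-e^{t}\bigr)\,\mathrm{d}t = e\,y-\bigl(e^{y}-e^{0}\bigr)=e\,y-e^{y}+1,
\end{align*}
which matches the stated formula. Finally, recall from Section~\ref{sec.allocationrule} that the family $\mathcal{F}(f)$ is obtained by horizontal stretching, i.e.\ $f_r(x)=f(x/r)$. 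Substituting yields
\begin{align*}
f_r(x)=\ln\!\left(e-\tfrac{x}{r}\right)=\ln\!\left(\tfrac{er-x}{r}\right),
\end{align*}
completing the verification.

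The only potential subtlety is checking that $f$ qualifies as a standard allocation rule on the relevant domain, so that the inversion and the integral are well-defined. Since $f$ is defined on $[0,e-1]$, it is decreasing there, with $f(0)=\ln e=1$ and $f(e-1)=\ln 1=0$, so $f$ is indeed a standard allocation rule and $g$ is well-defined on $[0,1]$. No further step is needed.
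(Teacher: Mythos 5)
Your verification is correct and is exactly the routine computation the paper has in mind (the paper itself offers no proof, merely asserting the identities are straightforward to check): inverting $f$ to get $g$, integrating $g$ from $0$ to $y$ to get $G$ per the definition $G(y)=\int_0^y g(x)\,\mathrm{d}x$, and substituting $x/r$ into $f$ for the stretched rule. The added check that $f(0)=1$ and $f(e-1)=0$ so that $f$ is a standard allocation rule is a sensible, if optional, sanity check.
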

From now on in this section, we assume that $f(x)=\ln(e-x)$.
Next, we prove a useful inequality in the following lemma which will be used in the analysis of \scale. 

\begin{lemma} \label{lem.key}
For any $x,\alpha$ such that $0\leq x,\alpha \leq 1$ we have
\begin{align*}
G(x)-\alpha\cdot g(x) \leq e\cdot (x-\alpha\cdot (1-{1}/{e})).
\end{align*}
\end{lemma}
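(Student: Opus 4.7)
My plan is to prove this lemma by direct computation, substituting the closed forms $g(x)=e-e^x$ and $G(x)=ex-e^x+1$ given by the preceding Proposition and then simplifying. The inequality is stated in a suggestive linear form (the right-hand side is linear in both $x$ and $\alpha$), which hints that after substitution a lot of cancellation should occur.

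More concretely, I would first expand the left-hand side as
\[
G(x)-\alpha g(x) \;=\; ex - e^x + 1 - \alpha(e-e^x) \;=\; ex - \alpha e + 1 - e^x + \alpha e^x,
\]
and the right-hand side as
\[
e\bigl(x-\alpha(1-1/e)\bigr) \;=\; ex-\alpha e+\alpha.
\]
Subtracting, the terms $ex$ and $-\alpha e$ cancel, so the desired inequality reduces to
\[
1 - e^x + \alpha e^x - \alpha \;\le\; 0,
\]
which factors as $(1-e^x)(1-\alpha)\le 0$, equivalently $(e^x-1)(1-\alpha)\ge 0$.

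To close the argument, I would observe that for $x\ge 0$ we have $e^x\ge 1$, so $e^x-1\ge 0$; and for $\alpha\le 1$ we have $1-\alpha\ge 0$. Hence the product is nonnegative, giving the required inequality. There is no real obstacle here: the lemma looks like a carefully chosen linear upper bound on the concave-looking expression $G(x)-\alpha g(x)$, and the specific form $f(x)=\ln(e-x)$ is exactly what makes the bound tight at the endpoints $x\in\{0,1\}$ (where $(e^x-1)(1-\alpha)=0$ when either $x=0$ or $\alpha=1$), which is presumably where the bound will be applied in the analysis of Mechanism \scale{}. The only thing worth flagging is that the hypothesis $x\le 1$ is not actually needed for this algebraic step; only $x\ge 0$ and $\alpha\le 1$ are used. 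The hypothesis $\alpha\ge 0$ is also unused in the algebraic reduction (though presumably natural in the intended application).
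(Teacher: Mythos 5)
Your proof is correct and is essentially the paper's own argument run in reverse: the paper starts from $\alpha(e^x-1)\le e^x-1$ (i.e.\ $(e^x-1)(1-\alpha)\ge 0$) and adds $e(x-\alpha)$ to both sides, while you substitute the closed forms and reduce the claim to that same inequality. Your side remarks about which hypotheses are actually used are accurate and harmless.
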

\begin{proof}
\begin{align*}
& \ \alpha (e^x-1) \leq e^x -1 \\
\Rightarrow &\ 
\alpha e^x -e^x +1 \leq \alpha\\
\Rightarrow &\ 
\alpha e^x -e^x +1 + e(x-\alpha)  \leq \alpha+ e(x-\alpha)\\
\textrm{\small (by the definition of $g, G$)\ }\Rightarrow &\ 
G(x)-\alpha\cdot g(x) \leq e\cdot (x-\alpha\cdot (1-{1}/{e})).
\end{align*}
\end{proof}

\subsubsection{Approximation Ratio of Mechanism \scale}
In the following lemma, we prove the efficiency of Mechanism \scale{} when all sellers report true costs.
\begin{lemma} \label{lem.key2approx}
If sellers report true costs, then Mechanism \scale{} has approximation ratio $1-{1}/{e}$.
\end{lemma}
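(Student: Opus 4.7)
The plan is to characterize the mechanism's output explicitly in terms of the stopping rate $\rstar$, then compare it pointwise with the offline optimum via the key inequality in Lemma~\ref{lem.key}. For each seller $i$, let $x_i = f_{\rstar}(c_i/u_i) \in [0,1]$ denote the fraction that Mechanism~\scale{} buys. Using the scaling identity $f_r(x) = f(x/r)$, I would first observe that $g_r = r\cdot g$ and hence $G_r = r\cdot G$. This lets me write the algorithm's utility as $\ualg = \sum_i u_i x_i$ and the payment to seller $i$ as $u_i \rstar G(x_i)$, so the budget-tightness condition at $\rstar$ becomes
\begin{equation*}
\sum_{i\in S} u_i G(x_i) \;=\; B/\rstar.
\end{equation*}

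Next I would establish the cost relation $c_i \geq u_i \rstar \,g(x_i)$ for every seller. When $x_i > 0$ this is an equality, since $x_i = f(c_i/(u_i\rstar))$ gives $g(x_i) = c_i/(u_i\rstar)$; when $x_i = 0$, the argument $c_i/(u_i\rstar)$ lies above $e-1 = g(0)$, so the inequality still holds. Let $\alpha^\star_i \in [0,1]$ denote the optimal fractional allocation with $\sum_i \alpha^\star_i c_i \leq B$ and $\uopt = \sum_i \alpha^\star_i u_i$. The cost relation immediately implies
\begin{equation*}
\sum_{i\in S} \alpha^\star_i \,u_i\, g(x_i) \;\leq\; \frac{1}{\rstar}\sum_{i\in S} \alpha^\star_i c_i \;\leq\; B/\rstar.
\end{equation*}

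The final step is to apply Lemma~\ref{lem.key} with $x = x_i$ and $\alpha = \alpha^\star_i$ (both in $[0,1]$) seller by seller, multiply by $u_i$, and sum. This yields
\begin{equation*}
\sum_i u_i G(x_i) \;-\; \sum_i \alpha^\star_i u_i\, g(x_i)
\;\leq\; e\cdot \ualg \;-\; e(1-1/e)\cdot \uopt.
\end{equation*}
Plugging in the budget-tightness equality for the first sum and the inequality for the second gives $0 \leq e\,\ualg - e(1-1/e)\,\uopt$, i.e.\ $\ualg \geq (1-1/e)\,\uopt$, which is the claim.

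I expect the only subtle point to be the handling of sellers with $x_i = 0$: one has to verify that the cost bound $c_i \geq u_i \rstar g(x_i)$ still holds at the boundary $x_i = 0$, and that Lemma~\ref{lem.key} still applies there. Everything else is bookkeeping: the scaling identities $g_r = r g$, $G_r = r G$ reduce the whole analysis to the base allocation rule, and Lemma~\ref{lem.key} is precisely tailored so that summation against the optimum's fractional weights produces the $(1-1/e)$ factor directly, without any additional charging argument.
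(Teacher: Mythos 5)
Your proof is correct and follows essentially the same route as the paper's: apply Lemma~\ref{lem.key} seller by seller with $\alpha=\alpha^\star_i$, weight by $u_i$, sum, and close the argument using budget-tightness of the mechanism together with the bound $\sum_i \alpha^\star_i u_i\, g(x_i)\leq B/\rstar$ coming from the optimum's budget constraint. The only differences are presentational — the paper normalizes $\rstar=1$ by rescaling, whereas you carry $\rstar$ through via the identities $g_r=r\,g$, $G_r=r\,G$, and you spell out the boundary case $x_i=0$ of the cost relation, which the paper leaves implicit.
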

\begin{proof}
Observe that w.l.o.g. we can assume $\rstar=1$: If $\rstar\neq 1$, then we can construct a new instance which is {\em similar} to the original instance and has stopping rate $1$. More precisely, there exists some $\beta>0$ such that if we multiply the budget and the reported costs by $\beta$, the stopping rate becomes equal to $1$. Note that this operation will not change the optimal solution or the solution of \scale{} and can be performed w.l.o.g.

Now, suppose that a fraction $x_i$ of item $i$ is allocated by \scale. 
Since $\rstar =1$, we can use Lemma \ref{lem.key} to write the following set of inequalities:
\begin{align*}
G(x_i)-\alpha_i \cdot g(x_i) \leq e\cdot (x_i-\alpha_i\cdot (1-{1}/{e})) &\hspace{1cm} \forall i\in S,
\end{align*}
where $\alpha_i$ is the fraction that is allocated from seller $i$ in the optimal solution (recall that we are are comparing \scale{} with the optimum fractional solution).
The above inequalities can be multiplied by $u_i$ on both side and be written as:
\begin{align*}
u_i\cdot \left(G(x_i)-\alpha_i \cdot g(x_i)\right) \leq u_i\cdot e\cdot (x_i-\alpha_i\cdot (1-{1}/{e})) &\hspace{1cm} \forall i\in S.
\end{align*}
By adding up these inequalities, we get:
\begin{align}
\sum_{i\in S}u_i\cdot \left(G(x_i)-\alpha_i \cdot g(x_i)\right) \leq  e\cdot \sum_{i\in S} u_i\cdot \left(x_i-\alpha_i\cdot (1-{1}/{e})\right). \label{eq.addall}
\end{align}

Now, we show that if
\begin{align}
0\leq \sum_{i\in S}u_i\cdot \left(G(x_i)-\alpha_i \cdot g(x_i)\right),\label{eq.addallrhs}
\end{align}
then the lemma is proved using \eqref{eq.addall} and \eqref{eq.addallrhs}. First we show why \eqref{eq.addall} and \eqref{eq.addallrhs} prove the lemma, and then in the end, we prove \eqref{eq.addallrhs} itself.

Observe that \eqref{eq.addall}
and \eqref{eq.addallrhs} imply that 
\begin{align}
0 \leq  \sum_{i\in S} u_i\cdot\left(x_i-\alpha_i\cdot (1-{1}/{e})\right). \label{eq.2stepsb4}
\end{align}
Now, let  $U$ denote the utility gained by \scale{} and $\uopt = \sum_{i\in S}u_i \alpha_i$ denote the utility of the optimum (fractional) solution; see that \eqref{eq.2stepsb4} implies
\begin{align*}
(1-1/e)\cdot \uopt = \sum_{i\in S}\alpha_i u_i \cdot (1-{1}/{e}) \leq  \sum_{i\in S}x_i u_i = U,
\end{align*}
This would prove the lemma. 

So, it only remains to show that \eqref{eq.addallrhs} holds: First observe that $\sum_{i\in S}u_i\cdot G(x_i)=B$, since the sum represents the payment of \scale. Also, see that $\sum_{i\in S}\alpha_i u_i \cdot g(x_i) \leq B$, since this sum is a lower bound on the cost of the optimal solution, which is at most $B$. 
\end{proof}

\subsection{Special case: Analyzing our truthful mechanism for unit utilities}\label{SEC.TRUTHU1}
In this section, we prove that Mechanism \ref{alg.truthful} has approximation ratio $1-1/e$ in large markets when it uses the standard allocation rule $f(x)=ln(e-x)$ for the special case when all the utilities are equal to $1$. In other words, we will show that approximation ratio approaches $1-{1}/{e}$ as $\theta$, the market's largeness ratio, approaches $0$ for the case of unit utilities. The proof for the case of general utilities is intricate and appears in Section \ref{sec.generalu}.

Note that the assumption of unit utilities imply $c_i/u_i=c_i$ for any seller $i$. Next, we state two lemmas before proving the approximation ratio. For simplicity in the analysis, w.l.o.g., assume that $c_1 \leq c_2\leq \ldots \leq c_n$.

\begin{lemma}  \label{lem.rissorted}
$r_1 \geq r_2 \geq \ldots \geq r_n$. 
\end{lemma}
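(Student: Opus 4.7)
The plan is to exploit the two monotonicity properties of the per-seller payment map $P_{i,r}(x)=u_i\, Q_r(x/u_i)$ that have already appeared in the proof of Lemma~\ref{lem.rdec}: $P_{i,r}(x)$ is increasing in $r$ for each fixed $x$, and decreasing in $x$ for each fixed $r$. Since we are in the unit-utility case, $u_i=1$ for every $i$, so $P_{i,r}(c_i)=Q_r(c_i)$, and both monotonicities pass directly to $Q_r$.

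Fix a pair of indices $i<j$, so that $c_i\leq c_j$. Let $\mathbf{c}^{(i)}$ and $\mathbf{c}^{(j)}$ denote the cost vectors used to compute $r_i$ and $r_j$ respectively in Mechanism~\ref{alg.truthful}: $\mathbf{c}^{(i)}$ agrees with $\mathbf{c}$ except that its $i$-th coordinate is replaced by $0$, and similarly for $\mathbf{c}^{(j)}$. I would first compare the total payments these two instances would produce under \emph{any} common candidate rate $r$:
\[
\sum_{k\in S} Q_r\bigl(c^{(i)}_k\bigr) \;-\; \sum_{k\in S} Q_r\bigl(c^{(j)}_k\bigr)
\;=\; \bigl(Q_r(0)+Q_r(c_j)\bigr)-\bigl(Q_r(c_i)+Q_r(0)\bigr)
\;=\; Q_r(c_j)-Q_r(c_i)\;\leq\;0,
\]
where the inequality uses that $Q_r$ is decreasing in its argument together with $c_i\leq c_j$. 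Thus, at every rate $r$, the $\mathbf{c}^{(i)}$-total payment is at most the $\mathbf{c}^{(j)}$-total payment.

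Now apply monotonicity in $r$. The stopping rate $r_k$ is defined as the largest $r$ for which the total payment on $\mathbf{c}^{(k)}$ does not exceed $B$ (and by continuity equals $B$ at $r=r_k$). Setting $r=r_j$, the payment on $\mathbf{c}^{(j)}$ equals $B$, so by the pointwise inequality above the payment on $\mathbf{c}^{(i)}$ at the same rate $r_j$ is at most $B$. Hence $r_j$ is a budget-feasible rate for instance $\mathbf{c}^{(i)}$, and since $r_i$ is defined as the largest such rate, $r_i\geq r_j$. Applying this to every consecutive pair yields $r_1\geq r_2\geq\cdots\geq r_n$.

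The only real thing to verify carefully is the decreasing-in-$x$ monotonicity of $Q_r$, but this is immediate from its definition: differentiating $Q_r(x)=x f_r(x)+\int_x^\infty f_r(y)\,\mathrm{d}y$ gives $Q_r'(x)=xf_r'(x)\leq 0$ since $f_r$ is a decreasing standard allocation rule. With that in hand the rest is just bookkeeping, so I do not expect any serious obstacle; the argument is essentially a pointwise comparison of two total-payment curves followed by invoking their monotonicity in $r$.
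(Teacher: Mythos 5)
Your proof is correct and rests on exactly the same ingredients as the paper's: comparing the total payments of the two zeroed-out instances via the monotonicity of $Q_r$ in its argument (so the difference is $Q_r(c_j)-Q_r(c_i)\leq 0$) and then invoking monotonicity of the total payment in $r$. The paper packages this as a proof by contradiction evaluated at the two different stopping rates, whereas you compare the two instances directly at the common rate $r_j$; this is a cosmetic reorganization of the same argument.
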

 
\begin{lemma} \label{lem.concave}
Let $u^\star(b):\bbR_+\to\bbR_+$ denote the maximum utility that the buyer can achieve with budget $b$ (when the items are divisible). Then, $u^\star(b)$ is a concave function.
\end{lemma}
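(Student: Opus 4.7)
The plan is to give a direct convex-combination argument, exploiting the fact that with divisible items the fractional knapsack feasible region is a convex polytope parameterized linearly by the budget. Concretely, for a budget $b$ the optimization defining $u^\star(b)$ is
\begin{align*}
u^\star(b) = \max\Bigl\{\sum_{i\in S} x_i u_i : \sum_{i\in S} x_i c_i \leq b,\ x_i \in [0,1]\Bigr\},
\end{align*}
which is a linear program in $x$ with $b$ appearing only in the right-hand side. Concavity of the optimal value of such an LP in the right-hand side is a well-known fact, but since it is a one-liner I would just prove it from scratch.

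First I would fix any two budgets $b_1, b_2 \geq 0$ and $\lambda \in [0,1]$, and let $x^{(1)}, x^{(2)}$ denote any optimal allocations attaining $u^\star(b_1)$ and $u^\star(b_2)$ respectively. I would then form the convex combination $x := \lambda x^{(1)} + (1-\lambda) x^{(2)}$ and verify feasibility for budget $\lambda b_1 + (1-\lambda) b_2$: coordinate-wise, $x_i \in [0,1]$ because it is a convex combination of numbers in $[0,1]$, and by linearity
\begin{align*}
\sum_{i\in S} x_i c_i = \lambda \sum_{i\in S} x_i^{(1)} c_i + (1-\lambda) \sum_{i\in S} x_i^{(2)} c_i \leq \lambda b_1 + (1-\lambda) b_2.
\end{align*}

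Finally, evaluating the objective at this feasible $x$ gives
\begin{align*}
u^\star(\lambda b_1 + (1-\lambda) b_2) \geq \sum_{i\in S} x_i u_i = \lambda \sum_{i\in S} x_i^{(1)} u_i + (1-\lambda) \sum_{i\in S} x_i^{(2)} u_i = \lambda u^\star(b_1) + (1-\lambda) u^\star(b_2),
\end{align*}
which is precisely concavity. There is essentially no obstacle: the only things being used are linearity of the constraint/objective in $x$ and convexity of the per-coordinate interval $[0,1]$, so divisibility of the items is the sole ingredient driving the result (the argument would fail for indivisible items, where $x_i \in \{0,1\}$ is not convex).
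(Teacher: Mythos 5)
Your proof is correct and is essentially identical to the paper's own argument: both take optimal allocations for the two budgets, form their convex combination, verify feasibility for the interpolated budget using linearity of the cost and convexity of $[0,1]$, and read off the objective value. Nothing to add.
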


Proofs for both of these lemmas are straight-forward and are deferred to the appendix, Section \ref{sec.defferedproofs}.

\begin{lemma}
Mechanism \ref{alg.truthful} has approximation ratio $1-{1}/{e}$ when all the items have utility equal to $1$.
\end{lemma}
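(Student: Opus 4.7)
The plan is to reduce the analysis to the envy-free Mechanism \scale{} (already analyzed in Lemma \ref{lem.key2approx}) on a slightly reduced budget, losing only a $1 - \theta/(e-1)$ multiplicative factor. Concretely, I will establish
\[
U_{\mathsf{TR}} \;\geq\; (1-1/e)\bigl(1 - \theta/(e-1)\bigr) U^\star,
\]
which tends to $(1-1/e) U^\star$ as the largeness ratio $\theta \to 0$.

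As in Lemma \ref{lem.key2approx}, I assume without loss of generality that $\rstar = 1$ by the same scaling trick. After sorting so that $c_1 \leq \cdots \leq c_n$, Lemma \ref{lem.rissorted} gives $r_n = \min_j r_j$, and since $f_r(c) = f(c/r)$ is non-decreasing in $r$ for fixed $c$ (because $f$ is non-increasing), the truthful utility satisfies $U_{\mathsf{TR}} = \sum_j f_{r_j}(c_j) \geq \sum_j f_{r_n}(c_j)$. Next, let $B' = \sum_j P_{j,r_n}(c_j) = \sum_j r_n G(f(c_j/r_n))$ denote the payment Mechanism \scale{} would make at rate $r_n$ on the original instance; by construction $r_n$ is the stopping rate for budget $B'$, so Lemma \ref{lem.key2approx} yields $\sum_j f_{r_n}(c_j) \geq (1-1/e)\, u^\star(B')$. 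Combining with $u^\star(B') \geq (B'/B)\, U^\star$, which follows from concavity of $u^\star$ and $u^\star(0) = 0$ (Lemma \ref{lem.concave}), gives $U_{\mathsf{TR}} \geq (1-1/e)(B'/B)\, U^\star$.

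Now I need to bound $(B - B')/B$. Subtracting the budget equation of Mechanism \scale{} on the modified instance with $c_n$ set to $0$ (which reads $r_n + \sum_{j \neq n} r_n G(f(c_j/r_n)) = B$, using $P_{n,r_n}(0) = u_n \cdot r_n = r_n$) from the definition of $B'$ gives the clean identity
\[
B - B' \;=\; r_n\bigl(1 - G(f(c_n/r_n))\bigr).
\]
The main obstacle --- a small calculus exercise --- is to show this is at most $c_n/(e-1) \leq \cmax/(e-1) = \theta B/(e-1)$. Writing $h(s) := (1 - G(f(s)))/s$ with $s = c_n/r_n \in (0, e-1]$, I would need $h(s) \leq 1/(e-1)$. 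Using $G(f(s)) = e\ln(e-s) - (e-s) + 1$, this is equivalent to $\phi(s) := (e-s) - e\ln(e-s) - s/(e-1) \leq 0$ on $[0, e-1]$. The plan is to check $\phi(0) = \phi(e-1) = 0$ and verify that $\phi'(s) = -1 + e/(e-s) - 1/(e-1)$ is strictly increasing from a negative value at $s=0$ to a positive value at $s=e-1$, which forces $\phi$ to dip below $0$ in the interior and then return to $0$, hence $\phi \leq 0$ throughout. If instead $c_n/r_n > e-1$ (seller $n$ inactive at rate $r_n$), then $f(c_n/r_n) = 0$ and $B - B' = r_n < c_n/(e-1)$ directly, so the bound still holds.

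Putting everything together yields $U_{\mathsf{TR}} \geq (1-1/e)(1 - \theta/(e-1)) U^\star$, recovering the $1-1/e$ ratio in the large-market limit. The one nontrivial ingredient is the calculus verification of $h(s) \leq 1/(e-1)$; the rest of the proof is a clean modular combination of Lemma \ref{lem.rissorted} (monotonicity of the $r_j$), Lemma \ref{lem.key2approx} (approximation ratio of the envy-free mechanism), and Lemma \ref{lem.concave} (concavity of $u^\star$).
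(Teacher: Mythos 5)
Your proof is correct, and its skeleton coincides with the paper's: reduce the truthful utility to $\sum_j f_{r_n}(c_j)$ via Lemma \ref{lem.rissorted}, interpret $r_n$ as the stopping rate of Mechanism \Scale{} on an auxiliary instance with budget $B'=\sum_j P_{j,r_n}(c_j)$, invoke Lemma \ref{lem.key2approx} there, and transfer back via concavity (Lemma \ref{lem.concave}). The one genuine difference is how the budget deficit $B-B'$ is controlled. The paper uses the crude bound $B-B'\leq Q_{r_n}(0)=r_n$, which is only $O(\theta B)$ when $r_n=O(\cmax)$; to guarantee that, it splits into two cases ($\cmax\lessgtr \cbar=r_n(e-e^{1-1/e})$), handling the complementary case by noting that every allocation already exceeds $1-1/e$. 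You instead compute the deficit exactly, $B-B'=r_n\bigl(1-G(f(c_n/r_n))\bigr)$, and prove the pointwise bound $\bigl(1-G(f(s))\bigr)/s\leq 1/(e-1)$ on $(0,e-1]$ by convexity of $\phi(s)=(e-s)-e\ln(e-s)-s/(e-1)$ (which vanishes at both endpoints and has $\phi''>0$); this captures that the deficit is quadratically small in $c_n/r_n$ near $0$, so no case split is needed, and it yields the slightly better explicit factor $1-\theta/(e-1)$ versus the paper's $1-6\theta/5$. Both your identity (using $Q_{r_n}(0)=r_n$ from $\int_0^{e-1}\ln(e-x)\,dx=1$) and the calculus verification check out, as does the edge case $c_n/r_n>e-1$.
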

\begin{proof}
Recall that $\uopt=u^\star(B)$ and let $U$ denote the utility achieved by Mechanism~\ref{alg.truthful}. We need to show that $(1-1/e) \cdot \uopt \leq U$.
Instead of showing that $U=\sum_{i\in S} f_{r_i}(c_i)$ is large enough compared to $\uopt$, we show that $\sum_{i\in S} f_{r_n}(c_i)$ is large enough compared to $\uopt$; the lemma then would be proved since we have $f_{r_n}(c_i) \leq f_{r_i}(c_i)$ for all $i\in S$. To see why $f_{r_n}(c_i) \leq f_{r_i}(c_i)$, it is enough to note that $r_n\leq r_i$ due to Lemma \ref{lem.rissorted} which implies $f_{r_n}(c_i) \leq f_{r_i}(c_i)$.

We consider two cases for the proof: In Case~1 we assume $\cmax\leq \cbar$, and in Case~2 we assume otherwise, where the number $\cbar$ is the cost at which $f_{r_n}(\cbar)=1-1/e$, more precisely, this happens at $\cbar=r_n (e-e^{1-1/e})$. 

\paragraph{Case 1} In this case, observe that we have $f_{r_n}(c_i) \geq 1-1/e$ for all $i\in S$, which implies $f_{r_i}(c_i) \geq 1-1/e$. This just means $U \geq (1-1/e)n\geq (1-1/e)\uopt$.

\paragraph{Case 2} Let $U_n = \sum_{i\in S} f_{r_n}(c_i)$, we will show that 
\begin{align} 
U_n \geq (1-1/e)\cdot (1-o(1))\cdot \uopt. \label{eq.lun}
\end{align}
 To prove this, consider an auxiliary instance in which, instead of budget $B$, we have a reduced budget $B'=\sum _{i\in S} Q_{r_n}(c_i)$. Note that if we run Mechanism \scale{} on the auxiliary instance, then its stopping rate is $r_n$, and so, the utility gained by the mechanism is exactly $U_n$. 
Let $\uopt_{\aux}$ denote the optimal utility in the auxiliary instance. Then, by applying Lemma \ref{lem.key2approx} on the auxiliary instance, we have $U_n \geq (1-1/e)\cdot \uopt_{\aux}$.
So, if we show that 
\begin{align}
\uopt_{\aux}\geq (1-o(1))\cdot \uopt \label{eq.randomname}
\end{align}
then \eqref{eq.lun} holds and the proof is complete.

We use Lemma~\ref{lem.concave} to prove \eqref{eq.randomname}: First, we show that $B'\geq (1-o(1))\cdot B$; then, applying Lemma~\ref{lem.concave} would imply that  $u^\star(B')\geq (1-o(1)) \cdot u^\star(B)$, which is identical to \eqref{eq.randomname} by definition. So all we need to complete the proof is showing that $B'\geq (1-o(1))\cdot B$.

To this end, we prove that $B'\geq (1-\alpha \cdot \frac{\cmax}{B})\cdot B$, where $\alpha$ is a constant with value $(e-e^{1-1/e})^{-1}\approx 6/5$. This would prove the Lemma due to the large market assumption. First, observe that 
\begin{align}
B&=Q_{r_n}(0)+\sum_{i\in S\backslash\{n\}} Q_{r_n} (c_i) \leq Q_{r_n}(0)+B' \nonumber \\
&\Rightarrow B'\geq B-Q_{r_n}(0) \geq B-r_n.  \label{eq.bplb}
\end{align}
Now, recall that in Case 2, we have $\cmax\geq \cbar$, which implies
\begin{align}
B&\geq \cbar\cdot \frac{B}{\cmax} = r_n (e-e^{1-1/e})\cdot \frac{B}{\cmax} \nonumber\\
&\Rightarrow \cmax \cdot(e-e^{1-1/e})^{-1} \geq r_n. \label{eq.rnub}
\end{align}
Combining \eqref{eq.bplb} and \eqref{eq.rnub} implies $B'\geq (1-\alpha \cdot \frac{\cmax}{B})\cdot B$ with the promised value for $\alpha$.
\end{proof}

\section{Impossibility Result: On why $1-1/e$ is the best approximation possible} \label{sec.hardness}
In this section we show that no truthful (and possibly) randomized mechanism achieves approximation ratio better than $1-1/e$. We prove a stronger claim by allowing satisfying budget feasibility in expectation, i.e. we prove that no truthful mechanism that is budget feasible in expectation can achieve ratio better than $1-1/e$. 
From now on in this section, we assume that all the mechanisms that we refer to are truthful, and are also budget feasible in expectation.
First, we prove the claim assuming that the items are indivisible, then we will see that the same proof easily extends to divisible items as well. 

\paragraph{Proof Outline.} We construct a bayesian instance of the problem and prove that no budget feasible truthful mechanism for this instance can achieve approximation ratio better than $1-1/e$; this also implies that no mechanism for the prior-free setting can achieve ratio better than $1-1/e$~\footnote{This is so because an $\alpha$-approximate mechanism in the prior-free setting is also $\alpha$-approximate in the bayesian setting}.
The proof is done in two steps. First, we show that for any truthful mechanism for this instance, there exist a simple posted price mechanism that achieves at least the same revenue. The posted price mechanism simply offers the same price $p$ to every seller and pays $p$ to any seller who accepts the offer and $0$ to others. In the second step of the proof, we show that for any choice of $p$, such mechanisms can not achieve a ratio better than $1-1/e$. 
The proof that we present w.l.o.g. analyzes the market in expectation: budget feasibility is satisfied in expectation; also, the utility of the mechanisms are computed in expectation. \\

\noindent
We now give the full proof by first giving our hardness instance.


\paragraph{The Hardness Instance.} We construct a bayesian instance of the problem in which all the sellers have unit utility and their costs are drawn i.i.d. from a distribution with CDF $F$, defined as follows: 
\begin{align*}
F(x)=
\begin{cases}
1/e & \textrm{if } x=0, \\
\frac{1}{e(1-x)} & \textrm{if } 0<x<=1-1/e.
\end{cases}
\end{align*}
In other words, $F(x)$ denotes the probability that the cost of a seller is at most $x$. 
Let $\mathcal{D}$ be the distribution defined by $F$ and let $\over{c}$ denote the expected cost of a seller sampled from $\calD$, i.e. $\over{c}=\bbE_{x\sim \calD}[x]$.
We define the budget to be $B=\over{c}\cdot N$ where $N\geq 1$ is an arbitrary integer denoting the number of sellers.

\begin{definition}
A {\em posted price mechanism} is a mechanism that offers a price $p_i$ to any seller $i\in S$, and pays her $p_i$ if she accepts the offer and pays her $0$ otherwise.
\end{definition}
\begin{definition}
A {\em uniform posted price mechanism} is a posted price mechanism that offers the same price to all sellers.
\end{definition}

\begin{definition}
A cut-off allocation rule is an allocation rule which allocates the whole unit of an item if its cost is less than a certain cut-off and allocate $0$ units otherwise.
Let $\cutoff(p)$ denote a cut-off allocation rule with the cut-off price $p$. 
\end{definition}
It is clear that posted price mechanisms use cut-off allocation rules to allocate items from sellers.

\begin{lemma}
If the sellers costs are drawn i.i.d. from the distribution $\calD$, then for any mechanism in this bayesian setting there exists a posted price mechanism with the same approximation ratio. 
\end{lemma}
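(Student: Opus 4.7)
My plan is to combine Myerson's characterization of truthful single-parameter mechanisms with a mixture decomposition of monotone allocation rules. Given any truthful mechanism $M$ that is budget-feasible in expectation, I would first symmetrize it by composing $M$ with a uniformly random permutation of the $N$ sellers; because the prior $\calD^N$ is exchangeable, this produces a mechanism $M'$ with the same expected utility and expected payment as $M$, and with an interim allocation curve
\[
\bar{x}(c)\;:=\;\bbE_{\mathbf{c}_{-i}\sim\calD^{N-1}}\bigl[A_i(c,\mathbf{c}_{-i})\bigr]
\]
that does not depend on $i$. Truthfulness together with Myerson's lemma for procurement then guarantees that $\bar{x}(\cdot)$ is non-increasing with values in $[0,1]$ and that the interim expected payment to a seller reporting cost $c$ equals
\[
\pi(c)\;:=\;c\cdot\bar{x}(c)+\int_c^{\infty}\bar{x}(t)\operatorname{d}t.
\]

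The next step is to realize $\bar{x}$ as a mixture of cutoff allocations: since $\bar{x}:\bbR_+\to[0,1]$ is non-increasing, there exists a probability measure $\mu$ on $[0,\infty]$ (absorbing the missing mass $1-\bar{x}(0)$, if any, at $+\infty$) such that $\bar{x}(c)=\Pr_{p\sim\mu}[\,c\leq p\,]$ for every $c\geq 0$. I would then consider the posted price mechanism $\widetilde{M}$ that, for each seller $i$, independently draws $p_i\sim\mu$ and offers the price $p_i$ to seller $i$; this fits the definition of a posted price mechanism given above. By construction, the probability that $\widetilde{M}$ allocates seller $i$ at reported cost $c$ is exactly $\bar{x}(c)$, and a short Fubini calculation yields
\[
\bbE_{p\sim\mu}\bigl[p\cdot\mathbf{1}[c\leq p]\bigr]\;=\;\int_c^{\infty}p\operatorname{d}\mu(p)\;=\;c\cdot\bar{x}(c)+\int_c^{\infty}\bar{x}(t)\operatorname{d}t\;=\;\pi(c),
\]
so $\widetilde{M}$ reproduces the interim allocation and interim expected payment of $M'$ seller by seller. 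Summing over the $N$ sellers and taking expectation under $\calD^N$ shows that the total expected utility and total expected payment of $\widetilde{M}$ coincide with those of $M$; in particular $\widetilde{M}$ is budget-feasible in expectation and has the same approximation ratio as $M$.

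The only delicate step is the mixture decomposition, but this is essentially the statement that any non-increasing $[0,1]$-valued function can be written as $\mu([c,\infty])$ for a probability measure, which I would establish by taking $\mu$ to be the pushforward of the Lebesgue measure on $[0,1]$ under the generalized inverse of $1-\bar{x}$. The payment identity is the one computation that must be done, and everything else is bookkeeping. I do not anticipate needing any derandomization, since the definition of a posted price mechanism allows the offered prices to be drawn at random; the next lemma in the paper will presumably reduce further from posted-price to the uniform posted-price mechanisms alluded to in the proof outline.
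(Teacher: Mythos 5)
Your argument is essentially sound and follows the same skeleton as the paper's proof---Myerson's characterization followed by writing the monotone (interim) allocation rule as a mixture of cutoff rules---but it diverges at the final step, and the divergence is worth understanding. The paper does not stop at a posted-price mechanism with \emph{random} prices: it derandomizes, replacing your mixture $\mu$ (the paper's distribution $\pi_i$ with density $f_i$) by the single deterministic price $p_i=F^{-1}\bigl(\int_0^\infty f_i(p)F(p)\,dp\bigr)$, and this is precisely where the special structure of the hardness distribution enters, namely that the parametric curve $\{(F(x),xF(x))\}$ is a straight line, so that matching the expected allocation of the mixture automatically matches its expected payment. Your Fubini identity $\bbE_{p\sim\mu}\bigl[p\,\mathbf{1}[c\leq p]\bigr]=c\bar{x}(c)+\int_c^\infty\bar{x}(t)\,dt$ is correct and has the advantage of working for an arbitrary prior, but it only yields a posted-price mechanism whose offers are drawn at random; whether that object satisfies the lemma as stated is a matter of reading the definition, and---more importantly---the next lemma in the paper (reduction to a \emph{uniform} posted price) is proved for deterministic price vectors $\{p_i\}$ and invokes the same straight-line property, so you are not avoiding that geometric fact, only postponing it. Two small repairs: the missing mass $1-\bar{x}(0)$ must be placed at a price that no seller ever accepts (a negative price, or a ``no offer'' outcome), not at $+\infty$, since a price of $+\infty$ is accepted by everyone and would force $\bar{x}(0)=1$; and the random-permutation symmetrization is harmless but unnecessary, since the decomposition can be carried out seller by seller with a separate measure $\mu_i$ for each $i$, which is what the paper does.
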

\begin{proof}
Due to Myerson's Lemma (see footnote \ref{myersonfnote}), any truthful mechanism in the bayesian setting can be seen as a set of allocation and payment rules corresponding to each seller, where the allocation rule is a decreasing function (of the cost) and the payment rule is defined with respect to the allocation rule as we saw in Figure \ref{fig.payment}. Given such an allocation rule for an arbitrary seller $i\in S$, namely $A_i$, one can think of a simpler way to implement $A_i$ (in expectation) by finding a distribution $\pi_i$ over cut-off allocation rules. 

More precisely, we find the distribution $\pi_i$ with PDF $f_i$ such that the distribution $\pi_i$ over all cut-off allocation rules, which assigns probability density $f_i(x)$ to the cut-off allocation rule $\cutoff(x)$, implements the allocation rule $A_i$ in expectation.
 We prove the existence of $\pi_i$ in the following claim.
 
\begin{claim}
Define the distribution $\pi_i$ by its CDF $F_i(\cdot)$ such that $F_i(x)=1-A_i(x)$ for any cost $x\geq 0$. Then $\pi_i$ would implement $A_i$ in expectation. 
\end{claim}
\begin{proof}
All we need to show that a seller with price $x$ would be allocated with probability $A_i(x)$ in $\pi_i$. To see this, note that the probability that the seller is allocated is exactly equal to the probability of observing a cut-off price at least $x$ when a cut-off price is sampled from $\pi_i$. This probability is equal to $1-(1-A_i(x))$ by the definition of $\pi_i$; this proves the claim.
\end{proof}

Now, we claim that the cut-off allocation rule with the cut-off price 
\begin{align}
p_i = F^{-1}\left(\int_0^\infty f_i(p) \cdot F(p) dp \right) \label{eq.pidef}
\end{align}
achieves the same utility and spends the same budget (in expectation) as the allocation rule $A_i$ paired with its corresponding Myerson payment rule.

\begin{claim} \label{clm.samebudget}
For any seller $i\in S$, $\cutoff(p_i)$ achieves the same utility and spends the same budget (in expectation) as the allocation rule $A_i$ paired with its corresponding Myerson payment rule.
\end{claim}
\begin{proof}
The main idea of the proof is that the set of points $P=\{(F(x),xF(x)): 0\leq x < 1\}$ forms a straight line in the two-dimensional plane; See Figure \ref{fig.line} for a proof by picture. 
 \begin{figure}[h!]
\center
\includegraphics[width=0.5\textwidth]{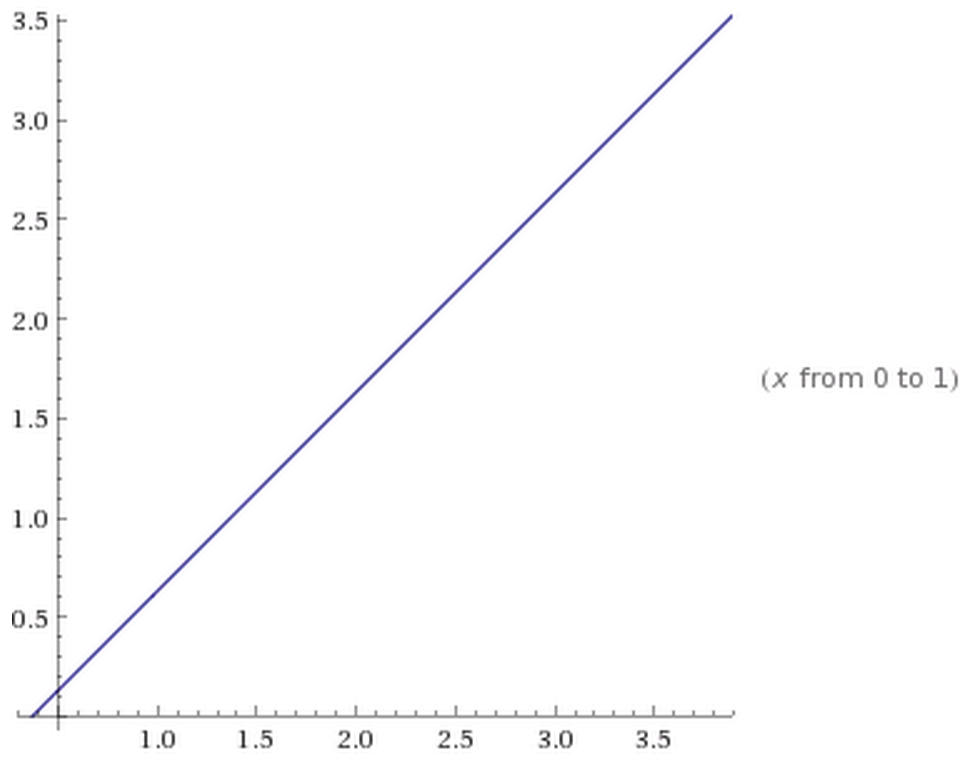}
 \caption{This parametric plot is representing the set of points $P$ for $x=0$ to $x=1$. The horizontal and vertical axis respectively represent $F(x)$ and $xF(x)$}
 \label{fig.line}
 \end{figure}
Note that $F(x)$ denotes the expected allocation when a price $x$ is offered to a seller and $xF(x)$ is the corresponding expected payment. 

Now, see that the expected utility achieved by the allocation rule $A_i$ is $\int_0^\infty f_i(p) F(p) dp$, which is exactly equal to the expected utility achieved by $\cutoff(p)$ due to \eqref{eq.pidef}.  
To prove the claim, it remains to verify that the Myerson payment rules corresponding to $\cutoff(p_i)$ and $A_i$ spend the same budget (in expectation). 
To this end, just observe that $P$ is a straight line; consequently, since the allocation rules $A_i, \cutoff(p_i)$ allocate equal units of items (in expectation), then they also spend the same amount of the budget (in expectation). 
\end{proof}

Due to Claim \ref{clm.samebudget}, the posted price mechanism that offers price $p_i$ to seller $i$ is budget feasible (in expectation) and also achieves an expected utility equal to the utility of the originally given mechanism. This proves the claim.
\end{proof}

\begin{lemma} \label{lem.unipost}
If the sellers costs are drawn i.i.d. from the distribution $\calD$, then for any (budget feasible) posted price mechanism there exists a (budget feasible) uniform posted price mechanism with the same approximation ratio.
\end{lemma}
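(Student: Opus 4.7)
The plan is to exploit the affine relationship between allocation and payment that was established geometrically in the proof of the previous lemma. First I would verify analytically that the parametric curve $\{(F(x),\, xF(x)) : x \in [0, 1-1/e]\}$ is literally a straight line: substituting $F(x) = 1/(e(1-x))$ on the support gives $xF(x) = x/(e(1-x)) = F(x) - 1/e$, and this identity holds at $x=0$ as well. Thus on the support of $\calD$ the expected per-seller payment $xF(x)$ is an affine function of the expected per-seller allocation $F(x)$, with slope $1$ and intercept $-1/e$.

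Next, given any budget-feasible posted-price mechanism with prices $p_1,\ldots,p_N$, I would reduce to the case where every $p_i$ lies in the support $[0, 1-1/e]$. Indeed, any $p_i > 1-1/e$ already guarantees acceptance with probability one (since the support of $\calD$ is $[0, 1-1/e]$), so replacing such a $p_i$ by $1-1/e$ preserves the expected utility while strictly decreasing the expected payment; hence we may assume $p_i \in [0, 1-1/e]$ for every $i$, and the linear identity $p_i F(p_i) = F(p_i) - 1/e$ applies to each seller.

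The key step is then to define a uniform price $p$ by $F(p) = \tfrac{1}{N}\sum_{i \in S} F(p_i)$. The right-hand side lies in $[1/e, 1]$ because each $F(p_i)$ does, and $F$ is continuous and strictly increasing on $[0, 1-1/e]$ with range $[1/e, 1]$, so a unique such $p$ exists in the support. By construction the uniform posted-price mechanism has expected utility $N \cdot F(p) = \sum_i F(p_i)$, matching the original mechanism. By the linear identity, its expected payment equals $N \cdot p F(p) = N(F(p) - 1/e) = \sum_i(F(p_i) - 1/e) = \sum_i p_i F(p_i)$, also matching the original and therefore within $B$ in expectation. Since both mechanisms produce the same expected utility against the same offline optimum, they attain the same approximation ratio.

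The only subtle point is the WLOG normalization to the support; once that is in place, averaging is cost-neutral precisely because the allocation-to-payment map is affine on $[0, 1-1/e]$, which is what makes collapsing the vector $(p_1,\ldots,p_N)$ into a single price $p$ a utility- and budget-preserving operation. I do not anticipate any further technical obstacle.
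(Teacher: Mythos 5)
Your proof is correct and follows essentially the same route as the paper: define the uniform price via $F(\overline{p})=\frac{1}{N}\sum_i F(p_i)$, match utilities by linearity, and match expected payments using the fact that $xF(x)$ is an affine function of $F(x)$ on the support. Your version is in fact slightly more careful than the paper's, since you verify the affine identity $xF(x)=F(x)-1/e$ analytically rather than by picture and you handle the normalization of prices outside the support explicitly.
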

\begin{proof}
Suppose that $\{p_i\}_{i\in S}$ denotes the offered prices in a posted price mechanism and let 
\begin{align*}
\over{p}=F^{-1}\left(\frac{1}{|S|}\cdot \sum_{i\in S} F(p_i)\right).
 \end{align*}
 First, observe that the uniform posted price mechanism with price $\over{p}$ achieves a utility equal to the utility of the original posted price mechanism; this can be verified simply due to linearity of expectation. 
It remains to verify that the uniform posted price mechanism is budget feasible.
To this end, just observe that the set of points $P$ (depicted in Figure~\ref{fig.line}) is a straight line; consequently, since the posted price mechanism and the uniform posted price mechanism allocate equal units of items (in expectation), then they also spend the same amount of the budget (in expectation). 
\end{proof}

\begin{theorem} \label{thm.hardness1}
For the case of indivisible items, no truthful budget feasible mechanism can achieve approximation ratio better than $1-1/e$.
\end{theorem}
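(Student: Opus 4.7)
}
The plan is to invoke the two reduction lemmas already established to restrict attention to uniform posted price mechanisms on the hardness instance, and then to optimize directly over a single price parameter $p$. Concretely, by the lemma relating truthful mechanisms to posted price mechanisms, and by Lemma~\ref{lem.unipost}, it suffices to show that no uniform posted price mechanism on the bayesian instance achieves expected utility better than a $(1-1/e)$ fraction of the expected optimum utility, as $N\to\infty$.

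The first step is to compute the expected cost $\over{c}=\mathbb{E}_{x\sim\calD}[x]$ of the hardness distribution. Using the density $f(x)=1/(e(1-x)^2)$ on $(0,1-1/e]$ together with the point mass of $1/e$ at $0$, a direct integration (substitution $u=1-x$) gives $\over{c}=(e-2)/e=1-2/e$. Hence $B=\over{c}\cdot N=(1-2/e)N$. The second step is to bound the offline optimum. Since every seller has utility $1$, the offline optimum simply sorts sellers by cost and buys as many as the budget allows. Since the sellers' costs are i.i.d.\ with mean $\over{c}$ and are bounded in $[0,1-1/e]$, a standard concentration (or law of large numbers) argument shows that with high probability the total cost of buying all $N$ sellers is at most $B+o(N)$, so $\uopt\geq N-o(N)$.

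The third and central step is to optimize the uniform posted price $p\in[0,1-1/e]$. A seller accepts the offer iff $c_i\leq p$, so the expected allocation per seller is $F(p)$ and the expected payment per seller is $p\cdot F(p)$. The (in-expectation) budget feasibility constraint becomes $p\cdot F(p)\leq\over{c}$, and the expected utility is $N\cdot F(p)$. On $(0,1-1/e]$ we have $F(p)=1/(e(1-p))$, so $p\cdot F(p)=p/(e(1-p))$ is increasing in $p$. The maximum admissible $p$ is the unique solution of $p/(e(1-p))=\over{c}=1-2/e$; solving yields $p^\star=(e-2)/(e-1)$, at which
\begin{align*}
F(p^\star)\;=\;\frac{1}{e(1-p^\star)}\;=\;\frac{e-1}{e}\;=\;1-\frac{1}{e}.
\end{align*}
Hence the best uniform posted price mechanism achieves expected utility at most $N(1-1/e)$, while $\uopt\geq N-o(N)$, giving a ratio of at most $1-1/e+o(1)$. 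Letting $N\to\infty$ completes the proof.

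\paragraph{Where the work sits.}
Once the two reduction lemmas are in hand, the theorem reduces to a one-variable calibration: the distribution $\calD$ is engineered precisely so that the parametric curve $p\mapsto(F(p),\,pF(p))$ is linear (already exploited in Lemma~\ref{lem.unipost}) and so that pushing the budget constraint $pF(p)\leq\over{c}$ to equality forces $F(p)=1-1/e$. The only minor subtlety is the concentration argument bounding $\uopt$ from below by $N-o(N)$; since the costs lie in a bounded interval, Hoeffding (or even Chebyshev) suffices, so this step is routine and not the main obstacle.
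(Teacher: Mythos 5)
Your proposal is correct and follows essentially the same route as the paper: reduce to uniform posted price mechanisms, identify the budget-exhausting price $p^\star=\frac{e-2}{e-1}$ at which $F(p^\star)=1-1/e$, and use concentration of the total cost (the paper invokes Hoeffding with a fixed $\epsilon$, yours is the same idea) to certify $\uopt\geq N-o(N)$. The computation of $\over{c}=(e-2)/e$ and the resulting calibration match the paper's.
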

\begin{proof}
We use Lemma \ref{lem.unipost} and show that no uniform posted price mechanism can achieve ratio better than $1-1/e$. Equivalently, we show that the uniform posted price mechanism which spends all the budget in expectation has approximation ratio no better than $1-1/e$.

To define the uniform posted price mechanism that spends all the budget in expectation, we need to find $p^*$ such that $p^* F(p^*)\cdot N=B$. Given the definitions of $F(\cdot)$ and $B$, we can solve this equation to get $p^*=\frac{e-2}{e-1}$. Now, we are ready to compute the approximation ratio. First, note that the (expected) utility of the uniform posted price mechanism is $N\cdot F(p^*)$. If we had $\sum_{i\in S} c_i \leq B$, then we had $U^*=N$ (the optimum solution could buy all items), and so we could write the approximation ratio as
\begin{align*}
\frac{N\cdot F(p^*)}{N} = F(p^*) = 1-1/e,
\end{align*}
which would prove the claim. However, although $\bbE\left[\sum_{i\in S} c_i \right]=B$, the sum is not always bounded by $B$, which means $U^*=N$ does not always hold. We find a way to fix this issue using Hoeffding bounds (see Section \ref{sec.hoeffding} to see formal statements of Hoeffding bounds). We show that although $\sum_{i\in S} c_i$ is not always bounded by $B$, it is concentrated around its mean, $B$, with high probability. We will see that this is enough to prove the theorem.

As a consequence of Hoeffding bounds (stated in Section \ref{sec.hoeffding}), for any $\epsilon > 0$ we have:
\begin{align}
\Pr\left[ \sum_{i\in S} c_i \geq (1+\epsilon)\cdot B \right] \leq e^{-|S|}\label{eq.hoeffding}
\end{align}
Recall that we defined $N=|S|$ and that in our hardness instance $N\to \infty$. Using \eqref{eq.hoeffding}, we will provide an upper bound on the approximation ratio which, for any constant $\epsilon > 0$, approaches to $(1-1/e)(1+\epsilon)$ as $N$ approaches infinity. This proves that the approximation ratio can not be a constant larger than $1-1/e$.

To this end, first note that if $\sum_{i\in S} c_i \leq B (1+\epsilon)$, then we have $U^*\geq \frac{N}{1+\epsilon}-1$; this holds due to Lemma \ref{lem.concave}.
We can use this fact along with \eqref{eq.hoeffding} to write the following upper bound on the (expected) approximation ratio:
\begin{align*}
(1-e^{-N})\cdot \frac{N\cdot F(p^*)}{N/(1+\epsilon)-1} + e^{-N}\cdot 1.
\end{align*}
The above ratio clearly approaches $F(p^*)(1+\epsilon)$ as $N\to \infty$. Noting that $F(p^*)=1-1/e$ finishes the proof.
\end{proof}

Now we use Theorem \ref{thm.hardness1} to prove its counterpart for divisible items.
\begin{corollary}
For the case of divisible items, no truthful budget feasible mechanism can achieve approximation ratio better than $1-1/e$.
\end{corollary}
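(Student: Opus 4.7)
The plan is to reuse the proof of Theorem \ref{thm.hardness1} essentially verbatim on the same bayesian hardness instance (unit utilities, costs drawn i.i.d.\ from $\calD$, budget $B=\over{c}\cdot N$), with a one-line observation that unifies the two settings. Because every item in the hardness instance has unit utility, a truthful mechanism for divisible items assigns seller $i$ an allocation value $A_i(c_i)\in[0,1]$ that plays precisely the same role as the marginal winning probability of a randomized mechanism for indivisible items. Myerson's characterization, the monotonicity of $A_i$ in $c_i$, and the Myerson payment formula are one-parameter statements that do not distinguish ``expected fraction sold'' from ``probability of being bought,'' so the same structural analysis applies.

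I would then re-run each step of the proof on this instance. The reduction to a posted price mechanism via a distribution $\pi_i$ over cut-off allocation rules still goes through: for a unit-utility item a cut-off allocation $\cutoff(x)$ is operationally the same as a take-it-or-leave-it offer of price $x$ per unit, since a seller with cost $c_i$ optimally sells the whole unit iff $c_i\le x$, whether divisible or not. The reduction to a uniform posted price in Lemma \ref{lem.unipost} depends only on the straight-line property of the curve $\{(F(x),xF(x))\}_x$, which is a feature of $\calD$ alone and is unchanged. Consequently, the optimal uniform price remains $p^\star=(e-2)/(e-1)$ and the mechanism's expected utility remains $N\cdot F(p^\star)=N(1-1/e)$.

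For the benchmark $\uopt$, I would observe that Lemma \ref{lem.concave} is stated precisely for divisible items, so on the typical event $\sum_i c_i\le (1+\epsilon)B$ guaranteed by the same Hoeffding tail bound $\Pr[\sum_i c_i\ge (1+\epsilon)B]\le e^{-N}$, the divisible knapsack optimum satisfies $\uopt\ge N/(1+\epsilon)-1$. Substituting into the expected ratio yields the same asymptotic upper bound $(1-1/e)(1+\epsilon)+o(1)$, and sending $\epsilon\to 0$ and $N\to\infty$ gives the corollary. The only delicate point is verifying that divisibility does not allow the mechanism to bypass the posted-price reduction; this is immediate from the unit-utility structure of the hardness instance, which keeps us in the single-parameter Myerson framework, so I do not expect any genuine technical obstacle beyond this sanity check.
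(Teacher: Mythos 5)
Your proposal is correct and follows essentially the same route as the paper: the paper likewise observes that the entire argument of Theorem \ref{thm.hardness1} applies verbatim to divisible items because the fractional allocation is the same single-parameter Myerson object as a winning probability, and the resulting (uniform) posted price mechanism never divides items. The only cosmetic difference is that the paper packages this as a reduction-by-contradiction to the indivisible case rather than re-running the argument directly, but the content is identical.
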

\begin{proof}
Proof by contradiction. Suppose there exists a mechanism with approximation ratio $\alpha > 1-1/e$ for some constant $\alpha$. Then, we show that we can convert this mechanism to an $\alpha$-approximation mechanism for indivisible items which is truthful and budget feasible in expectation. This would contradict Theorem \ref{thm.hardness1}. 

To do this conversion, we repeat the exact same argument that we used to prove Theorem \ref{thm.hardness1}. As the result, we can convert the given $\alpha$-approximation mechanism to a uniform posted price mechanism with approximation ratio $\alpha$. Note that all posted price mechanisms allocate items without dividing them. Consequently, we have an $\alpha$-approximation mechanism for indivisible items. Contradiction. 
\end{proof}

\section{Conclusion}
Our main contribution is designing optimal budget feasible mechanisms for the knapsack model in large markets. First, we assume that the items are divisible, and study a natural class of deterministic mechanisms: each mechanism in this class is characterized by a decreasing allocation function. All the mechanisms in this class are individually rational, truthful and budget feasible, but they have different approximation ratios based on the choice of the allocation function. We find a mechanism in this class which has an approximation ratio $1-1/e$, and prove that no truthful mechanism can achieve a better approximation ratio.

We also provide a mechanism with approximation ratio $1-1/e$ for the case of indivisible items: the idea is to first run the mechanism for divisible items, and then round the obtained fractional solution (allocation). We design a rounding process that takes the fractional allocation as its input and outputs an integral allocation with its associated payments. Due to the properties of our rounding process, the resulting mechanism is individual rational, truthful-in-expectation, and budget feasible; also, it has approximation ratio $1-1/e$ in large markets.

Finally, we study the problem for submodular utility functions with indivisible items. For this case, we first design a deterministic mechanism which has approximation ratio $\frac{1}{2}$ in large markets; this mechanism can have an exponential running time in general. Inspired by this mechanism, we also design a polynomial-time deterministic mechanism with approximation ratio $\frac{1}{3}$.
We do not provide any results for when the items are divisible in the submodular model: One has to model the utility function over divisible items; the multilinear extension \cite{jan-multilinear} or Lov\`{a}sz extension of submodular functions is a potential choice for this purpose. We leave open this case for future study.

Bayesian
\section*{Acknowledgments} We acknowledge the comments and ideas of an anonymous reviewer that helped us generalize our impossibility result to the case of bayesian setting.

\bibliography{main}

\appendix
\section{Analyzing our Optimal Truthful Mechanism for the general case} \label{sec.generalu}
In this section we will prove that the approximation ratio of Mechanism \ref{alg.truthful} approaches $1-{1}/{e}$ as $\theta$, the market's largeness ratio, approaches $0$. We emphasize that here we dismiss the extra assumption that was made in Section \ref{SEC.TRUTHU1}: There, we assumed all items have utility $1$, here we give a proof for the general case when item $i$ provides utility $u_i > 0$. 

\begin{lemma}
\label{lem.ri}
For each $k\in\{1,\dots, n\}$, $r_k\geq (1-\theta)r^\star$.
\end{lemma}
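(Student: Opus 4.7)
The plan is to compare the total‐payment curves of Mechanism~\scale{} on the two instances (original, and with $c_k$ replaced by $0$) and to exploit a super-linearity property of this curve in $r$. Let
$$\Phi(r) = \sum_{i \in S} P_{i,r}(c_i), \qquad \Phi_k(r) = \sum_{i \neq k} P_{i,r}(c_i) + P_{k,r}(0).$$
By definition of the stopping rates, $\Phi(\rstar) = B$ and $\Phi_k(r_k) = B$, and by Lemma~\ref{lem.rdec} we already have $r_k \leq \rstar$. We want to control how far below $\rstar$ the value $r_k$ can sit, and the idea is that the discrepancy between $\Phi$ and $\Phi_k$ is only $O(\cmax)$ pointwise, while $\Phi$ is essentially linear in $r$.

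First, I would bound $\Phi_k(r) - \Phi(r) = P_{k,r}(0) - P_{k,r}(c_k)$ uniformly in $r$. Using Myerson's formula $P_{k,r}(c) = c f_r(c/u_k) + u_k \int_{c/u_k}^\infty f_r(y)\,dy$, a direct calculation yields
$$P_{k,r}(0) - P_{k,r}(c_k) \;=\; u_k \int_0^{c_k/u_k} f_r(y)\,dy \;-\; c_k\, f_r(c_k/u_k) \;\leq\; c_k \;\leq\; \cmax,$$
where the last inequality uses $0 \leq f_r \leq 1$. Evaluated at $r = r_k$, this gives $\Phi(r_k) \geq \Phi_k(r_k) - \cmax = B - \cmax$.

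The main step is to show the super-linearity claim: $\Phi(r)/r$ is non-decreasing in $r$. Writing $\xi = c_i/(u_i r)$ and substituting $v = y/r$, one gets
$$\frac{P_{i,r}(c_i)}{r} \;=\; \frac{c_i}{r} f(\xi) + u_i \int_\xi^{e-1} f(v)\,dv.$$
Differentiating in $r$, using $d\xi/dr = -\xi/r$ and $u_i \xi = c_i/r$, the two boundary terms involving $f(\xi)$ cancel exactly and leave
$$\frac{d}{dr}\!\left[\frac{P_{i,r}(c_i)}{r}\right] \;=\; -\,\frac{c_i\, \xi\, f'(\xi)}{r^2} \;\geq\; 0,$$
since $f$ is non-increasing. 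Summing over $i$ gives super-linearity of $\Phi$.

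Combining the two ingredients: since $r_k \leq \rstar$, super-linearity yields $\Phi(r_k)/r_k \leq \Phi(\rstar)/\rstar = B/\rstar$, so
$$r_k \;\geq\; \frac{\Phi(r_k)}{B}\cdot \rstar \;\geq\; \frac{B - \cmax}{B}\cdot \rstar \;=\; (1-\theta)\,\rstar.$$
The pointwise payment bound in the first step is immediate; the main obstacle is recognizing the super-linearity of $\Phi$, which is not an a priori structural fact about arbitrary allocation/payment pairs but rather a precise cancellation arising from Myerson's identity together with the monotonicity of $f$.
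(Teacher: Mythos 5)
Your proof is correct and rests on exactly the same two facts as the paper's: the pointwise bound $P_{k,r}(0)-P_{k,r}(c_k)\le c_k\le \cmax$, and the super-linearity of the payment in $r$ --- your derivative computation is just the infinitesimal form of the paper's scaling identity $Q_{\alpha r}(x)=\alpha\, Q_r(x/\alpha)\le \alpha\, Q_r(x)$ for $\alpha\le 1$, which also avoids assuming $f$ is differentiable. The only difference is organizational: the paper verifies directly that the rule $f_{(1-\theta)\rstar}$ stays within budget on the modified instance, whereas you lower-bound $\Phi(r_k)$ and invert via the monotonicity of $\Phi(r)/r$; both are valid.
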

\begin{proof}
We just need to prove that $f_{(1-\theta)r^\star}$ is not a fit rule (i.e. does not consume all of the budget) when we set the cost of item $k$ to $0$.
First of all, note that 
\begin{align*}
Q_{(1-\theta)r^\star}(x)=(1-\theta)Q_{r^\star}(\frac{x}{1-\theta})\leq (1-\theta)Q_{r^\star}(x)
\end{align*}
Here we used the fact that $Q_{r^\star}$ is a decreasing function.
This implies that $\sum_{i=1}^{n}u_iQ_{(1-\theta)r^\star}(\frac{c_i}{u_i})\leq (1-\theta)B$.
This expression is the budget consumed by the rule $f_{(1-\theta)r^\star}$ without setting the cost of item $k$ to $0$.
When we set $c_k$ to $0$, the amount of budget consumed can be bounded in the following manner
\begin{align}
u_kQ_{(1-\theta)r^\star}(0)+\sum_{i\neq k}u_iQ_{(1-\theta)r^\star}(\frac{u_i}{c_i})\leq (1-\theta)B+u_k\left(Q_{(1-\theta)r^\star}(0)-Q_{(1-\theta)r^\star}(\frac{c_k}{u_k})\right)
\label{eq.thetadiff}
\end{align}
Note that $Q_{(1-\theta)r^\star}(\cdot)$ is defined to be the area of the shaded region as seen in figure \ref{fig.payment}. Therefore one
can crudely upper bound the difference $Q_{(1-\theta)r^\star}(0)-Q_{(1-\theta)r^\star}(x)$ by $x\times f_{(1-\theta)r^\star}(0)$ for any
$x\geq 0$. Now letting $x=\frac{c_k}{u_k}$, and substituting in inequality \ref{eq.thetadiff} we get
\begin{align*}
u_kQ_{(1-\theta)r^\star}(0)+\sum_{i\neq k}u_iQ_{(1-\theta)r^\star}(\frac{u_i}{c_i})&\leq(1-\theta)B+u_k(\frac{c_k}{u_k}-0)\\
&=(1-\theta)B+c_k\leq B
\end{align*}
This completes the proof.
\end{proof}

\begin{lemma}
Mechanism \ref{alg.truthful} has an approximation ratio approaching $1-\frac{1}{e}$ as $\theta$ approaches $0$.
\end{lemma}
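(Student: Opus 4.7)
The plan is to adapt the unit-utility argument of Section \ref{SEC.TRUTHU1} by using the seller $k^*\in\arg\min_k r_k$ in place of the largest-cost seller (which has no intrinsic meaning when utilities vary). Since $f_r(x)=f(x/r)$ is weakly increasing in $r$ for each fixed $x$, and $r_{k^*}\le r_i$ for every $i$,
\[
\ualg \;=\; \sum_i u_i f_{r_i}(c_i/u_i) \;\ge\; \sum_i u_i f_{r_{k^*}}(c_i/u_i) \;=:\; \tilde U.
\]
Define the auxiliary budget $B_{\mathsf{aux}} := \sum_i u_i Q_{r_{k^*}}(c_i/u_i)$; on the auxiliary instance with the same sellers but this budget, Mechanism \scale{} has stopping rate exactly $r_{k^*}$ and achieves utility $\tilde U$, so by Lemma \ref{lem.key2approx}, $\tilde U \ge (1-1/e)\,u^\star(B_{\mathsf{aux}})$.

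The heart of the argument is the bound $B_{\mathsf{aux}} \ge (1-\theta)\,B$, and it reuses the same identity that drove the proof of Lemma \ref{lem.ri}. By the very definition of $r_{k^*}$, Mechanism \scale{} applied to the instance with $c_{k^*}$ reset to $0$ has stopping rate $r_{k^*}$, so
\[
B \;=\; u_{k^*} Q_{r_{k^*}}(0) + \sum_{i\ne k^*} u_i Q_{r_{k^*}}(c_i/u_i) \;=\; u_{k^*} r_{k^*} + B_{\mathsf{aux}} - u_{k^*} Q_{r_{k^*}}(c_{k^*}/u_{k^*}).
\]
Rearranging,
\[
B - B_{\mathsf{aux}} \;=\; u_{k^*}\bigl[Q_{r_{k^*}}(0) - Q_{r_{k^*}}(c_{k^*}/u_{k^*})\bigr] \;\le\; u_{k^*}\cdot\frac{c_{k^*}}{u_{k^*}} \;=\; c_{k^*} \;\le\; \cmax \;\le\; \theta B,
\]
where the middle inequality uses the elementary area estimate $Q_r(0)-Q_r(x) = \int_0^x f_r(y)\,\mathrm{d}y - x\,f_r(x) \le x$, valid because $f_r \le f_r(0) = 1$.

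Combining with Lemma \ref{lem.concave} and $u^\star(0)=0$ gives $u^\star(B_{\mathsf{aux}}) \ge (B_{\mathsf{aux}}/B)\,\uopt \ge (1-\theta)\,\uopt$, hence
\[
\ualg \;\ge\; \tilde U \;\ge\; (1-1/e)(1-\theta)\,\uopt,
\]
which tends to $(1-1/e)\,\uopt$ as $\theta\to 0$, finishing the proof.

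The subtlety I anticipate is the choice of auxiliary rate. A seemingly more direct attempt would set the auxiliary rate to $(1-\theta)r^\star$ (directly invoking Lemma \ref{lem.ri}) and then try to compare $B_{\mathsf{aux}}$ to $B$ via a derivative-in-$r$ or substitution argument; this unfortunately introduces an uncontrolled factor of $\sum_i u_i$ in the general-utility setting, since marginal sellers near the cutoff can be many. Using $r_{k^*}$ as the auxiliary rate instead collapses the entire $B - B_{\mathsf{aux}}$ discrepancy into the contribution of the single seller $k^*$, which the large-market assumption $\cmax \le \theta B$ then handles immediately.
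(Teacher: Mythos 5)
Your proof is correct, and it takes a genuinely different --- and cleaner --- route than the paper's. The paper's argument (Section \ref{sec.generalu}) fixes a threshold $s\in(0,e-1)$, splits the sellers into $\mathcal{I}$ ($c_i/u_i\le s$) and $\mathcal{J}$ ($c_i/u_i>s$), compares the allocation to $f_{r^+}$ with $r^+=\min_{i\in\mathcal{J}}r_i$, lower-bounds the consumed budget by discarding the entire term $u_kQ_{r^+}(0)\le u_k\le c_k/s$, and then needs Lemma \ref{lem.ri} together with log-concavity of $f$ to control the $\mathcal{I}$-sellers, arriving at the ratio $\frac{\ln(e-s/(1-\theta))}{\ln(e-s)}(1-\frac{\theta}{s})(1-\frac{1}{e})$ and taking $\theta\to 0$ for each fixed $s$. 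The single observation that makes all of this unnecessary in your version is that you keep the subtracted term rather than dropping it: with $k^*\in\arg\min_k r_k$ you get $B-B_{\aux}=u_{k^*}\bigl[Q_{r_{k^*}}(0)-Q_{r_{k^*}}(c_{k^*}/u_{k^*})\bigr]\le c_{k^*}\le\cmax\le\theta B$, where the middle step is exactly the ``crude'' area bound the paper itself uses inside the proof of Lemma \ref{lem.ri}; the factor $u_{k^*}$ cancels, so no case distinction on whether $c_{k^*}/u_{k^*}$ is large or small is ever needed. The remaining scaffolding (the stopping rate of \scale{} on the auxiliary instance equals $r_{k^*}$, Lemma \ref{lem.key2approx}, concavity of $u^\star$) mirrors the paper's own template from Section \ref{SEC.TRUTHU1}. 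What your route buys is an explicit and sharper bound $(1-1/e)(1-\theta)$ with no auxiliary parameter to optimize over, and it simultaneously subsumes the unit-utility proof of Section \ref{SEC.TRUTHU1}, whose Case~1/Case~2 split also becomes unnecessary. The only (shared, harmless) informality is the assertion that $r_{k^*}$ is \emph{the} stopping rate for budget $B_{\aux}$, which tacitly uses monotonicity of the total payment in $r$ --- the paper makes the identical assertion for $r_n$.
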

\begin{proof}
W.l.o.g. assume that $r^\star=1$ (since we can scale the budget and costs by an appropriate scaling factor). Now let us pick a
constant threshold $0<s<e-1$ and partition the indices $\{1,\dots, n\}$ into two sets $\mathcal{I}$ and $\mathcal{J}$: let $\mathcal{J}$ be
the set of indices $i$ where $\frac{c_i}{u_i}>s$ and let $\mathcal{I}$ be the complement.

Let $r^+$ be the minimum $r_i$ where $i\in \mathcal{J}$. If $\mathcal{J}$ happens to be empty, let $r^+=r^\star=1$. Let $B'$ be
the budget consumed by the allocation rule $f_{r^+}$, i.e. let $B'=\sum_{i=1}^{n}u_iQ_{r^+}(\frac{c_i}{u_i})$. We will prove
that $B'$ is close to $B$. If $r^+=r^\star$, this is obviously true because $B'=B$. So assume that $r^+=r_k$ for some $k\in\mathcal{J}$.

Because of the way $r_k$ is chosen, we have
\begin{align}
B&=u_kQ_{r_k}(0)+\sum_{i\neq k}u_iQ_{r_k}(\frac{c_i}{u_i})
\leq u_k+\sum_{i\neq k}u_iQ_{r_k}(\frac{c_i}{u_i})
\label{eq.bbprime}
\end{align}
Here we used the fact that $Q_{r_k}(0)\leq Q_{r^\star}(0)=1$ (since we assumed $r^\star=1$). Note that $B'\geq \sum_{i\neq k}u_iQ_{r^+}(\frac{c_i}{u_i})$. Combining
this with the inequality \ref{eq.bbprime} we get
\begin{align*}
B'\geq B-u_k=B-c_k\frac{u_k}{c_k}\geq B-\frac{c_k}{s}\geq (1-\frac{\theta}{s})B
\end{align*}

Using lemma $\ref{lem.concave}$, one can see that $u^\star(B')\geq (1-\frac{\theta}{s})u^\star(B)$. But we also know from lemma \ref{lem.key2approx}
that the utility achieved by $f_{r^+}$ is at least $(1-\frac{1}{e})u^\star(B')$. Therefore we have

\begin{align}
\sum_{i=1}^{n}u_if_{r^+}(\frac{c_i}{u_i})\geq (1-\frac{\theta}{s})(1-\frac{1}{e})u^\star(B)
\label{eq.approxutility}
\end{align}

For an item $i\in\mathcal{I}$, we have $r_i\geq (1-\theta)r^\star=1-\theta$ (we used lemma \ref{lem.ri}). Therefore
$$\frac{f_{r_i}(\frac{c_i}{u_i})}{f(\frac{c_i}{u_i})}=\frac{f(\frac{1}{r_i}\frac{c_i}{u_i})}{f(\frac{c_i}{u_i})}\geq \frac{f(\frac{1}{1-\theta}\frac{c_i}{u_i})}{f(\frac{c_i}{u_i})}$$
One can easily verify that $\ln f$ is a concave function. Therefore $\frac{f(\frac{1}{r_i}x)}{f(x)}$ for $x\leq s$ is minimized at $x=s$. This means that
$$\frac{f_{r_i}(\frac{c_i}{u_i})}{f(\frac{c_i}{u_i})}\geq \frac{f(\frac{s}{r_i})}{f(s)}\geq \frac{f(\frac{s}{1-\theta})}{f(s)}$$
If we let $\alpha = \frac{f(\frac{s}{r_i})}{f(s)}$, then for every $i\in\mathcal{I}$ we have
$$f_{r_i}(\frac{c_i}{u_i})\geq \alpha f(\frac{c_i}{u_i})\geq \alpha f_{r^+}(\frac{c_i}{u_i})$$

Similarly, for every item $i\in\mathcal{J}$, $r_i\geq r^+$ and therefore $f_{r_i}(\frac{c_i}{u_i})\geq f_{r^+}(\frac{c_i}{u_i})\geq \alpha f_{r^+}(\frac{c_i}{u_i})$.

We just proved that for every $i\in\{1,\dots, n\}$, $f_{r_i}(\frac{c_i}{u_i})\geq \alpha f_{r^+}(\frac{c_i}{u_i})$. Combining this
with inequality \ref{eq.approxutility} we get
$$\sum_{i=1}^{n}u_if_{r_i}(\frac{c_i}{u_i})\geq \alpha(1-\frac{\theta}{s})(1-\frac{1}{e})u^\star(B)$$

So the approximation ratio for Mechanism \ref{alg.truthful} is at least 
$$\alpha(1-\frac{\theta}{s})(1-\frac{1}{e})=\frac{\ln(e-\frac{s}{1-\theta})}{\ln(e-s)}(1-\frac{\theta}{s})(1-\frac{1}{e})$$

For any fixed $s$, strictly smaller than $e-1$, one can observe that the ratio above approaches $1-\frac{1}{e}$ as $\theta\to 0$.
We will not attempt to optimize the value of $s$ for the sake of brevity.
\end{proof}

\section{Missing Proofs from Section \ref{SEC.TRUTHU1}} \label{sec.defferedproofs}

\begin{proof}[of Lemma \ref{lem.rissorted}]

For any $i,j$ such that $i\leq j$, we prove that $r_i\geq r_j$, this would prove the lemma. The proof is by contradiction, suppose $r_i < r_j$.
First, note that $Q_{r_i} (0)+\sum_{k\in S\backslash\{i\}} Q_{r_i}(c_k)$ represents the sum of payments in \scale{} when the cost of $i$ is set to $0$. Since \scale{} spends all of the budget, then we have:
\begin{align} 
Q_{r_i} (0)+\sum_{k\in S\backslash\{i\}} Q_{r_i}(c_k) = B.\label{eq.ij1}
\end{align}

Also, note that $Q_{r_j} (0)+\sum_{k\in S\backslash\{j\}} Q_{r_i}(c_k)$ represents the sum of payments in \scale{} when the cost of $j$ is set to $0$; a similar argument shows that
\begin{align} 
Q_{r_j} (0)+\sum_{k\in S\backslash\{j\}} Q_{r_j}(c_k) = B. \label{eq.ij2}
\end{align}
Taking the difference between \eqref{eq.ij1} and \eqref{eq.ij2} implies that
\begin{align} 
\Big(Q_{r_i}(0) - Q_{r_j} (0) \Big)+ \Big(Q_{r_i}(c_j) - Q_{r_j}(c_i)\Big)= 0 \label{eq.ij3}
\end{align}
Now, observe that since $r_i < r_j$ and $c_i \leq c_j$, then we have that $Q_{r_i}(0) < Q_{r_j}(0)$ and $Q_{r_i}(c_j) < Q_{r_j}(c_i) $. This contradicts with \eqref{eq.ij3}.
\end{proof}

\begin{proof}[of Lemma \ref{lem.concave}]
We need to prove that for $B_1, B_2\geq 0$ and $0\leq \lambda\leq 1$ 
$$\lambda u^\star(B_1)+(1-\lambda)u^\star(B_2)\leq u^\star(\lambda B_1+(1-\lambda)B_2)$$

Let $x_i$ be the amount of item $i$ we allocate to achieve $u^\star(B_1)$ and let $y_i$ be the amount of item $i$ we allocate
to achieve $u^\star(B_2)$.

Now let $z_i=\lambda x_i+(1-\lambda) y_i$. Note that since $0\leq x_i,y_i\leq 1$, we also have $0\leq z_i\leq 1$.
If we allocate $z_i$ from item $i$, the utility we get will be
$$\sum_{i=1}^{n}u_iz_i=\lambda(\sum_{i=1}^{n}u_ix_i)+(1-\lambda)(\sum_{i=1}^{n}u_iy_i)=\lambda u^\star(B_1)+(1-\lambda)u^\star(B_2)$$

The cost paid by these allocations is simply
$$\sum_{i=1}^{n}c_iz_i=\lambda(\sum_{i=1}^{n}c_ix_i)+(1-\lambda)(\sum_{i=1}^{n}c_i y_i)\leq \lambda B_1+(1-\lambda)B_2$$

Therefore $z_i$'s are an allocation that spend a budget of at most $\lambda B_1+(1-\lambda)B_2$ and yet achieve a utility of
$\lambda u^\star(B_1)+(1-\lambda)u^\star(B_2)$. This proves that
$$u^\star(\lambda B_1+(1-\lambda)B_2)\geq \lambda u^\star(B_1)+(1-\lambda)u^\star(B_2)$$
\end{proof}

\section{Mechanisms for Indivisible Items} \label{sec.rounding}

In this section, using our mechanism for divisible items, we design a mechanism for indivisible items with approximation ratio $1-1/e$.
The idea is to first run the mechanism for divisible items, and then round the obtained fractional solution (allocation). We design a rounding process that takes the fractional allocation as its input and outputs an integral allocation with its associated payments. Due to the properties of our rounding process, the resulting mechanism is individual rational, truthful, and budget feasible; also, it has approximation ratio $1-1/e$ in large markets. 

First, we explain a set of properties that we need the rounding procedure to satisfy. If the rounding procedure satisfies these properties, then its individual rationality, truthfulness, and budget feasibility would be guaranteed. Also, these properties guarantee that the approximation ratio would remain $1-1/e$. First we explain these properties in Section \ref{sec.roundingprop}, then, we state our rounding procedure and prove that it satisfies these desired properties in Section \ref{sec.mainrounding}.

\subsection{Properties of the Rounding Procedure} \label{sec.roundingprop}
Let $\tix_1,\ldots,\tix_n$ represent a fractional allocation where $\tix_i$ denotes the allocated fraction from seller $i\in S$; also, let $\tip_1,\ldots,\tip_n$ be the associated payments for this allocation. We round this solution to an integral solution, represented by the allocation $x_1,\ldots,x_n$ and payments $p_1,\ldots,p_n$, such that:
\begin{enumerate}
\item \label{e.p1} Item $i$ is bought with probability $\tix_i$.
\item \label{e.p2} If item $i$ is bought, then $p_i = \tip_i/\tix_i$, and $p_i = 0$ otherwise.
\item \label{e.p3} $\sum_{i\in S} p_i \leq B+\cmax$.
\end{enumerate}

Properties \ref{e.p1} and \ref{e.p2} imply individual rationality and truthfulness: 
Verifying individual rationality is straight-forward due to the individual rationality of the fractional solution. 
For truthfulness, just see that $\bbE[x_i]=\tix_i$ and $\bbE[p_i]=\tip_i$, which implies $\bbE[p_i - c_i x_i] = \tip_i - c_i \tix_i$. 
This just means that, in the mechanism for indivisible items, $i$ cannot benefit (in expectation) by misreporting, since she is already receiving the maximum possible expected utility that she can ever achieve. 

Properties \ref{e.p1} and \ref{e.p2} also imply budget feasibility in expectation, however, Property \ref{e.p3} provides a much stronger guarantee: the budget will not be violated by an additive factor more than $\cmax$.\footnote{We can always reduce the budget slightly to get strict budget feasiblility, e.g. we can reduce the budget to $(1-\epsilon)B$ for an arbitrary small $\epsilon>0$. This will not affect the approximation ratio (asymptotically) in a large market.}

\subsection{Description of the Rounding Procedure} \label{sec.mainrounding}
In this section, we focus in designing a rounding procedure which satisfies Properties \ref{e.p1}, \ref{e.p2} and \ref{e.p3}. To this end, we first need to define a polytope $\calP$ that represents all the (fractional) allocations which, in a certain sense, are budget feasible:
\begin{align*}
\calP = \left\{ y\in [0,1]^n:  \sum_{i\in S} y_i \cdot \frac{\tip_i}{\tix_i} \leq B \right\}
\end{align*}

First, we prove that extreme points of $\calP$ are ``almost'' integral.
\begin{definition}
A point $y\in [0,1]^n$ is called {\em semi-integral} if there is at most one entry of $y$ which is non-integral, i.e. there is at most one index $i$ such that $0< y_i < 1$.
\end{definition}

\begin{lemma}
All the extreme points of $\calP$ are semi-integral.
\end{lemma}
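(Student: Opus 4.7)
The plan is to use the standard characterization of extreme points: a point $y\in\calP\subseteq\bbR^n$ is extreme if and only if the constraints of $\calP$ that are tight at $y$ have rank $n$. The polytope $\calP$ is cut out by $2n+1$ inequalities: the $2n$ box constraints $y_i\ge 0$ and $y_i\le 1$ for $i\in S$, together with the single budget-type inequality $\sum_{i\in S} y_i\cdot \tip_i/\tix_i\le B$.

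First I would observe that for any fixed coordinate $i$, at most one of the two box constraints $y_i\ge 0$ and $y_i\le 1$ can be tight at a feasible point, simply because $0\ne 1$. Hence the total number of tight box constraints at any $y\in\calP$ is at most $n$, one per coordinate, and these tight box constraints are automatically linearly independent because each involves a distinct coordinate.

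Next, suppose $y$ is an extreme point of $\calP$. Then we need $n$ linearly independent tight constraints. Let $k$ be the number of coordinates at which a box constraint is tight (so $y_i\in\{0,1\}$ for those $k$ coordinates). If $k=n$, then $y\in\{0,1\}^n$ is integral and therefore trivially semi-integral. Otherwise $k\le n-1$; but since we have only $k$ linearly independent tight box constraints and only one additional constraint (the budget inequality) available, extremality forces $k= n-1$ and the budget inequality must also be tight. In this case exactly one coordinate, say $i^\star$, has $y_{i^\star}\in(0,1)$, and all other coordinates are integral, which is precisely the definition of semi-integrality.

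The only thing that needs a brief sanity check is that the argument correctly accounts for the coefficients $\tip_i/\tix_i$ in the budget constraint: we need the budget row to be linearly independent from the tight box rows, but this is automatic in the case $k=n-1$ because the budget row is a single row being added to a system whose active coordinates already span an $(n-1)$-dimensional coordinate subspace, and any extreme point determined by these constraints must have a unique solution in the remaining free coordinate (which holds as long as $\tip_{i^\star}/\tix_{i^\star}\ne 0$; the degenerate case $\tip_{i^\star}=0$ can be handled separately by noting that then the free coordinate can be pushed to $0$ or $1$ without leaving $\calP$, contradicting extremality unless $y_{i^\star}$ is already integral). I do not expect any real obstacle here; the proof is essentially a one-line counting argument about tight constraints at an extreme point.
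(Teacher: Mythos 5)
Your proof is correct and is essentially the paper's argument made rigorous: the paper sketches the same fact geometrically (an extreme point of $\calP$ is either a hypercube vertex or lies on the intersection of the budget hyperplane with a hypercube edge, and adjacent hypercube vertices differ in one coordinate), while you phrase it as the standard rank count of tight constraints. Your handling of the degenerate case $\tip_{i^\star}/\tix_{i^\star}=0$ is a small extra care the paper omits, but the underlying approach is the same.
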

\begin{proof}
The proof is straight-forward, we give a high-level description and omit the formal details.
The idea is to see $\calP$ as the intersection of a hypercube and a hyperplane; the hypercube is $[0,1]^n$ and the hyperplane is $\sum_{i\in S} y_i \cdot \frac{\tip_i}{\tix_i} \leq B$.
 So, any extreme point of $\calP$ either is an extreme point of the hypercube (which is integral), or is on the intersection of the hyperplane and an edge of the hypercube. In the latter case, it can be seen that such a point has at most one fractional entry, since any two adjacent vertices on the hypercube are different in at most one entry.
 \end{proof}

\paragraph{Outline of the Rounding Procedure}
The procedure accepts the fractional allocation constructed by the mechanism, i.e. $\tix=(\tix_1,\ldots,\tix_n)$, and then writes it as a convex combination of extreme points of $\calP$. Then, it samples an extreme point from the convex combination, where each point is selected with probability proportional to its coefficients in the convex combination. Finally, it rounds the sampled extreme point (which is a semi-integral point) to an integral point. 
We use the following fact about semi-integral points for implementing the last step:

\begin{fact}
A semi-integral point $y\in [0,1]^n$ can be written as the convex combination of two integral points which differ in at most one entry, i.e. $y=\alpha y' + (1-\alpha) y''$ where $y',y'' \in \{0,1\}^n$ are integral points which differ in at most one entry.
\end{fact}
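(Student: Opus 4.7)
The plan is to give an explicit constructive proof by cases on whether $y$ has a fractional coordinate at all. If every entry of $y$ is already integral, then $y\in\{0,1\}^n$ and we may take $y'=y''=y$ with $\alpha=1$; these two points differ in zero entries, which is $\le 1$.

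Otherwise, since $y$ is semi-integral, there is exactly one index $j\in\{1,\dots,n\}$ with $0<y_j<1$, and $y_i\in\{0,1\}$ for all $i\ne j$. I would then define $y',y''\in\{0,1\}^n$ by $y'_i=y''_i=y_i$ for all $i\ne j$, together with $y'_j=1$ and $y''_j=0$. By construction $y'$ and $y''$ differ only in coordinate $j$. Setting $\alpha=y_j\in(0,1)$, it is immediate to check coordinate-by-coordinate that
\begin{align*}
\alpha y'_i+(1-\alpha)y''_i = y_j\cdot y_i+(1-y_j)\cdot y_i = y_i \quad\text{for } i\ne j,
\end{align*}
and at the special index $\alpha y'_j+(1-\alpha)y''_j = y_j\cdot 1+(1-y_j)\cdot 0 = y_j$, so $y=\alpha y'+(1-\alpha)y''$ as required.

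There is essentially no obstacle here; the only thing to be careful about is the degenerate case of a fully integral $y$, which is handled trivially as above. The construction is unique up to that degenerate case, and in fact it simply says geometrically that a point sitting on an edge of the cube $[0,1]^n$ (that is, a point interior to a one-dimensional face) is the midpoint-weighted convex combination of the two endpoints of that edge, which is how semi-integral extreme points of $\calP$ can be decomposed into integral allocations differing in a single seller's item.
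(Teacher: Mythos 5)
Your proof is correct and is precisely the elementary construction the paper has in mind (the paper states this as a Fact without proof): take the unique fractional coordinate $j$, round it up and down to get $y'$ and $y''$, and set $\alpha=y_j$. Nothing is missing, and you correctly handle the degenerate fully-integral case.
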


Now we are ready to formally state our main rounding procedure.
\begin{procedure}
\SetKwInOut{Input}{input}
\SetKwInOut{Output}{output}
\Input{Allocation vector $\tix$ and Payment vector $\tip$}
\BlankLine
Find extreme points $z^1,\ldots, z^K \in \calP$ and positive numbers $\lambda_1,\ldots,\lambda_K$ summing up to one such that $\tix = \sum_{i=1}^K \lambda_i \cdot z^i$ \;
Sample a single point from $\{z^1,\ldots, z^K\}$ where $z^i$ is selected with probability $\lambda_i$; Let $z$ denote the sampled point\;
Write $z$ as the convex combination of two integral points $x^1, x^2$ such that $x_1,x_2$ differ in at most one entry, i.e. suppose $z=\alpha\cdot x^1 + (1-\alpha) x^2$\;
With probability $\alpha$, let $x=x^1$, otherwise, let $x=x^2$\;
Announce $x$ as the final allocation and pay $x_i \cdot \tip_i/\tix_i$ to seller $i$.
\caption{EfficientRounding() {\label{alg.rounding2}}}
\end{procedure}

\begin{lemma}
Procedure \ref{alg.rounding2} satisfies Properties \ref{e.p1}, \ref{e.p2} and \ref{e.p3}.
\end{lemma}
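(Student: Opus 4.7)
The plan is to verify the three properties in order, with Property~3 being the main technical step that relies on the structure of the polytope $\calP$ and of the decomposition in Procedure~\ref{alg.rounding2}.

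First, for Property~1, I will use iterated expectation over the two stages of randomization in the procedure. Conditional on the sampled semi-integral extreme point $z = z^i$, the output satisfies $\mathbb{E}[x \mid z = z^i] = \alpha_i\, x^1 + (1-\alpha_i)\, x^2 = z^i$ by the choice of the rounding probability $\alpha_i$. Averaging over $i$ yields $\mathbb{E}[x] = \sum_i \lambda_i z^i = \tix$, and since every $x_k$ is $\{0,1\}$-valued, $\Pr[x_k = 1] = \mathbb{E}[x_k] = \tix_k$. Property~2 is immediate by inspection of the last line of the procedure: seller $i$ receives exactly $x_i \cdot \tip_i/\tix_i$, which equals $\tip_i/\tix_i$ when $x_i = 1$ and $0$ otherwise.

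For Property~3 I first need the convex decomposition in step~1 of the procedure to actually exist, i.e. $\tix \in \calP$. This is immediate because $\sum_i \tix_i \cdot \tip_i/\tix_i = \sum_i \tip_i \leq B$ by budget-feasibility of Mechanism~\ref{alg.truthful}. The sampled extreme point $z \in \calP$ therefore satisfies $\sum_i z_i\, \tip_i/\tix_i \leq B$. By semi-integrality, $z$ has at most one fractional coordinate, call it $j$, and by construction the final integer point $x$ agrees with $z$ on every other coordinate, with $x_j \in \{0,1\}$. Hence
\begin{align*}
\sum_{i \in S} p_i \;=\; \sum_{i \in S} x_i\,\tip_i/\tix_i \;=\; \sum_{i \in S} z_i\,\tip_i/\tix_i \;+\; (x_j - z_j)\,\tip_j/\tix_j \;\leq\; B + \tip_j/\tix_j.
\end{align*}

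The remaining, and main, obstacle is to bound the single-coordinate slack $\tip_j/\tix_j$ by $\cmax$. Here I would invoke the explicit form of the Myerson payments in Mechanism~\ref{alg.truthful}, writing $\tip_j/\tix_j = u_j \cdot Q_{r_j}(c_j/u_j)/f_{r_j}(c_j/u_j)$, and use monotonicity of $Q_r$ and $f_r$ in both arguments together with the stopping condition of Mechanism~\ref{alg.uniformratio} (which pins $r_j$ to the largest budget-feasible rate when $c_j$ is replaced by $0$) to argue that the effective per-item payment for any one seller is controlled by her reported cost, and hence by $\cmax$. This quantitative bound is the technically delicate piece; once in hand, Property~3 follows, and as the footnote to Property~3 observes, in a large market the slack $\cmax$ is negligible compared to $B$ so a tiny shrinkage of the budget recovers strict feasibility without affecting the asymptotic approximation ratio.
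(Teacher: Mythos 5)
Your treatment of Properties~\ref{e.p1} and \ref{e.p2} is correct and matches the paper's, and your setup for Property~\ref{e.p3} (showing $\tix\in\calP$, hence the decomposition exists, and isolating the single fractional coordinate $j$) is the right framework. The gap is in the final step. You discard the factor $(x_j-z_j)\le 1-z_j$ and reduce the problem to showing $\tip_j/\tix_j\leq\cmax$ --- but that inequality is false. The quantity $\tip_j/\tix_j$ is the \emph{full} price paid to seller $j$ conditional on winning; for a seller with very small cost it tends to $u_j\cdot Q_r(0)=u_j\, r$, the market-clearing payment rate times her utility, which has nothing to do with $\cmax$. Concretely, take $n$ sellers with unit utilities and costs all equal to $\epsilon$, and a budget $B$ with $\epsilon\ll B/n$: the stopping rate is roughly $r\approx B/n$, so $\tip_j/\tix_j\approx B/n\gg\epsilon=\cmax$, even though the market is arbitrarily large. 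So the plan as stated cannot be completed.

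The paper's proof retains the lost factor: it bounds the slack by $\tip_j\cdot\frac{1-\tix_j}{\tix_j}$ rather than $\tip_j/\tix_j$, and it is precisely the $(1-\tix_j)$ term that rescues the argument. The two ingredients are $\tip_j/\tix_j\leq r\,u_j$ (the Myerson payment area under $f_r$ fits in a rectangle of width $r$ and height $\tix_j$) and $r\leq\frac{c_j/u_j}{1-\tix_j}$ (from concavity of $f_r$ together with $f_r(0)=1$); multiplying gives $\tip_j\cdot\frac{1-\tix_j}{\tix_j}\leq c_j\leq\cmax$. Intuitively, the surcharge from rounding seller $j$ up to a full unit is only the \emph{marginal} payment for the unbought fraction $1-\tix_j$, not her entire payment, and that marginal piece is what is comparable to $c_j$. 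Your appeal to ``monotonicity of $Q_r$ and $f_r$'' cannot substitute for this, since no monotonicity argument can bound $r\,u_j$ by $\cmax$.
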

\begin{proof}
It is straight-forward to verify that Properties  \ref{e.p1} and \ref{e.p2} hold; for any seller $i\in S$ we have:
\begin{align*}
\bbE [x_i] =& \bbE[\alpha\cdot x^1_i + (1-\alpha)\cdot x^2_i] \\
=& \bbE[z_i] = \bbE \sum_{j=1}^K \lambda_i \cdot z^j_i = \tix_i
\end{align*}
which implies Property \ref{e.p1} since $x_i$ is a binary random variable. Property \ref{e.p2} trivially holds by the construction of Procedure \ref{alg.rounding2}. 

It remains to prove Property \ref{e.p3}. To this end, define $p(y)=  \sum_{i\in S} y_i \cdot {\tip_i}/{\tix_i}$ for any $y\in [0,1]^n$. We prove the claim by showing that $p(x)\leq B+\cmax$. 
Equivalently, we can show that
$p(x^1) \leq B+\cmax$ and
$p(x^2) \leq B+\cmax$. We prove this only for $x^1$, the proof for $x^2$ is identical. The claim is trivial if $x^1=x^2$, since in this case we have $x^1\in \calP$, which means $p(x^1) \leq B$. So suppose $x^1\neq x^2$.
Recall that we have
\begin{align*}
z=\alpha\cdot x^1 + (1-\alpha) x^2,
\end{align*}
where $x^1,x^2$ are two adjacent vertices on the hypercube. Also, recall that any two adjacent vertices on the hypercube are different in exactly one entry, so suppose $x^1,x^2$ are different in entry $j$, i.e. $x^1_j \neq x^2_j$.
Now, we prove the lemma by showing that
\begin{align}
p(x_1)\leq p(z) + \tip_j\cdot\frac{1-\tix_j}{\tix_j}  \leq B + \cmax.\label{eq.key2round}
\end{align}

First verify that the first inequality in \eqref{eq.key2round} holds since $x_1$ and $z$ are only different in their $j$-th entry: if $x^1_j=0$, then $p(x_1)\leq p(z)$; if $x^1_j=1$, then it is straight-forward to verify that $p(x_1) = p(z) + \tip_j(1-\tix_j)/\tix_j$. 
Having that the first inequality holds, \eqref{eq.key2round} is proved if we show that 
\begin{align}
p(z)& \leq B,\label{eq.rkey2}\\
\tip_j(1-\tix_j)/\tix_j &\leq \cmax.  \label{eq.rkey3}
\end{align}
To verify \eqref{eq.rkey2}, just note that $z\in \calP$.
To verify \eqref{eq.rkey3}, recall that $\tix_j=f_r(c_j/u_j)$ and $\tip_j = u_j\cdot Q_{r}(c_j/u_j)$ for some $r>0$; we will prove the following bounds on $r$: 
\begin{align}
\tip_j/\tix_j &\leq  r u_j, \label{eq.key11}\\
r  &\leq  \frac{c_j/u_j}{1-\tix_j} \label{eq.key22}.
\end{align}
Observe that combining \eqref{eq.key11} and \eqref{eq.key22} implies \eqref{eq.rkey3}. So, we are done if we prove \eqref{eq.key11} and \eqref{eq.key22} hold.
To prove \eqref{eq.key11}, it is enough to note that $Q_{r}(c_j/u_j)\leq r\tix_j$, i.e. the area under the curve $f_r$ that represents the payment per unit of utility to seller $j$ fits in a rectangle with width $r$ and height $\tix_j$; this implies $\tip_j/\tix_j \leq r u_j$. It is also straight-forward to verify \eqref{eq.key22} holds
due to the concavity of $f_r$.
\end{proof}

\section{The Optimal Standard Allocation Rule} \label{sec.lb}
Mechanism \ref{alg.truthful} is defined uniquely by the standard allocation rule $f$. Here we provide an alternative proof for showing that our choice of $f(x)=\ln(e-x)$ is optimal, in the sense that Mechanism \ref{alg.truthful} attains the best approximation ratio under this choice. Although this fact is also a consequence of what we already proved in Section \ref{sec.hardness}, we provide a more direct proof here with a prior-free hardness instance.

For simpler analysis, we first ignore the trick that we used for making our mechanism truthful: we work with the non-truthful, but cleaner mechanism \scale{} and we prove that no approximation
ratio better than $1-\frac{1}{e}$ is attainable using any standard allocation rule $f$. In the end, we see that the trick that makes mechanisms truthful only worsens the approximation ratio.
Therefore the optimality result also applies to the family of truthful mechanisms defined by Mechanism \ref{alg.truthful}.

We provide hardness examples with unit utilities, i.e. with $u_i=1$ for all $i\in S$. We also drop the indices from payment functions $P$. This is done because all the payment functions are identical: Note that \scale{} uses the same allocation rule for all the sellers, which means the unit-payment functions are identical. Also, since all the utilities are $1$, then the unit-payment functions are identical to the payment functions, i.e. $P(c)=Q(c)$.
Unless specified otherwise we also assume that the scaling ratio $r^\star$ is $1$. i.e. we have $P(c)=Q(c)=Q_{1}(c)$ and $f(c)=f_1(c)$.

\begin{definition}
Assume that we are given a standard allocation rule $f$. Define the set $S_f\subseteq \mathbb{R}^2$ as follows
$$S_f=\{(f(c), P(c)-c)\mid c\in\mathbb{R}\}$$
where $P(c)$ is the payment rule associated with the allocation rule $f$ (i.e. $P(c)=Q_1(c)$. Also define the set $T_f$ as the downward closure of $S_f$, i.e.
$$T_f=\{(x, y)\in\mathbb{R}^2\mid \exists y'\geq y: (x,y')\in S_f\}$$
\end{definition}

Note that in our definition we simply work with the standard function $f$ and not its scaled variants $f_r$. The following lemma gives us a way to
find out if the approximation ratio attained by the mechanism is worse than a number $\beta$.

\begin{lemma}
\label{opt-lem}
Suppose that the point $(\beta, 0)$ lies in the convex hull of $T_f$. Then the mechanism defined by $f$ has approximation ratio
at most $\beta$.
\end{lemma}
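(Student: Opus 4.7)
The plan is to construct an explicit hardness instance whose ratio approaches $\beta$. By Carath\'eodory's theorem applied in $\mathbb{R}^2$, the hypothesis lets me write
\[
(\beta,0)=\sum_{i=1}^{k}\lambda_i (x_i,y_i),\qquad \lambda_i\geq 0,\ \sum_i\lambda_i=1,\ (x_i,y_i)\in T_f,
\]
with $k\leq 3$. By the definition of $T_f$, for each $i$ there is a cost $c_i$ with $f(c_i)=x_i$ and $P(c_i)-c_i\geq y_i$.

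Next I would build the instance. Fix a large integer $N$ (taking a subsequence so that $N\lambda_i$ is an integer for each $i$; rationality of the $\lambda_i$ can be arranged with negligible loss). Create $N$ sellers with unit utility, of which $N\lambda_i$ have cost $c_i$. Set the budget to
\[
B=N\sum_{i}\lambda_i P(c_i).
\]
With this choice the total payment of Mechanism \scale{} at scaling ratio $r=1$ equals exactly $B$; since $Q_r(x)=x f(x/r)+r\int_{x/r}^\infty f(u)\,du$ is strictly increasing in $r$, the stopping rate is $r^\star=1$. The utility extracted by the mechanism is then
\[
U_f=N\sum_{i}\lambda_i f(c_i)=N\sum_{i}\lambda_i x_i=N\beta.
\]

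The crucial step is to argue that the offline optimum can afford to buy \emph{every} seller, so that $U^\star=N$. Because $\sum_i\lambda_i y_i=0$ and $P(c_i)-c_i\geq y_i$ for each $i$, the total cost of all sellers satisfies
\[
N\sum_i\lambda_i c_i\;\leq\;N\sum_i\lambda_i P(c_i)-N\sum_i\lambda_i y_i\;=\;B,
\]
so buying everything is budget-feasible and yields utility $N$. Therefore $U_f/U^\star=\beta$. Finally the largeness ratio is $\theta=\max_i c_i/B$, which tends to $0$ as $N\to\infty$ since the $c_i$ are fixed while $B$ grows linearly in $N$. Hence the instance sits in the large-market regime and forces the approximation ratio to be at most $\beta$.

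The main obstacle I anticipate is the asymmetry in the definition of $T_f$: we are given inequalities $P(c_i)-c_i\geq y_i$, not equalities, so the budget-feasibility of the ``buy everything'' benchmark needs the downward closure argument above rather than an exact accounting. A minor issue is ensuring integrality of $N\lambda_i$; this is resolved by restricting to rational convex combinations (which are dense in the convex hull) and taking $N$ along a suitable arithmetic progression, which does not affect the limiting ratio.
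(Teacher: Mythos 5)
Your proposal is correct and follows essentially the same route as the paper's proof: decompose $(\beta,0)$ as a convex combination of points of $T_f$, realize each point by a cost $c_i$, build an instance with proportions $\lambda_i$ of unit-utility sellers at cost $c_i$ and budget $B=N\sum_i\lambda_i P(c_i)$ so that the stopping rate is $1$, and use $\sum_i\lambda_i\bigl(P(c_i)-c_i\bigr)\geq 0$ to show the offline optimum can buy everything. The extra care you take with Carath\'eodory, rational weights, and the vanishing largeness ratio are details the paper glosses over but do not change the argument.
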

\begin{proof}
Assume that the point $(\beta, 0)$ can be written as the convex combination of points in $T_f$ in the following way
$$(\beta, 0)=\sum_{i=1}^{n}\alpha_i p_i$$
where $\alpha_i\geq 0$, $\sum_{i=1}^{n}\alpha_i=1$, and $p_1,\dots,p_n\in T_f$. Because of the way $T_f$ is defined, for each $p_i$ one can find a value $c_i$ such that
$p_i=(f(c_i),y_i)$ and $y_i\leq P(c_i)-c_i$. Then this means that
\begin{align}
\sum_{i=1}^{n}\alpha_i f(c_i)=&\beta \label{beta-eq}\\
\sum_{i=1}^{n}\alpha_i (P(c_i)-c_i)\geq& 0 \label{price-eq}
\end{align}

Now let $M$ be a very large number and consider $M$ sellers. For each $i$, let $\alpha_i$ fraction of the sellers price their item at $c_i$. The issue of $\alpha_i M$ not being integral can be easily dealt with, but would unnecessarily complicate the proof, hence we assume $\alpha_i M$ is integral.
Assume that each seller's item is worth one unit to the buyer. Now let the buyer's budget $B$ be
$$B=M\sum_{i=1}^{n}\alpha_i P(c_i)$$

With this definition, we are simply saying that $f$ is fit with respect to our budget, or in other words, the mechanism consumes all of the budget at $r=1$.

Because of \eqref{price-eq}, our budget $B$ is at least as large as the sum of the costs of all items $M\sum_{i=1}^{n}\alpha_ic_i$. Therefore we could buy all of the items and obtain a utility of $M$, if we were not constrained by the mechanism. But because of \eqref{beta-eq}, 
our mechanism obtains a utility of $\beta M$.
Therefore the approximation ratio in this example is $\frac{\beta M}{M}=\beta$.
\end{proof}

It is worth mentioning that the reverse of lemma \ref{opt-lem} is also true under some mild conditions, but in this section we are only concerned with the direction proved.

Building on top of our lemma, we can now prove that $1-\frac{1}{e}$ is the best approximation factor among the mechanisms defined by a standard allocation rule.

\begin{theorem} \label{thm.schardnss}
Given any standard allocation rule $f$, the mechanism \scale{} has approximation ratio at most $1-\frac{1}{e}$.
\end{theorem}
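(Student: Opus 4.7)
The plan is to apply Lemma~\ref{opt-lem}: it suffices to construct non-negative weights $\alpha_1,\ldots,\alpha_n$ summing to $1$ and costs $c_1,\ldots,c_n$ satisfying $\sum_i\alpha_i f(c_i)\le 1-1/e$ and $\sum_i \alpha_i(P(c_i)-c_i)\ge 0$. First I would observe that $P'(c)=cf'(c)\le 0$, so $\phi(c)=P(c)-c$ has derivative $\phi'(c)=cf'(c)-1\le -1$ and is strictly decreasing from $\phi(0)=P(0)\ge 0$ to $\phi(e-1)=-(e-1)$. Hence $\phi$ has a unique root $c^\star\in(0,e-1]$, and I write $a=f(c^\star)$.

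Next I would reduce to a critical regime by exhibiting two simple budget-tight constructions. Placing the full mass on $c=c^\star$ yields $\beta=a$ (the budget is exactly balanced since $P(c^\star)=c^\star$), so this suffices when $a\le 1-1/e$. The two-point combination at $(c_1,c_2)=(0,e-1)$ with weight $\lambda=(e-1)/(P(0)+e-1)$ on $c_1=0$ yields $\beta=\lambda$, which is at most $1-1/e$ precisely when $P(0)\ge 1$. What remains is the problematic regime in which $a>1-1/e$ and $P(0)<1$ both hold.

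For this regime I would consider the one-parameter family of two-point combinations $(0,c_2)$ with $c_2\in(c^\star,e-1)$ and budget-tight weights, whose axis crossing is
\[
\beta(c_2) \;=\; \frac{P(0)\,f(c_2) + c_2 - P(c_2)}{P(0) + c_2 - P(c_2)}.
\]
An algebraic rearrangement shows that $\beta(c_2)\le 1-1/e$ is equivalent to $\Psi(c_2)\le 0$, where
\[
\Psi(c) \;=\; (1-f(c))\,(c-eP(0)) \;+\; \int_0^{c} f(y)\,dy.
\]
Using $P(c^\star)=c^\star$ (whence $\int_{c^\star}^{e-1} f = c^\star(1-a)$) together with $P(0)=\int_0^{e-1} f$, I compute $\Psi(c^\star)=P(0)(ea-(e-1))>0$ and $\Psi(e-1)=(e-1)(1-P(0))>0$, so both endpoints are strictly positive.

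The final and hardest step is to show that $\Psi$ nonetheless dips non-positive at some interior $c\in(c^\star,e-1)$ throughout the critical regime. Differentiating gives $\Psi'(c)=1-f'(c)(c-eP(0))$, so $\Psi$ grows with slope at least $1$ on $\{c>eP(0)\}$ and can decrease only on $\{c<eP(0)\}$. Combining this structural fact with the identity $P(0)=\int_0^{c^\star} f + c^\star(1-a)\in[c^\star,\,c^\star(2-a)]$ and the monotonicity of $f$, my plan is to show that in the critical regime the decreasing portion of $\Psi$ on the subinterval $(c^\star,\min(eP(0),e-1))$ is forced to cross zero, by analyzing the worst-case $f$ compatible with the constraints $a>1-1/e$, $P(0)<1$. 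This case analysis is the main obstacle; once it is carried out, Lemma~\ref{opt-lem} yields the bound $\mathcal{R}_f\le 1-1/e$ for Mechanism~\scale{}, and the extension to Mechanism~\ref{alg.truthful} is immediate because the truthfulness trick only weakly decreases each payment and therefore preserves the upper bound on the approximation ratio in large markets.
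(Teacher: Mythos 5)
Your setup is sound: Lemma~\ref{opt-lem} is the right tool, the payment identity $P'(c)=cf'(c)\le 0$ and the resulting monotonicity of $\phi(c)=P(c)-c$ are correct, the two easy cases ($f(c^\star)\le 1-1/e$, resp.\ $P(0)\ge 1$) are correctly dispatched, and your algebra for $\Psi$, $\Psi(c^\star)=P(0)(ea-(e-1))$ and $\Psi(e-1)=(e-1)(1-P(0))$ checks out. But the proof stops exactly where the theorem lives. In the critical regime you have shown that $\Psi$ is \emph{strictly positive at both endpoints} of $(c^\star,e-1)$, so your certificate must come from an interior dip of $\Psi$ below zero, and you explicitly defer that step (``this case analysis is the main obstacle; once it is carried out\dots''). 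No argument is given that such a dip must occur; the observation that $\Psi'\ge 1$ on $\{c>eP(0)\}$ only tells you where $\Psi$ \emph{cannot} decrease, not that it \emph{must} decrease enough elsewhere. Worse, it is not clear that your restricted family can always succeed: the witnessing chord of the concave envelope of $S_f$ at its zero need not pass through the point $(1,P(0))$ corresponding to $c=0$ (moving the right endpoint of a chord to $(1,P(0))$ raises its $y$-coordinate, which helps, but also pushes its $x$-coordinate to $1$, which hurts the intercept), so a complete proof along these lines may need general two-point combinations $(c_1,c_2)$ with $c_1>0$, which your $\Psi$ does not capture.

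For comparison, the paper sidesteps this entirely by arguing via separation rather than explicit construction: if $(1-\tfrac1e+\epsilon,0)\notin\mathrm{conv}(T_f)$, there is a separating line $y=s(x-\beta)$ with $s>0$, giving $P(c)-c\le s(f(c)-\beta)$ for all $c$. It then compares $f$ against the extremal rule $f_0(x)=\ln(\max(e-tx,1))$ with $s=e/t$, which satisfies the exact identity $P_0(c)-c=\tfrac{e}{t}\bigl(f_0(c)-(1-\tfrac1e)\bigr)$; subtracting the identity from the inequality and examining $c^*=\sup\{c\mid f(c)<f_0(c)\}$ forces $\int_{c^*}^{\infty}(f-f_0)\le -s\epsilon$ while the integrand is nonnegative there --- a contradiction. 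That route handles all convex combinations at once and is what your argument is missing. Either carry out your interior-dip analysis in full (and justify the restriction to pairs anchored at $c=0$), or switch to the separation argument.
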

\begin{proof}
To prove this, we simply need to show that the point $(1-\frac{1}{e}+\epsilon, 0)$ is inside the convex hull of $T_f$ for every $\epsilon>0$.

Suppose the contrary is true. So for some $\epsilon>0$, the point $(\beta, 0)=(1-\frac{1}{e}+\epsilon, 0)$ is not inside the convex hull of $T_f$. Since the point is not
inside the convex hull, there is a line that separates them. We can further assume that this line passes through the point and is therefore given by the equation $y=s(x-\beta)$ for some slope $s$. Since $T_f$ is downward closed with respect
to the $y$ coordinate, $T_f$ must fall completely below this line. Since $S_f\subseteq T_f$, this means that $S_f$ must also fall completely below this line which means that for all $c$, we have
\begin{equation}
\label{imp-ineq}
P(c)-c\leq s(f(c)-\beta)
\end{equation}

We can safely assume that $s>0$. If this was not true, then for $c=0$, we would necessarily have $f(c)<\beta$. But this already shows that the mechanism never buys more than a fraction of $\beta$ of any item, and therefore the approximation factor cannot be larger than $\beta$.

Now consider the function $f_0(x)=\ln(\max(e-tx, 1))$ for some given $t>0$. For this alloaction function let $P_0$ be the Myerson's payment rule, i.e. $P_0(x)=xf_0(x)+\int_{x}^{\infty}f(x)dx$ (the shaded area in figure \ref{fig.payment}). One can easily verify that for $0\leq c\leq \frac{e-1}{t}$
$$\int_{c}^{\infty}f_0(x)dx=c-\frac{e-1}{t}-(c-\frac{e}{t})\ln(e-tc)$$
and therefore
\begin{equation}
\label{imp-eq}
P_0(c)-c=\frac{e}{t}\ln(e-tc)-\frac{e-1}{t}=\frac{e}{t}(f_0(c)-(1-\frac{1}{e}))
\end{equation}

If we choose $t$ so that $s=\frac{e}{t}$, then equation \ref{imp-eq} becomes very similar to the inequality \ref{imp-ineq}. If we subtract equation \ref{imp-eq} from the inequality \ref{imp-ineq}, then we get
\begin{equation}
\label{tres-ineq}
P(c)-P_0(c)\leq s(f(c)-f_0(c)-\epsilon)
\end{equation}

This inequality holds for $0\leq c\leq \frac{e-1}{t}=s(1-\frac{1}{e})$. For $c=0$, we get that $P(0)-P_0(0)\leq s(f(0)-1-\epsilon)<0$. Therefore $P(0)<P_0(0)$.

Let $c^*=\sup\{c\mid f(c)< f_0(c)\}$. This supremum is strictly greater than $0$, because otherwise we would have $f(x)\geq f_0(x)$ for all $x>0$, which would mean that $P(0)\geq P_0(0)$, which is a contradiction. This supremum is also finite because for $c\geq \frac{e-1}{t}$ we have $f_0(c)=0$ and $f(c)\geq 0$.

Because $c^*$ is the supremum, we can find a sequence of points $c_1, c_2, \dots$, such that $f(c_i)<f_0(c_i)$ for all $i$, and $c_i\to c^*$. For each such $c_i$, we have $c_i\leq c^*\leq \frac{e-1}{t}$, and therefore the inequality \ref{tres-ineq} holds. This means that
$$c_i(f(c_i)-f_0(c_i))+\int_{c_i}^{\infty}(f(x)-f_0(x))dx\leq s(f(c_i)-f_0(c_i)-\epsilon)$$

By rearranging the terms we get
$$\int_{c_i}^{\infty}(f(x)-f_0(x))dx\leq (s-c_i)(f(c_i)-f_0(c_i))-s\epsilon\leq -s\epsilon$$

Here we used the fact that $c_i\leq c^*\leq s(1-\frac{1}{e})<s$ and $f(c_i)-f_0(c_i)<0$.

But now as we take the limit as $i\to \infty$, we get
$$\int_{c^*}^{\infty}(f(x)-f_0(x))dx\leq -s\epsilon$$
which is a contradiction because for $x>c^*$, we have $f(x)-f_0(x)\geq 0$, and so the integral cannot be negative.
\end{proof}

\begin{theorem}
  For any standard allocation rule $f$, Mechanism \ref{alg.truthful} has approximation ratio at most $1-\frac{1}{e}$.
\end{theorem}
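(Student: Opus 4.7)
The plan is to reduce to Theorem \ref{thm.schardnss} by arguing that on every instance, Mechanism \ref{alg.truthful} extracts no more utility than Mechanism \scale{} does. Since the benchmark $\uopt$ depends only on the instance and not on the mechanism, this pointwise comparison of utilities immediately transfers any upper bound on $\mathcal{R}$ from \scale{} to \ref{alg.truthful}.

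The key ingredient is Lemma \ref{lem.rdec}, which guarantees $r_i \leq \rstar$ for every seller $i\in S$. Because $f$ is decreasing and $f_r(x)=f(x/r)$, for any fixed $x\geq 0$ the value $f_r(x)$ is a nondecreasing function of $r$: a smaller $r$ sends $x/r$ further to the right, where $f$ takes a smaller value. Hence $f_{r_i}(c_i/u_i)\leq f_{\rstar}(c_i/u_i)$ for every $i$, which says that the fractional allocation chosen by Mechanism \ref{alg.truthful} is pointwise dominated by the allocation chosen by Mechanism \scale{}. Multiplying by $u_i$ and summing yields
\[
\mathcal{U}_{\ref{alg.truthful}}(I)\;=\;\sum_{i\in S} u_i\, f_{r_i}(c_i/u_i)\;\leq\;\sum_{i\in S} u_i\, f_{\rstar}(c_i/u_i)\;=\;\mathcal{U}_{\scale{}}(I),
\]
for every instance $I$, which gives $\mathcal{R}_f^{(\ref{alg.truthful})}\leq \mathcal{R}_f^{(\scale{})}$.

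Now invoke Theorem \ref{thm.schardnss}: for any fixed standard allocation rule $f$ and any $\epsilon>0$, the construction underlying Lemma \ref{opt-lem} produces a hard instance consisting of $M$ unit-utility sellers with a few discrete cost values, on which Mechanism \scale{} achieves ratio at most $1-1/e+\epsilon$. By the pointwise comparison above, Mechanism \ref{alg.truthful} achieves a ratio no better than this on the same instance. Letting $\epsilon \to 0$ (and $M\to\infty$) gives $\mathcal{R}_f\leq 1-1/e$, as required.

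The only thing one has to double-check is that the hard instance lives inside the large-market regime for which Mechanism \ref{alg.truthful} is analyzed; this is the one conceivable obstacle, and it dissolves quickly: in the instance the maximum cost $\cmax$ is bounded by a constant depending only on $f$, while $B=\Theta(M)$, so $\theta=\cmax/B\to 0$ as $M\to\infty$. No further analytic machinery beyond Lemma \ref{lem.rdec} and Theorem \ref{thm.schardnss} is required.
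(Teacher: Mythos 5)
Your proof is correct and follows essentially the same route as the paper's: the paper's own proof is precisely the two-sentence observation that Mechanism \ref{alg.truthful} never obtains more utility than Mechanism \scale{} (which you justify carefully via Lemma \ref{lem.rdec} and the monotonicity of $f_r$ in $r$), combined with Theorem \ref{thm.schardnss}. Your extra check that the hard instances of Lemma \ref{opt-lem} lie in the large-market regime ($\cmax$ constant, $B=\Theta(M)$) is a worthwhile detail the paper leaves implicit.
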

\begin{proof}
  Mechanism \ref{alg.truthful} always achieves a worse or equal utility compared to \scale. Therefore its approximation ratio is at most $1-\frac{1}{e}$ by Theorem \ref{thm.schardnss}.
\end{proof}

\section{Submodular Utility Functions} \label{sec.submod}

In this section, we present truthful mechanisms for the knapsack problem when the utility function for the buyer is a {\em monotone} submodular function rather than an additive function. More precisely, a monotone submodular function $F:2^S\rightarrow \mathbb{R}^+$ defines the utility that the buyer derives from buying a subset of $S$. The buyers problem then becomes selecting a subset $\sstar\subset S$ such that $\sstar$ is budget feasible, i.e. $c(\sstar)\leq B$, and $\sstar$ has the highest utility, $F(\sstar)$, among all the budget feasible subsets. (recall that $c(S^\star)$ denotes $\sum_{i\in S^\star}c_i$)
 
This problem has first been studied in \cite{2010-focs_singer_budget-feasible-mechanisms} for arbitrary markets and a $0.0089$-approximation is presented for it. Later, \cite{2011-soda_improved_budget_feasible} improved this result by giving an exponential-time deterministic mechanism with approximation ratio $0.119$ and a polynomial-time randomized mechanism with approximation ratio $0.126$.

\paragraph{Our Results}
In this section, we focus on studying this problem in {\em large markets} where each individual can not significantly affect the market (see Section \ref{sec.submodprel} for a formal definition). Under this assumption, we design a deterministic mechanism which has approximation ratio $\frac{1}{2}$, but has an exponential running time. 
Later, we will see that the exponential running time is solely due to the computational difficulty of solving the knapsack problem for submodular functions. 
In fact, our mechanism is also a polynomial-time $(\gamma^2/2)$-approximation when it has access to a $\gamma$-approximation oracle for solving the knapsack problem (see Section \ref{sec.exptime}). To the extent of our knowledge, the best existing approximation oracle has $\gamma=1-1/e$ due to \cite{sviridenko}; this provides us a mechanism with approximation ratio $\gamma^2/2\approx 0.2$. 

We take a step further and improve this result by presenting a deterministic polynomial-time $\frac{1}{3}$-approximation mechanism in Section \ref{sec.polytime}. 
This mechanism, although using a greedy optimization oracle with $\gamma=1-1/e$, has approximation ratio equal to $\frac{1}{3}$ (rather than $\gamma^2/2\approx 0.2$).   

\paragraph{Oneway-Truthfulness}
All of the mechanisms that we have are truthful, however, we first present a simpler version of them which are not fully truthful but satisfy truthfulness in a weaker form, which we call {\em oneway-truthfulness}. Briefly, by this property, players only have incentive to report costs lower than their true cost.
 This notion is formally defined in Section \ref{sec.onewaydef}.
 \\
 In Section \ref{sec.o2t}, we convert our oneway-truthful mechanisms to (fully) truthful mechanisms only by changing the payment rule.
It is worth pointing out that analyzing the performance ratio of oneway-truthful mechanisms is not much different than analyzing truthful mechanisms: Since the cost of optimum solution may only decrease if players report lower costs, then any $\alpha$-approximate solution for the reported instance is also an $\alpha$-approximate solution for the original instance.

\subsection{Preliminaries} \label{sec.submodprel}
In this section, first we state a a few basic definitions. Then, we formally define the large market assumption and oneway-truthfulness. Finally, we state a few definitions regarding submodular functions which are used in our mechanisms. 

\subsection{Basic Definitions}
Similar to before, we say a subset of sellers $T\subseteq S$ is budget feasible if $c(T)\leq B$. Utility of the subset $T$ is defined by $F(T)$.
The {\em optimum subset}, $\sstar$, is the budget feasible subset with the highest utility. We call $F(\sstar)$ the {\em optimum utility} and also denote it by $F^\star$.   

\subsubsection{The Large Market Assumption}
Our large market assumption here is almost identical to the alternative large market assumption that was discussed in Section \ref{sec.linearlargemarket}. Intuitively, it says that no individual affects the (optimum solution of the) market significantly. This assumption is formally defined below.

Let $\umax =\max_{s\in S}{F(\{s\})}$ and $U^*$ be the total utility of the optimum solution (i.e. the maximum achievable utility when the costs are known). This large market assumption states that
\begin{definition}
We say that a market is {\em large} if $\umax \ll U^*$.
\end{definition}
In other words, we define the largeness ratio of the market to be $\theta=\frac{\umax}{U^\star}$ and analyze our mechanisms for when $\theta\to 0$.



\subsubsection{Oneway-Truthfulness} \label{sec.onewaydef}
Think of a reverse auction with a set of sellers $S$ where each seller $i\in S$ has a private cost $c_i$. 
In a truthful mechanism, no seller wants to report a fake cost regardless of what others do. In a oneway-truthful mechanism, no seller wants to report a cost higher than its true cost regardless of what others do. 

For clarification, we first define the notion of cost vector briefly: when we say a cost vector $d$, we mean a vector which has an entry $d_i$ corresponding to any seller $i$, where $d_i$ represents the cost associated with seller $i$.
Now we formally define the notion of oneway-truthfulness as follows:
\begin{definition}
A mechanism $\mathcal{M}$ is {\em oneway-truthful} if, for any seller $i\in S$ and any cost vector $d$ for which $d_i>c_i$, we have:
\begin{align*}
u_i({c_i}, d_{-i}) \geq u_i(d)
\end{align*} 
where $d_{-i}$ denotes any cost vector corresponding to the rest of players except $i$ and $u_i(\cdot)$ denotes the utility of player $i$.
\end{definition} 

\subsubsection{Submodular Functions}
Given the submodular function $F:2^S\rightarrow \mathbb{R}^+$, we define an ordering of the elements of $S$ with respect to $F$, which we call the {\em greedy sequence} and denote it by $\chi(F)=\langle x_1,\ldots,x_n\rangle$. For simplicity in the definition, we first define an auxiliary notion as follows: let $\chi_i=\cup_{j=1}^i \{x_i\}$ for all $i$, and let $\chi_0=\emptyset$. 
The sequence is constructed such that 
\begin{align*}
x_i = \argmax_{s\in S\backslash{\chi_{i-1}}} F(\chi_{i-1}\cup\{s\})
\end{align*}
for all positive $i\leq n$. 

It is easy to verify that $\chi(F)$ can be constructed in polynomial time by finding the values of $x_1,\ldots,x_n$ one by one in the order that they appear in $\chi(F)$.
After constructing of the greedy sequence, define $\partial_i=F(\chi_i)-F(\chi_{i-1})$ for all $i\leq n$.

\subsection{The Exp-Time Mechanism} \label{sec.exptime}
In this Section, we present an extremely simple mechanism which we call the {\em Oracle} Mechanism.
Given a submodular function $F:2^{S}\rightarrow \mathbb{R}^+$, the mechanism first finds the optimal budget feasible subset, i.e. the subset $S^\star\subseteq S$ such that $S^\star$ is budget feasible and has the highest utility among all the budget feasible subsets. Let $F^{\star}=F(\sstar)$ and $\rstar=\frac{B}{F^\star}$. We also call $F^\star, \rstar$ respectively the {\em optimum utility} and the {\em optimum cost per utility rate}. We also call $\rstar$ the {\em optimum rate} when there is no risk of confusion.

\paragraph{Winner Selection}
The Mechanism constructs the sequence $\chi(F)$ and chooses the largest integer $k$ such that $F\left(\chi_k\right) \leq F^\star/2$. Then, it reports $\chi_k$ as the set of winners. 

\paragraph{The Payment Rule} 
For simplicity, assume that the winners are indexed from $1,\ldots,k$.
The Payment to winner $i$ is equal to $2 r_i \cdot \partial_i$, where $r_i$ is the optimum cost per utility rate for the instance in which seller $i$ is removed from the set of sellers, $S$. 
In other words, think of an auxiliary instance in which we are given budget $B$ and the cost of every seller is the same as the original instance except that $c_i=\infty$. Then, $r_i$ is the optimum cost per utility rate in this instance.

\begin{algorithm}
\LinesNotNumbered
\SetKwInOut{Input}{input}
\SetKwInOut{Output}{output}
\Input{Submodular utility function $F$, Budget $B$}
\BlankLine
Sellers report their costs\;
Compute the optimum utility $F^\star$ and the optimum cost per utility rate $\rstar$\;
Construct the sequence $\chi(F)$ \;
Find the largest integer $k$ such that $F(\chi_k)\leq F^\star/2$\;
Announce $\chi_k$ as the set of winners\;
\ForEach{$i\in \chi_k$}{
	$S\leftarrow S\backslash \{i\}$\;
	Let $r_i$ be the optimum cost per utility rate in the current instance\;
	Pay $2 r_i\cdot \partial_i$ to seller $i$\;
	$S\leftarrow S\cup \{i\}$\;
}
\caption{Oracle Mechanism{\label{alg.oracle}}}
\end{algorithm}

Below we prove that the Oracle Mechanism is individually rational, oneway-truthful, and it has approximation ratio $\frac{1}{2}$ in large markets. Also, it is {\em almost budget feasible}, i.e. we can show that the total sum of its payments is at most $B+o(B)$. So, all we need for having a strictly budget feasible mechanism is starting with a slightly decreased budget. We will see that this does not affect the approximation ratio of the mechanism asymptotically due to the large market assumption. The formal proof for this is deferred to Section \ref{sec.strictbf}.

\paragraph{Simplifying Assumption} Through out the analysis, w.l.o.g. we assume that the sellers appear in the greedy sequence in an increasing order, i.e. $x_i=i$ for all $i\in S$. 

\begin{lemma}
The Oracle Mechanism is individually rational. 
\end{lemma}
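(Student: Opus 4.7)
To prove individual rationality, my goal is to show that every winner $i \in \chi_k$ receives a payment $2 r_i \partial_i \geq c_i$. Unpacking the definitions, $r_i = B/F(S^\star_{-i})$ is the optimum cost-per-utility rate in the market with $i$ removed, and $\partial_i$ is the marginal utility of $i$ when added to $\chi_{i-1}$, so the inequality $c_i \leq 2 r_i \partial_i$ is equivalent to saying that the effective cost-rate $c_i/\partial_i$ of seller $i$ is at most twice the benchmark rate $r_i$ achievable without her.

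The plan proceeds in two main steps. First, I would exploit the greedy property of $\chi(F)$: because $i$ is picked at position $i$, we have $\partial_i \geq F(\{s\}\mid\chi_{i-1})$ for every candidate $s \notin \chi_{i-1}$. Summing this over $s \in S^\star_{-i}\setminus \chi_{i-1}$ and applying submodularity gives $|S^\star_{-i}\setminus\chi_{i-1}|\cdot \partial_i \geq F(S^\star_{-i}) - F(\chi_{i-1})$. Second, since $i$ is a winner, $F(\chi_{i-1}) \leq F(\chi_k) \leq F^\star/2$, and in the large-market regime $F(S^\star_{-i})$ differs from $F^\star$ by at most $\umax = o(F^\star)$, so the right-hand side is at least roughly $F^\star/2 \approx F(S^\star_{-i})/2$. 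Combining this lower bound on $\partial_i$ with the budget constraint $c(S^\star_{-i}) \leq B$ and the definition $r_i = B/F(S^\star_{-i})$, I would deduce that the average cost-rate in $S^\star_{-i}$ is at most $2 r_i$, and then transfer this bound to seller $i$ via the greedy ordering.

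The transfer in the last step is where the main obstacle lies. Since the greedy sequence $\chi(F)$ is built using only $F$ and does not see costs, the cost $c_i$ of the winner $i$ itself is not directly constrained by any of the above inequalities. I expect the argument to close this gap by showing that $i$'s position in the greedy, together with the threshold $F(\chi_k)\leq F^\star/2$, forces $c_i/\partial_i$ to lie below the average cost-rate of $S^\star_{-i}$ up to the factor of $2$ built into the payment. In other words, the constants in the mechanism (the threshold $F^\star/2$ for defining $\chi_k$ and the multiplier $2$ in $2r_i\partial_i$) are chosen precisely to absorb the slack between the utility-only greedy and the cost-aware optimum $S^\star_{-i}$. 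Verifying this slack analysis rigorously, including the additive $\umax$-error that vanishes in the large-market limit, is the most delicate part of the proof.
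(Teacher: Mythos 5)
There is a genuine gap, and you have located it yourself: the step that ``transfers'' a bound on the average cost-rate of $S^\star_{-i}$ to the individual rate $c_i/\partial_i$ of the winner $i$ is exactly the content of the lemma, and your proposal leaves it as a hope rather than an argument. The averaging inequality $c(S^\star_{-i})/F(S^\star_{-i}) \le B/F(S^\star_{-i}) = r_i$ says nothing about any particular seller's rate, and the greedy inequality $|S^\star_{-i}\setminus\chi_{i-1}|\cdot\partial_i \ge F(S^\star_{-i})-F(\chi_{i-1})$ bounds $\partial_i$ only in terms of a cardinality, which never connects to costs or to the budget. Moreover, the premise you flag as the obstacle --- that the greedy sequence ``does not see costs'' --- cannot be the intended reading: if $\chi(F)$ were truly cost-oblivious, individual rationality would simply be false (a seller with huge cost and large marginal utility would win but be paid a bounded amount). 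The paper's proof implicitly uses that the sequence is ordered by increasing cost-per-marginal-utility, i.e.\ $c_i/\partial_i \le c_j/\partial_j$ for $i\le j$, which is the standard proportional-share greedy for submodular budget-feasible mechanisms.

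The missing idea is a contradiction argument applied to the \emph{last} winner $k$ (the ordering reduces the claim for all winners to this one). The paper first shows $\rstar\le r_i$ (removing a seller can only lower the optimum utility, and $r_iF^\star_i=\rstar F^\star=B$), so it suffices to show $c_k\le 2\rstar\partial_k$. Suppose not. By monotonicity and the stopping rule $F(\chi_k)\le F^\star/2$, the optimum still needs $F(\sstar\cup\chi_k)-F(\chi_k)\ge F^\star/2$ additional utility beyond $\chi_k$; by the greedy ordering and submodularity, every extra unit of utility beyond $\chi_k$ costs at least $c_k/\partial_k>2\rstar$. Hence $c(\sstar)\ge c(\sstar\cup\chi_k)-c(\chi_k) > 2\rstar\cdot F^\star/2 = B$, contradicting budget feasibility of $\sstar$. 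Note also that this argument is exact and does not invoke the large-market assumption, whereas your sketch needlessly brings in the $\umax=o(F^\star)$ slack; the factor $2$ in the payment is absorbed entirely by the threshold $F(\chi_k)\le F^\star/2$, not by a vanishing error term.
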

\begin{proof}
We show that $c_{i}\leq 2 \rstar\cdot  \partial_i$ and $\rstar \leq r_i$. These two imply $c_{i}\leq 2 r_i \cdot \partial_i$ which is individual rationality. 
First we prove $\rstar \leq r_i$ as follows. Let $F^\star_i$ denote the optimum utility when seller $i$ is removed. Clearly, we have $F^\star_i\leq F^\star$. This just means $\rstar \leq r_i$ since we have $r_i F^\star_i=\rstar F^\star = B$.

It remains to show that $c_{i}\leq 2 \rstar\cdot  \partial_i$. Note that we only need to show this for the last winner, i.e. it suffices to prove that $c_{k}\leq 2 \rstar\cdot  \partial_k$ since we have $\frac{c_i}{\partial_i}\leq \frac{c_j}{\partial_j}$ iff $i\leq j$. The proof is by contradiction, suppose $c_{k} > 2 \rstar\cdot  \partial_k$. 

 For more intuition, we first explain our argument for contradiction in words and then we state it more formally. 
From the definition of the greedy sequence $\chi(F)$, it can be seen that conditioned on buying the subset $\chi_{k}$, the cost for buying each extra unit of utility is at least $\frac{c_{k}}{\partial_{k}}$, which is more than $2r^\star$. 
So, even if we get the subset $\chi_{k}$ for free, the cost for buying an additional $F({\sstar})-F({\chi_k})$ units of utility (which is needed for the optimum solution) would be 
\begin{align*}
\left( F({\sstar})-F({\chi_k})\right) \cdot \frac{c_{k}}{\partial_{k}} \geq &
 \frac{F^\star}{2}\cdot \frac{c_{k}}{\partial_{k}} \\
 >  &
 \frac{F^\star}{2}\cdot 2\rstar=B.
\end{align*}
This means cost of the optimum solution is more than $B$.

To formalize this contradiction, just note that by monotonicity of $F$, we have
\begin{align}
F({\sstar}\cup \chi_{k})-F({\chi_{k}}) \geq \frac{F^\star}{2}. \label{eq.lbscupchi0}
\end{align}
Now observe that by the definition of the greedy sequence $\chi(F)$, 
the cost for buying each extra unit of utility conditioned on having ${\chi_{k}}$ is at least $\frac{c_{k}}{\partial_{k}}$. This fact, and \eqref{eq.lbscupchi0} together imply that 
\begin{align}
c (\sstar\cup \chi_{k}) - c (\chi_{k}) \geq 
\frac{c_{k}}{\partial_{k}}\cdot \frac{ F^\star}{2}. \label{eq.120}
\end{align}
On the other hand, recall that $c_{k} > 2 \rstar\cdot  \partial_k$, so we can write \eqref{eq.120} as
\begin{align*}
c (\sstar\cup \chi_{k}) - c (\chi_{k}) \geq\,  & 
\frac{c_{k}}{\partial_{k}}\cdot \frac{ F^\star}{2} > \rstar \cdot F^\star = B.
\end{align*}
This implies $c (\sstar) > B$ which is a contradiction with the budget feasibility of $\sstar$.
\end{proof}

\begin{lemma} \label{lem.ortruthful}
The Oracle Mechanism is oneway-truthful. 
\end{lemma}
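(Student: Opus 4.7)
The plan is to exploit two structural features of the Oracle Mechanism that together make overreporting futile. First, the greedy sequence $\chi(F)$ is defined purely in terms of the submodular function $F$ and does not reference costs at all, so every seller $i$ has a fixed position $j_i$ in $\chi(F)$, independent of the reported cost vector. Second, by construction, the payment $2r_i\cdot\partial_i$ offered to seller $i$ is computed after removing $i$ from $S$, so it is a function of $d_{-i}$ only and is entirely independent of $i$'s own report. Hence the only channel through which seller $i$'s report can influence their utility is whether $i$ makes it into the winner set $\chi_k$, i.e., whether $j_i \leq k$, where $k$ is determined by the cutoff $F(\chi_k)\leq F^\star/2$.

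Next I would isolate the one cost-dependent quantity, namely $F^\star$, and show that it is weakly decreasing when seller $i$ overreports. Precisely, fix $d_{-i}$ and compare the instance with true cost $c_i$ to the instance with $d_i > c_i$. I would argue that any subset $T$ that is budget-feasible under the overreported cost is budget-feasible under the truthful cost: if $i\notin T$, the costs of $T$ are identical; if $i\in T$, the cost of $T$ under the truthful report is strictly smaller by $d_i-c_i$. Consequently $F^\star(d_i,d_{-i})\leq F^\star(c_i,d_{-i})$, and therefore the cutoff $k$ (the largest index with $F(\chi_k)\leq F^\star/2$) weakly decreases when $i$ overreports.

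Combining these two observations finishes the argument. Under a truthful report, seller $i$'s utility is $2r_i\partial_i - c_i$ if $j_i\leq k$ and $0$ otherwise, and by the preceding individual-rationality lemma $2r_i\partial_i - c_i\geq 0$. Under an overreport $d_i>c_i$, the position $j_i$ and the payment $2r_i\partial_i$ are unchanged, while the cutoff $k$ weakly decreases. I would split into cases: if $i$ is a winner under both reports, the utility is identical; if $i$ is a loser under both, the utility is zero either way; the only remaining case is $i$ winning truthfully but losing after overreporting, in which case utility drops from a nonnegative value to $0$. In no case does overreporting strictly benefit seller $i$, establishing oneway-truthfulness.

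The only mildly subtle step is the monotonicity $F^\star(d_i,d_{-i})\leq F^\star(c_i,d_{-i})$; everything else is bookkeeping. I do not expect a true obstacle here, because the mechanism was crafted precisely so that the greedy ordering and the externally-computed payment shield $i$'s allocation and price from their own report, leaving only the cutoff as the pivot, whose behavior under overreporting is monotone in the right direction.
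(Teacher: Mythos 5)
Your argument rests on the premise that the greedy sequence $\chi(F)$, and hence seller $i$'s position and marginal $\partial_i$, are independent of the reported costs. That premise is consistent with the displayed formula $x_i=\argmax_{s} F(\chi_{i-1}\cup\{s\})$, but it is not consistent with the mechanism the paper actually analyzes: the individual-rationality proof asserts that $c_i/\partial_i\leq c_j/\partial_j$ iff $i\leq j$, the approximation proof uses $c_k/\partial_k$ as the price of each extra unit of utility beyond $\chi_k$, and the winner rule of the polynomial-time variant explicitly references $c_{x_k}/\partial_k$. So the greedy sequence is (implicitly) ordered by cost per marginal utility, and a seller's position in it \emph{does} depend on her reported cost. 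Consequently your two load-bearing claims fail: seller $i$'s position $j_i$ is not fixed, and the payment $2r_i\partial_i$ is not a function of $d_{-i}$ alone, because $\partial_i$ is the marginal of $i$ at whatever position her report places her. Your proof closes only the channel running through $F^\star$ and the cutoff $k$, and leaves open the channel through which an overreport could, in principle, change $\partial_i$ and hence the payment.

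The missing idea, which is the heart of the paper's proof, is that reporting a cost \emph{higher} than $c_i$ cannot alter the first $i-1$ elements of the greedy sequence (those sellers still have smaller cost-per-marginal ratios), so it can only push seller $i$ to a later position; by submodularity this can only \emph{decrease} $\partial_i$, hence only decrease the payment $2r_i\partial_i$ (with $r_i$ unaffected, as you correctly note). For losers ($i>k$), the same observation shows an overreport cannot move $i$ into the winner set. Your monotonicity argument for $F^\star$ (a higher report only shrinks the set of budget-feasible subsets, so $F^\star$ and hence the cutoff $k$ weakly decrease) is correct and is actually a detail the paper glosses over when it claims $\chi_k$ is still selected; it is a useful supplement, but it does not substitute for the positional argument above.
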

\begin{proof}
For contradiction, suppose there exists a seller $i\in S$ who has incentive to report a cost $\over{c_i}$ which is higher that its true cost $c_i$. 
Assuming that the mechanism picked $\chi_k$ as the set of winners, we have either $i>k$ or $i\leq k$. The proof is done separately in each of these cases. 

If $i>k$, then see that seller $i$ can not change the first $i-1$ elements of $\chi(F)$ by reporting a higher cost. Now, see that if $i>k+1$, then again $\chi_k$ will be chosen as the set of winners, which is a contradiction with the incentive of seller $i$ for misreporting. If $i=k+1$, then see that $\chi_{k}$ will be chosen as a subset of the winners; in this case, the set of winners can possibly contain other sellers, but not certainly not seller $k+1$ (due to the monotonicity of $F$). 
Consequently, seller $i$ remains a loser even by reporting $\over{c_i}$. Contradiction. 

It remains to do the proof for when $i\leq k$. Recall that the payment to seller $i$ is $2r_i \cdot \partial_i$. Since $r_i$ is not a function of the cost reported by seller $i$, then  see that the only way that $i$ can increase her utility by misreporting, is increasing $\partial_i$. But by reporting a cost higher than $c_i$, seller $i$ can not change the first $i-1$ elements of $\chi(F)$, which means she can only decrease $\partial_i$ by reporting a higher cost. So, the payment to $i$ does not increase if she reports a higher cost. This concludes the lemma.
\end{proof}

\begin{lemma} \label{lem.oracle2approx}
In a $\theta$-large market, the Oracle Mechanism has approximation ratio $\frac{1}{2}-\theta$, i.e. asymptotically equal to $\frac{1}{2}$ in a large market. 
\end{lemma}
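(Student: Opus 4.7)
The plan is to exploit the stopping rule of the mechanism together with submodularity. Let $k$ be the index the mechanism selects, so $\chi_k$ is the announced winning set. By the maximality in the choice of $k$, we have $F(\chi_{k+1}) > F^\star/2$, provided $k < n$. To see that $k < n$, note that $\chi_n = S$, so $F(\chi_n) \geq F(\sstar) = F^\star > F^\star/2$ (the trivial case $F^\star = 0$ is vacuous), which rules out $k = n$ and ensures $\chi_{k+1}$ is well defined.

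Next I would use submodularity to control the marginal $\partial_{k+1}$. Since $F$ is submodular and monotone with $F(\emptyset) = 0$,
\begin{align*}
\partial_{k+1} \;=\; F(\chi_{k+1}) - F(\chi_k) \;\leq\; F(\{x_{k+1}\}) \;\leq\; \umax.
\end{align*}
Combining with the definition of a $\theta$-large market, $\umax \leq \theta F^\star$, this yields
\begin{align*}
F(\chi_k) \;=\; F(\chi_{k+1}) - \partial_{k+1} \;>\; \frac{F^\star}{2} - \umax \;\geq\; \Bigl(\frac{1}{2} - \theta\Bigr) F^\star,
\end{align*}
which is exactly the claimed approximation ratio.

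This is essentially the whole argument; the only subtlety is the boundary check that ensures $\chi_{k+1}$ exists so that the marginal $\partial_{k+1}$ can be invoked, and I expect that to be the only step that requires any care. Everything else is a direct application of the definitions (greedy sequence and $\partial_i$), the maximality of $k$ in the mechanism, submodularity, and the large-market hypothesis. No additional machinery from earlier sections is needed for this lemma beyond what is already established about $\chi(F)$ and the selection rule of the Oracle Mechanism.
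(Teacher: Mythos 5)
Your proof is correct and follows essentially the same route as the paper's (one-line) argument: maximality of $k$ gives $F(\chi_{k+1})\geq F^\star/2$, and submodularity plus the large-market bound $\partial_{k+1}\leq\umax\leq\theta F^\star$ yields $F(\chi_k)\geq(1/2-\theta)F^\star$. Your explicit boundary check that $k<n$ and the spelled-out submodularity step are just details the paper leaves implicit.
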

\begin{proof}
It is enough to note that $F(\chi_{k+1})\geq F^\star /2$, which implies $F(\chi_{k})\geq F^\star\cdot (1/2-\theta)$ due to the large market assumption. 
\end{proof}

Now, we prove that sum of the payments in the Oracle Mechanism is at most $B+o(B)$, or in simple words, it is {\em almost budget feasible}.
\begin{definition}
A mechanism is {\em almost budget feasible} if its payments sum up to at most $B+o(B)$.
\end{definition}
As we mentioned before, (in large markets) we can convert any almost budget feasible mechanism to a budget feasible mechanism without any loss in its approximation ratio (asymptotically). This can be done simply by running the mechanism with a slightly reduced budget; the proof is deferred to Section \ref{sec.strictbf}.

\begin{lemma} \label{lem.oraclebudget}
The Oracle Mechanism is almost budget feasible.
\end{lemma}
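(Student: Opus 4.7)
The plan is to bound the total payment $\sum_{i=1}^{k} 2 r_i \partial_i$ by $B/(1-\theta) = B + o(B)$. Two ingredients are needed: a bound on $\sum_i \partial_i$ coming from the winner-selection rule, and a uniform upper bound on $r_i$ coming from the large market assumption.

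First I would note the telescoping identity $\sum_{i=1}^{k} \partial_i = F(\chi_k)$, and invoke the stopping rule of the mechanism, which guarantees $F(\chi_k) \leq F^\star/2$. So if all the $r_i$ could be replaced by the common value $r^\star$, we would get exactly $\sum 2r^\star \partial_i \leq 2 r^\star \cdot F^\star/2 = B$, giving strict budget feasibility. The only reason we do not get strict budget feasibility is that $r_i$ can exceed $r^\star$.

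The main step is to bound $r_i$ in terms of $r^\star$. Removing seller $i$ can only decrease the optimum utility, so $F^\star_i \leq F^\star$; the question is by how much it can decrease. Let $\sstar$ be the optimum set. Then $\sstar \setminus \{i\}$ is feasible in the instance without $i$ (its cost is at most $B$), and by submodularity the marginal contribution of $i$ to any set is at most $F(\{i\}) \leq \umax$, hence $F(\sstar \setminus \{i\}) \geq F^\star - \umax$. Therefore $F^\star_i \geq F^\star - \umax \geq (1-\theta) F^\star$, which gives
\[
r_i \;=\; \frac{B}{F^\star_i} \;\leq\; \frac{B}{(1-\theta)F^\star} \;=\; \frac{r^\star}{1-\theta}.
\]

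Combining the two bounds, the total payment satisfies
\[
\sum_{i=1}^{k} 2 r_i \partial_i \;\leq\; \frac{2 r^\star}{1-\theta} \sum_{i=1}^{k} \partial_i \;=\; \frac{2 r^\star}{1-\theta}\, F(\chi_k) \;\leq\; \frac{2 r^\star}{1-\theta}\cdot \frac{F^\star}{2} \;=\; \frac{B}{1-\theta},
\]
which is $B + O(\theta B) = B + o(B)$ as $\theta\to 0$. The only non-routine step is the submodularity argument that $F^\star_i \geq F^\star - \umax$; everything else is the stopping rule and arithmetic.
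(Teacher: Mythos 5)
Your proof is correct and follows essentially the same route as the paper: the key step is the uniform bound $r_i \le \rstar/(1-\theta)$, combined with the telescoping identity $\sum_{i=1}^{k}\partial_i = F(\chi_k)$. You additionally supply the submodularity justification ($F^\star_i \ge F^\star - \umax \ge (1-\theta)F^\star$) that the paper leaves implicit under ``the large market assumption,'' and you track the factor of $2$ correctly via the stopping rule $F(\chi_k)\le F^\star/2$, which the paper's displayed computation elides.
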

\begin{proof}
Recall that the sum of payments is equal to $\sum_{i=1} ^k r_i \cdot \partial_i$.
 To prove the lemma, it is enough to show that
 for any seller $i\in S$, we have $r_i\leq \rstar \cdot (1-\theta)^{-1}$; because then we have
 \begin{align*}
\sum_{i=1} ^k r_i \cdot \partial_i \leq 
 \rstar\cdot (1-\theta)^{-1}\cdot\sum_{i=1} ^k \partial_i
 = \rstar \cdot (1-\theta)^{-1}\cdot F(\chi_k) \leq B(1-\theta)^{-1}.
\end{align*}
which proves the lemma.

To prove $r_i\leq \rstar \cdot (1-\theta)^{-1}$, let $F^\star_i$ denote the optimum utility when seller $i$ is removed. By the large market assumption, we have $\frac{F^\star_i}{F^\star}\geq 1-\theta$. This fact, and the fact that $r_i F^\star_i = \rstar F^\star = B$, imply that $r_i\leq \frac{\rstar}{1-\theta}$.
\end{proof}

\subsection*{The Oracle Mechanism in Polynomial Time}
In the Oracle Mechanism, we solve a submodular optimization problem which cannot be solved in polynomial-time, i.e. finding the optimum cost per utility rate which is equivalent to finding the optimum budget feasible subset. 
Although this problem cannot be solved in polynomial-time, there are approximation algorithms that can find near-optimal solutions for it.

\begin{definition}
Suppose we are given an instance of the problem with a submodular function $F$, budget $B$, and (publicly known) costs and utilities. A polynomial-time algorithm for solving this problem is called a {\em $\gamma$-approximation oracle} if, for any instance, it finds a solution with utility at least $\gamma \cdot F^\star$.
\end{definition}

Given a $\gamma$-approximation oracle, we can run the Oracle mechanism in polynomial time by finding estimates for $F^\star$ and $r_i$'s using the $\gamma$-approximation oracle, i.e. instead of computing $F^\star$ and $r_i$'s directly, we compute them using the $\gamma$-approximation oracle.
 The following theorem clarifies the resulting mechanism further and states the properties it satisfies.

\begin{theorem} \label{thm.gamma}
Suppose the Oracle Mechanism has access to a polynomial-time $\gamma$-approximation oracle for the knapsack optimization problem. Then, it is a polynomial-time $(\gamma/2)$-approximation mechanism and its payments sum up to at most $ (B+o(B))/\gamma$.
\end{theorem}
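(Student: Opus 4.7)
The plan is to instantiate Mechanism~\ref{alg.oracle} by replacing each exact solve of the submodular knapsack with one call to the given $\gamma$-approximation oracle. Let $\tilde F^\star$ and $\tilde F^\star_i$ denote the utilities of the subsets returned by the oracle on the full instance and on the instance with seller $i$ removed, and set $\tilde r^\star=B/\tilde F^\star$ and $\tilde r_i=B/\tilde F^\star_i$. The polynomial-time mechanism builds the greedy sequence $\chi(F)$ (already polynomial-time), picks the largest $k$ with $F(\chi_k)\le \tilde F^\star/2$, declares $\chi_k$ the winners, and pays each winner $i$ the amount $2\tilde r_i\,\partial_i$.

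Oneway-truthfulness is immediate from Lemma~\ref{lem.ortruthful}: its proof only uses that the quantity $\tilde r_i$ multiplying $\partial_i$ in $i$'s payment is independent of $c_i$, which remains true since the oracle call producing $\tilde F^\star_i$ is executed on the instance with $i$ removed. For individual rationality, the key observation is that any oracle output is a feasible subset, so $\tilde F^\star\le F^\star$ and hence the stopping condition $F(\chi_k)\le \tilde F^\star/2$ still implies $F(\chi_k)\le F^\star/2$. This is exactly the hypothesis that the original IR argument needs, so that argument goes through unchanged to yield $c_k\le 2\rstar\partial_k$. Since also $\tilde F^\star_k\le F^\star_k\le F^\star$, we have $\tilde r_k\ge \rstar$, and combining with the greedy-ordering monotonicity $c_i/\partial_i\le c_k/\partial_k$ we conclude $c_i\le 2\tilde r_i\partial_i$ for every winner.

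For the approximation ratio, the new stopping rule gives $F(\chi_{k+1})>\tilde F^\star/2\ge (\gamma/2)F^\star$, so in a large market $F(\chi_k)\ge F(\chi_{k+1})-\umax\ge ((\gamma/2)-\theta)F^\star$, which tends to $(\gamma/2)F^\star$ as $\theta\to 0$. For the almost-budget claim, the large-market bound $F^\star_i\ge (1-\theta)F^\star$ together with the oracle's guarantee $\tilde F^\star_i\ge \gamma F^\star_i$ gives $\tilde r_i\le \rstar/(\gamma(1-\theta))$. Summing,
\[
\sum_{i=1}^{k}2\tilde r_i\partial_i\;\le\;\frac{2\rstar}{\gamma(1-\theta)}\,F(\chi_k)\;\le\;\frac{\rstar\,\tilde F^\star}{\gamma(1-\theta)}\;\le\;\frac{B}{\gamma(1-\theta)}\;=\;\frac{B+o(B)}{\gamma},
\]
where the middle step uses $F(\chi_k)\le \tilde F^\star/2$ and the last uses $\rstar\tilde F^\star\le \rstar F^\star=B$. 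The main subtlety will be the individual-rationality step: one must verify that replacing $F^\star$ by the weaker quantity $\tilde F^\star$ in the stopping criterion still leaves exactly the slack needed for the chain $c_i\le 2\rstar\partial_i\le 2\tilde r_i\partial_i$, so that no extra factor of $\gamma$ creeps into the covering of $c_i$. Once this is nailed down, the approximation-ratio and budget calculations are routine rescalings of the analysis already carried out for Mechanism~\ref{alg.oracle}.
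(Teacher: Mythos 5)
Your proposal is correct and follows essentially the same route as the paper: replace the exact solves by oracle calls, observe that the oracle's output is a feasible subset so $\tilde F^\star\le F^\star$ (which preserves individual rationality and costs a factor $\gamma$ in the approximation ratio), and weaken the key budget inequality to $\tilde r_i\le \rstar\,(1-\theta)^{-1}/\gamma$ exactly as in the paper's adaptation of Lemma~\ref{lem.oraclebudget}. Your explicit separation of the true optimum $F^\star$ from the oracle value $\tilde F^\star$ is in fact cleaner than the paper's overloaded notation, and the ``subtlety'' you flag at the end is already resolved by your own second paragraph.
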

\begin{proof}
We use the $\gamma$-approximation oracle for finding an estimate (lower bound) for $F^\star$. Instead of computing $F^\star$ directly, let $F^\star$ be the solution which is returned by the $\gamma$-approximation oracle.
Also, for computing $r_i$, remove seller $i$ and then compute the optimal utility, namely $F^\star_i$, using the $\gamma$-approximation oracle. 
Everything else remains identical to Mechanism~\ref{alg.oracle}. 

The proofs for individual rationality and truthfulness follow from the proofs for Mechanism~\ref{alg.oracle}. It remains to prove that the payments sum up to at most $B\cdot (1+o(1))/\gamma$.
For this, we follow the proof of Lemma \ref{lem.oraclebudget}.
The key point in the proof of Lemma \ref{lem.oraclebudget} was that $r_i\leq \rstar \cdot (1-\theta)^{-1}$. Here, we prove a weaker inequality 
$r_i\leq\rstar \cdot (1-\theta)^{-1}/\gamma$. Given this inequality, the rest of the proof remains similar to the proof of Lemma \ref{lem.oraclebudget}. We do not repeat the full proof here and just show that the weaker inequality holds. 

Recall that in this proof, $F^\star$ and $F^\star_i$ denote the utilities computed by the $\gamma$-approximation oracle. Then, by the large market assumption, we have ${F^\star_i}\geq {{F^\star}}\cdot (1-\theta) \gamma$ . This fact, and the fact that $r_i F^\star_i = \rstar F^\star = B$, imply that $r_i\leq \rstar \cdot (1-\theta)^{-1}/\gamma$.
\end{proof}

\begin{corollary}[of Theorem \ref{thm.gamma}]
Suppose the Oracle Mechanism has access to a $\gamma$-approximation oracle. Then, 
if instead of budget $B$, the mechanism is given a reduced budget $\gamma B$, it would be an almost budget feasible mechanism with approximation ratio $\gamma^2/2$.
\end{corollary}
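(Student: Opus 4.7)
The plan is to invoke Theorem \ref{thm.gamma} with the input budget replaced by $\gamma B$, and then translate both of its guarantees into statements about the true budget $B$. Running the Oracle Mechanism on input budget $B' = \gamma B$, Theorem \ref{thm.gamma} says its total payments are at most
\[
\frac{B'\,(1+o(1))}{\gamma} \;=\; \frac{\gamma B\,(1+o(1))}{\gamma} \;=\; B + o(B),
\]
which is exactly almost budget feasibility with respect to the original budget $B$. The same theorem guarantees that the utility delivered is at least $\bigl(\gamma/2 - o(1)\bigr)\,F^{\star}(\gamma B)$, where $F^{\star}(\cdot)$ denotes the true optimum utility as a function of the budget.

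To turn this into a $\gamma^2/2$ approximation against $F^{\star}(B)$, the remaining ingredient is the submodular knapsack inequality
\[
F^{\star}(\gamma B) \;\geq\; (\gamma - o(1))\, F^{\star}(B)
\]
in a large market. My plan for proving this is the standard greedy-prefix argument: let $S^{\star}$ achieve $F^{\star}(B)$, order its elements greedily in decreasing marginal density (utility per cost), and let $T$ be the longest prefix with $c(T) \leq \gamma B$. The decreasing-density property combined with submodularity yields $F(T)/c(T) \geq F(S^{\star})/c(S^{\star}) \geq F^{\star}(B)/B$, and since the next element $s$ in the ordering would push the cost above $\gamma B$, we have $c(T) \geq \gamma B - c_s$. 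Chaining the two utility bounds gives the desired $(\gamma^2/2 - o(1))\,F^{\star}(B)$.

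\textbf{Main obstacle.} The delicate point is showing $c(T) \geq (\gamma - o(1))\,B$, i.e.\ that the overshoot $c_s$ of the next element is negligible relative to $B$. Under the original large-market assumption $\cmax \ll B$ this is automatic; under the alternative assumption $\umax \ll F^{\star}$ actually used in this section, $c_s$ is not directly bounded, and I would handle it by case analysis on $c_s$. If $c_s = o(B)$ the prefix bound already suffices. Otherwise $s$ is an expensive element of $S^{\star}$ whose singleton utility is at most $\umax \leq \theta F^{\star}$; by submodularity its marginal contribution to $S^{\star}$ is also at most $\theta F^{\star}$, so $S^{\star}\setminus\{s\}$ has utility at least $(1-\theta)F^{\star}(B)$ and, when $c_s$ is large enough, cost below $\gamma B$, giving an alternative witness for $F^{\star}(\gamma B) \geq (1-\theta) F^{\star}(B) \geq \gamma F^{\star}(B)$ for small $\theta$. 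Writing this dichotomy out carefully so that the two regimes glue together into a single $(\gamma - o(1))$ bound is the main technical step; everything else is a direct application of Theorem \ref{thm.gamma}.
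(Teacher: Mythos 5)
Your overall structure is exactly the intended one: run the mechanism with input budget $\gamma B$, read off almost budget feasibility from the payment bound $(\gamma B + o(\gamma B))/\gamma = B + o(B)$ of Theorem \ref{thm.gamma}, and then convert the $(\gamma/2)$-approximation against $F^\star(\gamma B)$ into a $(\gamma^2/2)$-approximation against $F^\star(B)$ via a budget-reduction inequality $F^\star(\gamma B)\geq(\gamma-o(1))F^\star(B)$. That inequality is precisely Lemma \ref{lem.optboverk} of the paper with $\epsilon=1-\gamma$, so you could simply cite it rather than re-derive it.

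The genuine gap is in your derivation of that inequality. You lower-bound the \emph{cost} of the greedy prefix, $c(T)\geq\gamma B-c_s$, and then try to argue the overshoot $c_s$ is negligible by a dichotomy. But your two cases do not cover the middle regime: if $c_s=\delta B$ for a constant $\delta$ with $0<\delta<1-\gamma$, then the prefix bound only gives $F(T)\geq(\gamma-\delta)F^\star(B)$ (a constant-factor loss, not $o(1)$), while $S^\star\setminus\{s\}$ has cost as large as $(1-\delta)B>\gamma B$ and so is not a feasible witness for budget $\gamma B$. The fix --- and this is what the paper's proof of Lemma \ref{lem.optboverk} does --- is to never bound $c_s$ at all: measure the loss from dropping $s$ in \emph{utility} rather than in cost. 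Since $T\cup\{s\}$ is a density-ordered prefix of $S^\star$ with $c(T\cup\{s\})>\gamma B\geq\gamma\,c(S^\star)$, the mediant/density argument gives $F(T\cup\{s\})\geq\gamma F^\star(B)$ directly, and then submodularity gives $F(T)\geq F(T\cup\{s\})-F(\{s\})\geq\gamma F^\star(B)-\umax\geq(\gamma-\theta)F^\star(B)$ under the alternative large-market assumption. You already invoke exactly this submodularity step in the second branch of your dichotomy, but applied to $S^\star\setminus\{s\}$ instead of to $T\cup\{s\}$; applying it to the prefix makes the argument uniform and eliminates the case analysis entirely.
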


\subsection{An Improved Polynomial-Time Mechanism} \label{sec.polytime}
In this section, we present a polynomial-time mechanism with approximation ratio $\frac{1}{3}$. The mechanism follows the idea of the oracle mechanism, except that instead of computing the optimum cost per utility rate (which requires exponential running time), the mechanism computes an estimate for the cost per utility rate, which we call the {\em stopping rate}. Before presenting the mechanism, we need to formally define the notion of stopping rate.

\begin{definition}
Suppose we are given an instance of the problem with cost vector $c$, submodular utility function $F$ and budget $B$.
Construct the sequence $\chi(F)$ and choose the largest integer $k$ such that $F(\chi_k)\cdot \frac{c_{x_k}}{\partial_k} \leq B$. The {\em stopping rate}, denoted by $\r(c,F,B)$, is then defined to be $B/F(\chi_k)$. We sometimes denote the stopping rate simply by $\r(B)$ when $c,F$ are clear from the context. 
\end{definition}

Now we define the mechanism by presenting the winner selection and payment rules. 
\paragraph{Winner Selection}
The Mechanism constructs the sequence $\chi(F)$ and chooses the largest integer $k$ such that $F(\chi_k)\cdot \frac{c_{x_k}}{\partial_k} \leq B/2$. Then, it reports $\chi_k$ as the set of winners. 

\paragraph{The Payment Rule} 
For simplicity, assume that the winners are indexed from $1$ to $k$.
The Payment to winner $i$ is equal to $2 r_i \cdot \partial_i$, where $r_i = \r(c',F,B/2)$ and $c'$ is the cost vector which is identical to the original cost vector $c$ except that $c'_i$ is set to $\infty$ (i.e. intuitively, seller $i$ is removed from $S$). In other words, think of an auxiliary instance in which we are given budget $B/2$ and seller $i$ is removed from the set of sellers $S$. Then, $r_i$ is the stopping rate in this instance.

The above definitions are formally put together in Mechanism \ref{alg.ptoracle}.
We also address our polynomial-time mechanism as Mechanism \ref{alg.ptoracle}.
In the rest of this section, we prove that Mechanism \ref{alg.ptoracle} is individually rational, oneway-truthful and budget feasible, and also, it has approximation ratio $1/3$. As we mentioned before, this mechanism can be converted to a (fully) truthful and strictly budget feasible mechanism without any (asymptotic) loss in the approximation ratio; the details are discussed in Sections \ref{sec.o2t} and \ref{sec.strictbf}.

\begin{algorithm}
\LinesNotNumbered
\SetKwInOut{Input}{input}
\SetKwInOut{Output}{output}
\Input{Submodular utility function $F$, Budget $B$}
\BlankLine
Sellers report their costs\;
Construct the sequence $\chi(F)$\;
Find the smallest integer $k$ such that $F(\chi_k)\cdot\frac{c_{x_k}}{\partial_k}\leq B/2$\;
Announce $\chi_k$ as the set of winners\;
\ForEach{$i\in \chi_k$}{
	$c' \leftarrow c$\;
	$c'_i\leftarrow \infty$\;
	$r_i\leftarrow \r(c',F,B/2)$ \;
	Pay $r_i\cdot \partial_i$ to seller $i$\;
}
\caption{ A polynomial-time $(1/3)$-approximation mechanism{\label{alg.ptoracle}}}
\end{algorithm}

\paragraph{Simplifying Assumption} Through out the analysis, w.l.o.g. we assume that the sellers appear in the greedy sequence in an increasing order, i.e. $x_i=i$ for all $i\in S$. 

\begin{lemma}\label{lem.orind}
Mechanism \ref{alg.ptoracle} is individually rational. 
\end{lemma}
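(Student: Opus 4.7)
The plan is to mimic the individual-rationality argument used for the Oracle Mechanism, with the stopping rate $\rho := \r(c,F,B/2) = (B/2)/F(\chi_k)$ playing the role of the optimum rate $\rstar$. I want to show $c_i \le 2 r_i\, \partial_i$ for every winner $i \le k$; in fact the stronger bound $c_i \le r_i\, \partial_i$ (matching the pseudocode payment) will drop out.

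The first step is to upper-bound $c_i/\partial_i$ by $\rho$. For the last winner $k$, the selection rule $F(\chi_k)\cdot c_k/\partial_k \le B/2$ rearranges immediately to $c_k/\partial_k \le \rho$. Just as in the Oracle Mechanism's IR proof, I then invoke the convention that the sellers are indexed so that $c_j/\partial_j$ is nondecreasing along the greedy sequence; this is the same sorted ordering used there, and it is what makes the ``largest $k$'' in the stopping-rate definition well-posed, since under it the quantity $F(\chi_j)\,c_j/\partial_j$ is monotone in $j$. The inequality therefore extends to every earlier winner, giving $c_i/\partial_i \le c_k/\partial_k \le \rho$.

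The second step is to lower-bound $r_i = \r(c',F,B/2)$, where $c'$ agrees with $c$ except that $c'_i = \infty$. Because $\chi(F)$ depends only on $F$, the greedy sequence together with the prefix utilities $F(\chi_j)$ and the marginal gains $\partial_j$ are unchanged when we modify a single cost; only the stopping inequality at position $i$ is altered, and it now fails trivially. Hence for $i < k$, position $k$ still satisfies the stopping inequality and positions $j>k$ still fail, so the stopping index in the modified instance is again $k$, giving $r_i = (B/2)/F(\chi_k) = \rho$. For $i = k$, position $k$ itself now fails, so the stopping index drops strictly below $k$ and $r_k > \rho$. Either way $r_i \ge \rho \ge c_i/\partial_i$, so the payment $r_i\,\partial_i$ (a fortiori $2 r_i\,\partial_i$) is at least $c_i$, which is individual rationality.

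The main obstacle, as I see it, is justifying the sorted-order convention used to lift the step-one inequality from the last winner $k$ to the earlier winners $i<k$. This is the same implicit assumption already appealed to in the Oracle Mechanism's IR proof, and it is essentially forced on us by the ``largest $k$'' phrasing of the stopping-rate definition; once accepted, the rest of the argument is routine bookkeeping about which quantities are altered when a single seller's cost is raised to infinity.
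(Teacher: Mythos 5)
Your step~1 is fine and matches the paper's setup: the greedy sequence is ordered so that $c_j/\partial_j$ is nondecreasing (this is exactly the convention the paper invokes in the Oracle mechanism's IR proof), and from the selection rule one gets $c_i/\partial_i\le c_k/\partial_k\le (B/2)/F(\chi_k)$ for every winner. The gap is in step~2. You assert that because ``$\chi(F)$ depends only on $F$,'' setting $c_i=\infty$ leaves the greedy sequence, the prefix utilities $F(\chi_j)$, and the marginals $\partial_j$ unchanged, so the stopping index is again $k$ and $r_i=(B/2)/F(\chi_k)$. But the greedy sequence here is cost-dependent: it orders sellers by cost per marginal utility, which is precisely the property you rely on in step~1 (and which underlies the mechanism's truthfulness proof, where a seller raising her cost ``can not change the first $i-1$ elements of $\chi(F)$'' --- a statement that would be vacuous if the sequence ignored costs). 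Once $c_i$ is set to $\infty$, seller $i$ drops out, the sequence changes from position $i$ onward, and the marginals $\overline{\partial}_j$ for $j\ge i$ are computed relative to different prefixes, so they generally differ from $\partial_j$. Consequently the stopping index $\overline{k}$ in the modified instance need not equal $k$, and $r_i$ need not equal $(B/2)/F(\chi_k)$; your two steps are resting on mutually incompatible readings of the greedy rule.

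The paper's proof handles this correctly: it introduces the modified sequence $\overline{\chi}(F)$, notes that it agrees with $\chi(F)$ only on the first $i-1$ positions, and splits into the cases $\overline{k}=i-1$ and $\overline{k}\ge i$. In the first case $r_i=B/(2F(\chi_{i-1}))$ is compared directly with the selection inequality $F(\chi_i)\,c_i/\partial_i\le B/2$ using $F(\chi_{i-1})\le F(\chi_i)$. In the second case one first uses greedy optimality at step $i$ (seller $\overline{x}_i$ was available but not chosen in the original sequence, hence $c_{x_i}/\partial_i\le c_{\overline{x}_i}/\overline{\partial}_i$) and then bounds $c_{\overline{x}_i}/\overline{\partial}_i$ by $r_i=B/(2F(\overline{\chi}_{\overline{k}}))$ via monotonicity of $F$. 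Some comparison of this sort between the original and the modified greedy sequences is unavoidable; your proposal skips it entirely.
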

\begin{proof}
According to the payment rule, it is enough to show that for each winner $i$ we have $r_i\geq c_i/\partial_i$.
Let $\over{\chi}(F)=\langle \over{x}_1,\ldots,\over{x}_{n-1}\rangle$ denote the greedy sequence for when $i$ is removed. 
Also, let $\over{\chi}_j$ denote the subset containing first $j$ elements of $\over{\chi}(F)$, and $\over{\partial}_j=F(\over{\chi}_j)-F(\over{\chi}_{j-1})$. 

Note that $x_j=\over{x}_j$ for all $j<i$. 
Now, consider the following two cases for the proof:
Let $\over{k}$ be the largest integer $j$ satisfying $F(\over{\chi}_{j})\cdot {c}_{\over{x}_j}/\over{\partial}_{j}\leq B/2$. We either have that $\over{k}=i-1$ or $\over{k}\geq i$. 
We prove the lemma by proving  that $r_i\geq c_i/\partial_i$ in each of these cases.

First, suppose $\over{k}=i-1$. This means $r_i=\frac{B}{2F(\chi_{i-1})}$. Also, note that $F(\chi_{i})\cdot \frac{c_i}{\partial_i} \leq B/2$ (since $i < k$). The two latter facts, along with the fact that $F(\chi_{i-1})\leq F(\chi_{i})$ together imply that $r_i\geq c_i/\partial_i$.

It remains to do the proof in the second case: Suppose $\over{k}\geq i$. Note that due to the definition of the greedy sequence we have $c_{x_i}/\partial_i\leq c_{\over{x}_i}/\over{\partial}_i$. So, the proof is done if we show that $r_i \geq c_{\over{x}_i}/\over{\partial}_i$. 
To this end, first observe that we have $F(\over{\chi}_{i})\cdot {c_{\over{x}_i}}/{\over{\partial_i}} \leq B/2$ due to the fact that $\over{k}\geq i$.
This just implies ${c_{\over{x}_i}}/{\over{\partial_i}} \leq B/(2F(\over{\chi}_{\over{k}}))$ due to the monotonicity of $F$. Finally, seeing that $B/(2F(\over{\chi}_{\over{k}}))=r_i$ proves the claim.
\end{proof}

\begin{lemma}\label{lem.ptortruthful} 
Mechanism \ref{alg.ptoracle} is oneway-truthful. 
\end{lemma}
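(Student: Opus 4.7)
The plan is to imitate the case analysis of Lemma~\ref{lem.ortruthful}, exploiting three structural invariances of Mechanism~\ref{alg.ptoracle} that together immunize it against upward deviations. First, the greedy sequence $\chi(F)=\langle x_1,\ldots,x_n\rangle$ is a function of $F$ alone and is independent of the reported costs, so each seller's position in the sequence is immutable. Second, the price rate $r_i = \r(c',F,B/2)$ is computed on the auxiliary instance with $c'_i=\infty$, hence is a function of $c_{-i}$ only and cannot be influenced by seller $i$'s own report. Third, $\partial_i$ depends only on $F$ and on $\chi(F)$. Together these tell us that, conditional on being selected, winner $i$'s payment $r_i\partial_i$ is entirely outside her control.

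Next I would establish a cutoff-monotonicity observation. The winner-selection inequality at index $j$, namely $F(\chi_j)\cdot c_{x_j}/\partial_j \leq B/2$, involves the reported cost of seller $i$ only when $j=i$. Hence under any upward deviation $\bar c_i > c_i$, every inequality at indices $j\neq i$ keeps its truth value, while the one at $j=i$ can only become harder to satisfy. Interpreting the ``largest $k$'' in the winner-selection rule as the largest index for which the inequalities at $j=1,\ldots,k$ all hold, so that the winner set is a prefix of $\chi(F)$, it follows that the deviated cutoff $\bar k$ satisfies $\bar k \leq k$, with equality unless the inequality at $j=i$ breaks, in which case $\bar k=i-1$.

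With these two ingredients the case split is immediate. If $i>k$, the cutoff remains $\bar k = k<i$ and $i$ continues to receive zero utility. If $i\leq k$ and the inequality at $j=i$ survives, $i$ remains a winner with the identical payment $r_i\partial_i$, so her utility is unchanged. If $i\leq k$ but the inequality at $j=i$ now fails, then $\bar k=i-1$ and $i$ becomes a loser with zero utility, which by Lemma~\ref{lem.orind} is no greater than her truthful utility $r_i\partial_i - c_i\geq 0$. In every scenario the upward deviation is weakly dominated, which is exactly oneway-truthfulness. The main obstacle, as in similar arguments, is pinning down the semantics of the ``largest $k$'' selection rule carefully enough that the winning set really is a prefix of the greedy sequence; once that is in place, the cost-locality of the selection inequality together with the invariance of $r_i$ and $\partial_i$ makes the remaining bookkeeping essentially mechanical.
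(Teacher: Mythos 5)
Your argument rests on the premise that the greedy sequence $\chi(F)$ ``is a function of $F$ alone and is independent of the reported costs,'' from which you conclude that $\partial_i$ is immutable and that the selection inequality at index $j$ involves $c_i$ only when $j$ is $i$'s position. That premise is the gap. Although the displayed definition of $\chi(F)$ in the paper literally reads $x_i=\argmax_s F(\chi_{i-1}\cup\{s\})$, the sequence is in fact the greedy ordering by marginal utility \emph{per unit cost}: every surrounding proof depends on this. The individual-rationality proof of Lemma~\ref{lem.orind} invokes ``due to the definition of the greedy sequence we have $c_{x_j}/\partial_j\leq \overline{c}_{\overline{x}_j}/\overline{\partial}_j$,'' the approximation-ratio proofs use that the cost of each extra unit of utility beyond $\chi_{k}$ is at least $c_{x_k}/\partial_k$, and the paper's own proof of this very lemma argues that a seller ``can not change the first $i-1$ elements of $\chi(F)$ by reporting a higher cost \dots and so can only decrease $\partial_i$'' --- statements that are vacuous or false for a cost-independent ordering. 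Under the actual (cost-dependent) greedy, an upward report by seller $i$ can push her to a later position, reshuffle the tail of the sequence, and change the marginals $\partial_j$ for every index from her original position onward; so your cutoff-monotonicity step (``every inequality at indices $j\neq i$ keeps its truth value'') and your claim that the payment $r_i\partial_i$ is entirely outside her control both fail. Taking your reading at face value would also render Lemmas~\ref{lem.orind} and~\ref{lem.ptorratio} unprovable, which is a strong signal the reading is wrong.

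The correct argument, which is what the paper does, replaces your invariance claims with weaker but sufficient ones: (i) an upward deviation cannot alter the first $i-1$ entries of $\chi(F)$, because those selections are made before $i$ is considered and raising $c_i$ only worsens her rate; hence a loser at position $>k$ stays a loser; (ii) for a winner, $r_i=\r(c',F,B/2)$ is computed with $c_i'=\infty$ and is genuinely report-independent (this part of your proposal is right), while $\partial_i$ can only \emph{decrease} under an upward report, since $i$ can only move to a later position in the sequence and submodularity shrinks her marginal there. Your case analysis would become correct if you substituted these two facts for the cost-independence premise; as written, the central structural claim on which the bookkeeping rests does not hold for the mechanism being analyzed.
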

\begin{proof}
The proof is almost identical to the proof of Lemma \ref{lem.ortruthful}, we state the proof for completeness. 
For contradiction, suppose there exists a seller $i\in S$ who has incentive to report a cost $\over{c_i}$ which is higher than her true cost $c_i$. 
Assuming that the mechanism picked $\chi_k$ as the set of winners, we have either $i>k$ or $i\leq k$. The proof is done separately in each of these cases.

If $i>k$, then see that seller $i$ can not change the first $i-1$ elements of $\chi(F)$ by reporting a higher cost. Now, see that if $i>k+1$, then again $\chi_k$ will be chosen as the set of winners, which is a contradiction with the incentive of seller $i$ for misreporting. If $i=k+1$, then see that $\chi_{k}$ will be chosen as a subset of the winners; in this case, the set of winners can possibly contain other sellers, but not certainly not seller $k+1$ (due to the fact that she was not chosen before). 
Consequently, seller $i$ remains a loser even by reporting $\over{c_i}$. Contradiction. 

It remains to do the proof for when $i\leq k$. Recall that the payment to seller $i$ is $r_i \cdot \partial_i$. Since $r_i$ is not a function of the cost reported by seller $i$, then  see that the only way that $i$ can increase her utility by misreporting, is increasing $\partial_i$. But by reporting a cost higher than $c_i$, seller $i$ can not change the first $i-1$ elements of $\chi(F)$, and so, can only decrease $\partial_i$. Consequently, the payment to $i$ does not increase in this case as well which means $i$ has no incentive to report a higher cost.
\end{proof}

\begin{lemma}\label{lem.ptorratio} 
Mechanism \ref{alg.ptoracle} has approximation ratio $1/3$ in large markets. 
\end{lemma}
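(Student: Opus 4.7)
The plan is to combine the stopping condition of Mechanism~\ref{alg.ptoracle} with submodularity and the greedy ordering of $\chi(F)$ to lower bound $F(\chi_k)$ in terms of $F^\star$. First I would dispose of the degenerate case $k=n$, in which $\chi_k=S$ so $F(\chi_k)\geq F^\star$ and the ratio is $1$. Henceforth assume $k<n$ so that $x_{k+1}$ exists. The maximality of $k$ forces $x_{k+1}$ to violate the stopping inequality:
\[
F(\chi_{k+1})\cdot\frac{c_{x_{k+1}}}{\partial_{k+1}} \;>\; \frac{B}{2}, \qquad \text{equivalently,} \qquad \rho \;:=\; \frac{c_{x_{k+1}}}{\partial_{k+1}} \;>\; \frac{B}{2F(\chi_{k+1})}.
\]

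The core of the argument relies on the greedy ordering of $\chi(F)$ already used in the Oracle Mechanism analysis: the ratios $c_{x_i}/\partial_i$ are non-decreasing in $i$, and greedy picks $x_{k+2}$ to minimize $c_s/(F(\chi_{k+1}\cup\{s\})-F(\chi_{k+1}))$ over $s\notin\chi_{k+1}$. Together these imply that $c_s/(F(\chi_{k+1}\cup\{s\})-F(\chi_{k+1})) \geq \rho$ for every $s\in S\setminus\chi_{k+1}$. Combining this with submodularity, monotonicity, and the budget feasibility of $\sstar$:
\[
F^\star - F(\chi_{k+1}) \;\leq\; F(\sstar\cup\chi_{k+1}) - F(\chi_{k+1}) \;\leq\; \sum_{s\in\sstar\setminus\chi_{k+1}}\bigl(F(\chi_{k+1}\cup\{s\})-F(\chi_{k+1})\bigr) \;\leq\; \frac{1}{\rho}\sum_{s\in\sstar}c_s \;\leq\; \frac{B}{\rho}.
\]
Plugging in $\rho > B/(2F(\chi_{k+1}))$ yields $F^\star - F(\chi_{k+1}) < 2F(\chi_{k+1})$, hence $F(\chi_{k+1}) > F^\star/3$.

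It remains to transfer the bound from $F(\chi_{k+1})$ to $F(\chi_k)$. Submodularity gives $\partial_{k+1}=F(\chi_{k+1})-F(\chi_k)\leq F(\{x_{k+1}\})\leq\umax\leq\theta F^\star$ by the large market assumption, so
\[
F(\chi_k) \;=\; F(\chi_{k+1}) - \partial_{k+1} \;>\; \left(\tfrac{1}{3}-\theta\right)F^\star,
\]
which tends to $F^\star/3$ as $\theta\to 0$. The main technical step is the chain of inequalities bounding $F^\star - F(\chi_{k+1})$: this is where the greedy ordering, submodularity, and the stopping condition must be orchestrated together, with the greedy-ordering claim being the most delicate ingredient; the large market assumption then plays only an auxiliary role in passing from $F(\chi_{k+1})$ to $F(\chi_k)$.
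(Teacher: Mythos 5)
Your proof is correct and follows essentially the same route as the paper's: both hinge on the violated stopping condition at index $k+1$, the greedy-ordering fact that every item outside $\chi_{k+1}$ has marginal cost-per-utility at least $c_{x_{k+1}}/\partial_{k+1}$, the budget feasibility of $\sstar$, and the large-market step to pass from $F(\chi_{k+1})$ to $F(\chi_k)$. The only difference is cosmetic — you argue directly that $F^\star - F(\chi_{k+1}) \leq B/\rho < 2F(\chi_{k+1})$ while the paper runs the same chain as a proof by contradiction from $F(\chi_{k+1}) < F^\star/3$.
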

\begin{proof}
First we prove that $F(\chi_{k+1})\geq F^\star/3$. This would imply $F(\chi_{k})\geq F^\star\cdot (1/3-\theta)$ due to the large market assumption, which proves the lemma. 

So, it is enough to show that $F(\chi_{k+1})\geq F^\star/3$. We prove this by contradiction, assume $F(\chi_{k+1}) < F^\star/3$. 
 For more intuition, we first explain our argument for contradiction in words and then we state it more formally. 
From the definition of the greedy sequence $\chi(F)$, it can be seen that conditioned on buying the subset $\chi_{k+1}$, the cost for buying each extra unit of utility is at least $\frac{c_{k+1}}{\partial_{k+1}}$.
So, even if we get the subset $\chi_{k+1}$ for free, the cost for buying an additional $\frac{2 F^\star}{3}$ units of utility (which is needed for the optimum solution) would be at least
\begin{align*}
 \frac{2F^\star}{3}\cdot \frac{c_{k+1}}{\partial_{k+1}} >  
 2F(\chi_{k+1})\cdot \frac{c_{k+1}}{\partial_{k+1}} 
 \geq B. 
\end{align*}
This means cost of the optimum solution is more than $B$.

To formalize this contradiction, just note that by monotonicity of $F$, we have
\begin{align}
F({\sstar}\cup \chi_{k+1})-F({\chi_{k+1}}) > \frac{2 F^\star}{3}. \label{eq.lbscupchi11}
\end{align}
Now observe that by the definition of the greedy sequence $\chi(F)$, 
the cost for buying each extra unit of utility conditioned on having ${\chi_{k+1}}$ is at least $\frac{c_{k+1}}{\partial_{k+1}}$. This fact, and \eqref{eq.lbscupchi11} together imply that 
\begin{align}
c (\sstar\cup \chi_{k+1}) - c (\chi_{k+1})> 
\frac{c_{k+1}}{\partial_{k+1}}\cdot \frac{2 F^\star}{3}. \label{eq.12}
\end{align}
On the other hand, recall that $F(\chi_{k+1}) < F^\star/3$, so we can write \eqref{eq.12} as
\begin{align*}
c (\sstar\cup \chi_{k+1}) - c (\chi_{k+1}) >\,  & 
\frac{c_{k+1}}{\partial_{k+1}}\cdot \frac{2 F^\star}{3}\\
> \, &
 2F(\chi_{k+1})\cdot \frac{c_{k+1}}{\partial_{k+1}} 
> B. 
\end{align*}
This implies $c (\sstar) > B$ which is a contradiction with the budget feasibility of $\sstar$.
\end{proof}

We prove that Mechanism \ref{alg.ptoracle} is almost budget feasible in Lemma \ref{lem.ptorbfeasible} . Before that, we first need to prove the following lemma which will be used in the proof of Lemma \ref{lem.ptorbfeasible}. This Lemma states that $F(\chi_k)$ is (roughly) at least half of the optimal utility in a large market.

\begin{lemma}\label{lem.xklarge} 
Suppose we are given a $\theta$-large market with $\theta\leq 1/2$ and let $\chi_k$ denote the subset chosen by Mechanism \ref{alg.ptoracle} when run on this instance. Then we have 
\begin{align*}
F(\chi_k)\geq ({1}/{2}- \theta)\cdot F^\star
\end{align*}
\end{lemma}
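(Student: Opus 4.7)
I will proceed by contradiction, following the template of Lemma~\ref{lem.ptorratio}'s proof. Assume toward contradiction that $F(\chi_k) < (1/2 - \theta) F^\star$; the goal is to derive $c(S^\star) > B$, contradicting feasibility of the optimum $S^\star$.

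The first step invokes the large market bound $\partial_{k+1} \leq \umax \leq \theta F^\star$, so that $F(\chi_{k+1}) = F(\chi_k) + \partial_{k+1} < F^\star/2$. The second step uses that $k$ is the largest index satisfying the stopping rule, so $F(\chi_{k+1}) \cdot c_{x_{k+1}}/\partial_{k+1} > B/2$, whence $c_{x_{k+1}}/\partial_{k+1} > B/(2F(\chi_{k+1})) > B/F^\star =: r^\star$. The third step exploits greedy together with submodularity exactly as in Lemma~\ref{lem.ptorratio}: for every $s \in S^\star \setminus \chi_k$, the cost-per-marginal ratio satisfies $c_s \geq (c_{x_{k+1}}/\partial_{k+1}) \cdot (F(\chi_k \cup \{s\}) - F(\chi_k))$; summing and invoking submodularity to bound $\sum_{s \in S^\star \setminus \chi_k}(F(\chi_k\cup\{s\})-F(\chi_k)) \geq F^\star - F(\chi_k)$ yields $c(S^\star) \geq (c_{x_{k+1}}/\partial_{k+1}) \cdot (F^\star - F(\chi_k))$. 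Substituting the rate lower bound from Step~2 and using $c(S^\star) \leq B$ should then drive the contradiction, modulo $O(\theta)$ slack absorbed from the large market assumption.

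The main obstacle is the final chaining. The straightforward combination --- the one in the proof of Lemma~\ref{lem.ptorratio} --- only delivers $F(\chi_k) \geq (1/3 - O(\theta)) F^\star$, losing a factor of two in the passage $F(\chi_{k+1}) = F(\chi_k) + \partial_{k+1}$ inside the rate bound $B/(2F(\chi_{k+1}))$. To tighten the bound to $(1/2-\theta) F^\star$ I expect to split into two cases depending on the magnitude of $c_{x_{k+1}}/\partial_{k+1}$ relative to $r^\star$: in the easy case $c_{x_{k+1}}/\partial_{k+1} \leq r^\star$, the stopping inequality $F(\chi_{k+1}) c_{x_{k+1}}/\partial_{k+1} > B/2$ directly forces $F(\chi_{k+1}) > F^\star/2$, hence $F(\chi_k) \geq (1/2-\theta) F^\star$ after one more application of the large market bound. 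In the complementary case, I plan to exploit additionally that the non-decreasing greedy rates together with the stopping rule give $c(\chi_k) \leq F(\chi_k) \cdot c_{x_k}/\partial_k \leq B/2$, so the ``budget slack'' $B - c(\chi_k) \geq B/2$ available to $S^\star \setminus \chi_k$ must be compatible with the lower bound $c(S^\star\setminus \chi_k) \geq (c_{x_{k+1}}/\partial_{k+1}) \cdot (F^\star - F(\chi_k))$; combining these refinements should recover the missing factor of two and yield $F(\chi_k) \geq (1/2-\theta) F^\star$ as claimed.
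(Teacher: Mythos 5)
Your diagnosis of the obstruction is exactly right, and your Case~1 is correct: if $c_{x_{k+1}}/\partial_{k+1}\leq B/F^\star$, the failure of the stopping condition at $k+1$ forces $F(\chi_{k+1})>F^\star/2$ directly. But Case~2 is not closed in your proposal, and it cannot be closed, because the statement as written is false. The budget-slack idea fails for a structural reason: $S^\star$ need not contain $\chi_k$, so you may not charge $c(\chi_k)$ against $S^\star$'s budget; all you have is $c(S^\star\setminus\chi_k)\leq c(S^\star)\leq B$, and chaining that with the rate lower bound returns exactly the $1/3$ of Lemma~\ref{lem.ptorratio}. A concrete counterexample (additive, hence submodular, $F$): take $B=1$ and $N$ large; let $N/10$ ``cheap'' sellers each have marginal utility $1/N$ and cost $\delta/N$ with $\delta$ tiny, and let $N$ ``expensive'' sellers each have marginal utility $1/N$ and cost $5/N$. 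The greedy sequence lists the cheap sellers first; at the first expensive seller the product $F(\chi_j)\cdot c_{x_j}/\partial_j$ jumps to $(0.1+1/N)\cdot 5>1/2=B/2$, so the mechanism stops with $F(\chi_k)=0.1$. The optimum buys all cheap sellers essentially for free and then $N/5$ expensive ones with the full budget, so $F^\star\approx 0.3$, and $F(\chi_k)/F^\star\to 1/3$, which violates $1/2-\theta$ since $\theta\to 0$ here.

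The paper's own proof is precisely the ``straightforward combination'' you predicted would lose a factor of two, and it does: it asserts $\frac{F^\star}{2}\cdot\frac{c_{k+1}}{\partial_{k+1}}\geq B$, but the stopping rule plus the contradiction hypothesis $F(\chi_{k+1})<F^\star/2$ only give $\frac{c_{k+1}}{\partial_{k+1}}>\frac{B/2}{F(\chi_{k+1})}>\frac{B}{F^\star}$, which yields $B/2$ on the right-hand side, not $B$ (in the example the ratio is $5$ while the needed $2B/F^\star$ is $20/3$). The statement becomes true, and the paper's argument then goes through verbatim, if $F^\star$ is replaced by $F^\star_{B/2}$, the optimum utility achievable with budget $B/2$: then the hypothesis $F(\chi_{k+1})<F^\star_{B/2}/2$ gives $\frac{c_{k+1}}{\partial_{k+1}}>B/F^\star_{B/2}$, the greedy/submodularity bound gives $c(S^\star_{B/2})>\frac{B}{F^\star_{B/2}}\cdot\frac{F^\star_{B/2}}{2}=B/2$, and this genuinely contradicts $c(S^\star_{B/2})\leq B/2$. (In the example above $F^\star_{B/2}=0.2$ and $F(\chi_k)=0.1$, meeting the corrected bound with equality.) So the right move is not to look for a cleverer Case~2 but to change the benchmark to the budget-$B/2$ optimum; note this weakening propagates into the constants of Lemma~\ref{lem.ptorbfeasible}, which would then need to be revisited.
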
 
\begin{proof}
Recall that w.l.o.g. we assume $\chi=\langle1,\ldots,n\rangle$.
To prove the lemma, it is enough to show that $F(\chi_{k+1})\geq F^\star/2$.
\begin{align*}
F(\chi_{k})\geq F(\chi_{k+1})-\theta\cdot F^\star \geq ({1}/{2}- \theta)\cdot F^\star,
\end{align*}
where the first inequality is due to the large market assumption.

We prove $F(\chi_{k+1})\geq F^\star/2$ by contradiction, assume $F(\chi_{k+1}) < F^\star/2$. For more intuition, we first explain our argument for contradiction in words and then we state it more formally.
Conditioned on buying the subset $\chi_{k+1}$, the cost for buying each extra unit of utility is at least $\frac{c_{k+1}}{\partial_{k+1}}$.
So, even if we get the subset $\chi_{k+1}$ for free, the cost for buying an additional $\frac{F^\star}{2}$ units of utility (which is needed for the optimum solution) would be strictly more than $\frac{F^\star}{2}\cdot \frac{c_{k+1}}{\partial_{k+1}}\geq B$. To formalize this contradiction, just note that by monotonicity of $F$, we have
\begin{align}
F({\sstar}\cup \chi_{k+1})-F({\chi_{k+1}}) > \frac{F^\star}{2}. \label{eq.lbscupchi}
\end{align}

Now observe that by the definition of the greedy sequence $\chi(F)$, 
the cost for buying each extra unit of utility conditioned on having ${\chi_{k+1}}$ is at least $\frac{c_{k+1}}{\partial_{k+1}}$. This fact, and \eqref{eq.lbscupchi} together imply that 
\begin{align*}
c (\sstar\cup \chi_{k+1}) - c (\chi_{k+1})> \frac{c_{k+1}}{\partial_{k+1}}\cdot \frac{F^\star}{2} \geq B. 
\end{align*}
This implies $c (\sstar) > B$ which is a contradiction with the budget feasibility of $\sstar$.

\end{proof}

\begin{lemma}\label{lem.ptorbfeasible} 
Mechanism \ref{alg.ptoracle} is almost budget feasible
\end{lemma}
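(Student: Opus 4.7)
My plan is to bound the total payment $\sum_{i\in\chi_k} r_i\,\partial_i$ by showing that for every winner $i$, the individual rate $r_i$ is no larger than the optimum cost-per-utility rate $\rstar := B/F^\star$ up to a $(1+O(\theta))$ factor, and then summing. The central tool will be Lemma~\ref{lem.xklarge} applied to the reduced instance in which seller $i$ is removed.

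First I would fix a winner $i\in\chi_k$ and consider the auxiliary instance with seller set $S\setminus\{i\}$, the restricted utility function, and budget $B$. Let $F^\star_i$ denote its optimum utility, and let $\over{\chi}_{\over{k}}$ denote the subset that Mechanism~\ref{alg.ptoracle} would select on this instance. This is precisely the greedy set whose utility appears in the stopping rate defining $r_i$, so $r_i=(B/2)/F(\over{\chi}_{\over{k}})$. To invoke Lemma~\ref{lem.xklarge} on the reduced instance I must check its largeness ratio is still at most $1/2$. By monotonicity and submodularity of $F$,
\[
F^\star_i \;\geq\; F(\sstar\setminus\{i\}) \;\geq\; F(\sstar)-F(\{i\}) \;\geq\; (1-\theta)F^\star,
\]
so the reduced largeness ratio is at most $\umax/F^\star_i\leq \theta/(1-\theta)$, which is $o(1)$. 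Lemma~\ref{lem.xklarge} then yields $F(\over{\chi}_{\over{k}})\geq (\tfrac{1}{2}-O(\theta))F^\star_i\geq \tfrac{F^\star}{2}(1-O(\theta))$, whence $r_i\leq \rstar(1+O(\theta))$.

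Summing over the winners and using $\sum_{i\in\chi_k}\partial_i=F(\chi_k)\leq F^\star$ together with $\rstar F^\star = B$ gives
\[
\sum_{i\in\chi_k}r_i\,\partial_i \;\leq\; \rstar\bigl(1+O(\theta)\bigr)\,F(\chi_k) \;\leq\; B\bigl(1+O(\theta)\bigr) \;=\; B+o(B),
\]
which is exactly the definition of almost budget feasibility.

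The main obstacle, modest as it is, is the bookkeeping tied to the large-market assumption on the reduced instance: proving $F^\star_i\geq (1-\theta)F^\star$ via submodularity and deducing the resulting bound on its largeness ratio, so that Lemma~\ref{lem.xklarge} legitimately applies. Matching up the definition of $r_i$ (given as $\r(c',F,B/2)$) with the greedy subset selected by Mechanism~\ref{alg.ptoracle} on the reduced seller set is straightforward since both use the same stopping condition at budget $B/2$.
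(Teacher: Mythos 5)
Your proof is correct and follows essentially the same route as the paper's: both bound each winner's rate $r_i$ by applying Lemma~\ref{lem.xklarge} to the instance with seller $i$ removed, and then sum against $\sum_{i\in\chi_k}\partial_i=F(\chi_k)$. The only (cosmetic) difference is that you compare $r_i$ to $\rstar=B/F^\star$ directly, whereas the paper compares it to $B/F(\chi_k)$ via the intermediate quantity $F^\star_{B/2}$; your version is, if anything, slightly cleaner since you explicitly verify that the reduced instance remains a large market before invoking Lemma~\ref{lem.xklarge}.
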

\begin{proof}
Let $r$ denote $\r(c, F, B/2)$, which is equal to $\frac{B/2}{F(\chi_k)}$ by definition.
We show that for any $i\in \chi_{k}$, we have $r_i \leq 2 r\cdot (1-3\theta)^{-1}$. This would prove that sum of the payments is at most 
$B\cdot (1-3\theta)^{-1}$ since:
\begin{align*}
\sum_{i=1}^k r_i \partial_i \leq (1-3\theta)^{-1}\cdot\sum_{i=1}^k 2 r\partial_i =  (1-3\theta)^{-1}\cdot 2 r F(\chi_k) = (1-3\theta)^{-1}\cdot B.
\end{align*}
This would finish the proof due to the large market assumption. 

To this end, fix a seller $i\in S$ and let $\over{\chi}(F)=\langle \over{x}_1,\ldots,\over{x}_{n-1}\rangle$ denote the greedy sequence for when $i$ is removed from the set of sellers $S$. 
Also, let $\over{\chi}_j$ denote the subset containing first $j$ elements of $\over{\chi}(F)$, and $\over{\partial}_j=F(\over{\chi}_j)-F(\over{\chi}_{j-1})$. 
Finally, let $\over{k}$ be the largest integer $j$ satisfying $F(\over{\chi}_{j})\cdot {c}_{\over{x}_j}/\over{\partial}_{j}\leq B/2$. 

Suppose $F^\star_{B/2}$ denotes the optimal utility for when the budget is reduced to $B/2$. Then we would have:
\begin{align}
r_i =  \frac{B/2}{F(\over{\chi}_{\over{k}})} \leq &\frac{B/2}{ (1/2-\theta)\cdot F^\star_i } \label{eq.use.lem.xklarge}\\ 
\leq & \frac{B/2}{ (1/2-\theta)\cdot (1-\theta)\cdot F^\star_{B/2} } \label{eq.use.lrgmrkt}\\ 
\leq & \frac{B/2}{ (1/2-\theta)\cdot (1-\theta)\cdot F(\chi_k) } \nonumber\\
\leq & \frac{r}{ (1/2-\theta)\cdot (1-\theta)} \nonumber \\
\leq & \frac{2r}{ 1-3\theta } \nonumber
\end{align}
where \eqref{eq.use.lem.xklarge} is due to Lemma \ref{lem.xklarge} and \eqref{eq.use.lrgmrkt} is due to the $\theta$-largeness of the market.
\end{proof}

\subsection{From Oneway-Truthfulness to Truthfulness}\label{sec.o2t}
Changing the payment rule is the key to convert all of our oneway-truthful mechanisms to truthful mechanisms. In fact, the winner selection rule remains identically the same, however, we replace the payment rule in all of our mechanisms by the following simple payment rule:

\paragraph{The Critical Cost Payment Rule} Each winner $i$ is paid the highest cost that it could report and still remain a winner.
\\

Individual rationality and truthfulness of all the mechanisms 
are trivial under this payment rule. Also, the proofs for approximation ratio remain the same since we have not changed the winner selection rule. The only non-trivial part is proving the budget feasibility under this new payment rule. For all of the mechanisms in this paper, we can show that replacing the original payment rule with the Critical Cost payment rule, does not increase the payment of the mechanism to any of the winners. 
First we prove this for the Oracle mechanism.

\begin{lemma} \label{lem.oracleccbf}
In the Oracle mechanism, the payment to each winner does not increase if we replace the payment rule with the Critical Cost payment rule.
\end{lemma}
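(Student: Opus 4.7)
The plan is to show that the Critical Cost payment $\overline{c}_i$ of any winner $i$ (defined as the supremum of reported costs under which $i$ would still be selected) satisfies $\overline{c}_i \leq 2 r_i \partial_i$, which is exactly the payment in the original Oracle mechanism. The central observation driving the proof is that the original payment $2 r_i \partial_i$ does not depend on $c_i$ at all. By its definition $r_i$ is computed from the instance in which $i$ has been removed from $S$, so it is a function of the other sellers' reported costs only. The quantity $\partial_i = F(\chi_i) - F(\chi_{i-1})$ depends on the greedy sequence $\chi(F)$ and $i$'s position in it; but the greedy sequence is defined purely in terms of the submodular function $F$ and involves no costs, so $\partial_i$ is constant in $c_i$ as well.

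Given this invariance, the argument reduces to a one-line application of individual rationality. First I would verify the two invariances carefully: $r_i$ does not see $c_i$ because seller $i$ is removed before the optimum rate is computed, and the position of $i$ in $\chi(F)$, along with $\partial_i$, is determined only by $F$. Next, I would consider any hypothetical report $c_i'$ under which $i$ remains a winner. Because the mechanism treats $c_i'$ as the true cost of $i$, its output in this modified instance must itself satisfy the conclusion of Lemma 12 (individual rationality), which gives $c_i' \leq 2 r_i \partial_i$; here I use the fact that $r_i$ and $\partial_i$ are the same as in the original instance.

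Taking the supremum of all such $c_i'$ yields $\overline{c}_i \leq 2 r_i \partial_i$, so replacing the original payment rule with the Critical Cost rule can only decrease (or preserve) the payment to each winner. Since this holds for every winner, the sum of payments does not increase, which is exactly the statement of the lemma. I do not anticipate a serious obstacle: the only non-routine point is making explicit the observation that Lemma 12 applies to \emph{any} cost profile the mechanism sees, not only to the truthful one, so that it can be invoked on the hypothetical instance in which $i$ reports $c_i'$.
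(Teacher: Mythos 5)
Your overall strategy --- bound the critical cost of winner $i$ by applying the individual-rationality lemma to the hypothetical instance in which $i$ reports $c_i'$ and still wins --- is sound, and it is essentially a repackaging of the paper's argument (the paper re-runs the budget-infeasibility computation inside the ``fake'' instance rather than citing IR, but the underlying calculation is the same). However, one of your two claimed invariances is false. You assert that $\partial_i$ does not depend on $c_i$ because the greedy sequence $\chi(F)$ ``involves no costs.'' That reading rests on a typo in the paper's displayed formula for $\chi(F)$: the sequence is intended to be ordered by increasing cost per marginal utility $c_s/\partial_s$, as is unambiguous from the individual-rationality proof (which uses ``$c_i/\partial_i\le c_j/\partial_j$ iff $i\le j$''), from the oneway-truthfulness proof of Lemma~\ref{lem.ortruthful} (which argues that a seller moves herself \emph{later} in $\chi(F)$ by raising her report --- impossible if the order were cost-free), and from the proof of Lemma~\ref{lem.optboverk}. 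Under a cost-free greedy order those arguments, and indeed the individual-rationality lemma you invoke, would be false. Consequently, when $i$ reports $c_i'\neq c_i$, both her position in the greedy sequence and her marginal $\partial_i$ change, so $2r_i\partial_i$ is \emph{not} invariant in $c_i$, and the inequality $c_i'\le 2r_i\partial_i$ does not follow directly from IR applied to the modified instance.

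The gap is repairable, and the repair is precisely the step you skipped. Let $\partial_i'$ denote $i$'s marginal in the greedy sequence of the instance where she reports $c_i'$. For $c_i'\le c_i$ there is nothing to prove: IR in the true instance already gives $c_i'\le c_i\le 2r_i\partial_i$. For $c_i'>c_i$, the first $i-1$ elements of the greedy sequence are unchanged and $i$ can only be selected at a later position, so by submodularity $\partial_i'\le\partial_i$; then IR in the modified instance --- where, as you correctly observe, $r_i$ is unchanged because it is computed with $i$ removed --- yields $c_i'\le 2r_i\partial_i'\le 2r_i\partial_i$. (This is exactly the comparison $\overline{\partial}_{\overline{k}}\le\partial_i$ that the paper's proof makes explicit.) With that monotonicity inserted, taking the supremum over winning reports closes your proof; without it, the argument is incomplete.
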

\begin{proof}  
We need to show that the Critical Cost payment rule does not pay to winner $i$ more than $2r_i\cdot \partial_i$. Equivalently, we show that if $i$ reports a cost $\over{c_i}>2r_i\cdot \partial_i$, then it will not be selected as a winner anymore. 
The proof is by contradiction, suppose $i$ reports a cost higher than $\over{c_i} > 2r_i\cdot \partial_i$ and still gets selected as a winner. 

Let the instance in which the cost $c_i$ is replaced by $\over{c_i}$ be called the {\em fake} instance. 
Let $\over{c}$ denote the cost vector in the fake instance which is identical to the original cost vector $c$ except that $c'_i$ is set to $\over{c_i}$. For any subset of sellers $S'\subseteq S$, let $\over{c}(S') = \sum_{s\in S'} \over{c}_s$.

Also, let $\over{\chi}(F)$ denote the greedy sequence for the fake instance, $\over{\chi}_j$ denote the subset containing first $j$ elements of $\over{\chi}(F)$, and $\over{\partial}_j=F(\over{\chi}_j)-F(\over{\chi}_{j-1})$. Finally, let $\over{\sstar}$ denote the optimum budget feasible subset in the fake instance.

For more intuition, we first explain our argument for contradiction in words and then we state it more formally. 
See that the optimum utility is at least $F^\star_i$ in the fake instance (recall that $F^\star_i$ denotes the optimum utility when $i$ is removed from $S$), which implies 
$F(\over{\sstar})-F(\over{\chi}_{\over{k}})\geq \frac{F^\star_i}{2}$.
Also, we will show that even if we get $\over{\chi}_{\over{k}}$ for free, the cost for buying an additional $F(\over{\sstar})-F(\over{\chi}_{\over{k}})$ units of utility would be more than $2 r_i$ {\em per unit of utility}. The two latter facts together would imply that 
\begin{align*}
\over{c} (\over{\sstar}\backslash \over{\chi}_{\over{k}}) > 
2 r_i \cdot \left( F(\over{\sstar})-F(\over{\chi}_{\over{k}}) \right)
\geq  2 r_i\cdot \frac{F^\star_i}{2} = B
\end{align*}
which implies $c (\over{\sstar}) > B$. This is a contradiction with budget feasibility of $\over{\sstar}$.

To formalize this contradiction, just note that by the monotonicity of $F$, we have
\begin{align}
F(\over{\sstar}\cup \over{\chi}_{\over{k}})-F(\over{\chi}_{\over{k}})\geq \frac{F^\star_i}{2}.\label{eq.lbscupchi112}
\end{align}
Now observe that by the definition of the greedy sequence $\over{\chi}(F)$, the cost for buying each extra unit of utility, conditioned on having the subset $\over{\chi}_{\over{k}}$, 
is at least ${\over{c_i}}/{\over{\partial}_{\over{k}}}$, i.e. the cost of the seller at position $\over{k}$ of $\over{\chi}(F)$ divided by the marginal utility that it adds. Also, see that ${\over{c_i}}/{\over{\partial}_{\over{k}}} > 2 r_i$ since ${\over{c_i}}> 2r_i\cdot\partial_i$ and ${\over{\partial}_{\over{k}}}\leq \partial_i$. This proves that the cost for buying each extra unit of utility, conditioned on having the subset $\over{\chi}_{\over{k}}$, is more than $2 r_i$. 
This fact, and \eqref{eq.lbscupchi112} together imply that 
\begin{align*}
\over{c} (\over{\sstar}\cup \over{\chi}_{\over{k}}) - \over{c} (\over{\chi}_{\over{k}}) > 
2 r_i\cdot \frac{F^\star_i}{2} = B.
\end{align*}
This implies $\over{c} (\over{\sstar}) > B$ which is a contradiction with the budget feasibility of $\sstar$.
\end{proof}

Now, we prove an equivalent version of Lemma \ref{lem.oracleccbf} for Mechanism \ref{alg.ptoracle}. 
\begin{lemma}
In Mechanism \ref{alg.ptoracle}, the payment to each winner does not increase if we replace the payment rule with the Critical Cost payment rule.
\end{lemma}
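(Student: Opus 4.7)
The plan is to show that if winner $i$ reports a cost $\overline{c}_i > r_i \cdot \partial_i$, then she is no longer selected by Mechanism~\ref{alg.ptoracle}; equivalently, her critical cost is bounded above by her original payment $r_i \partial_i$, so switching to the Critical Cost rule cannot increase what she is paid.

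The argument mirrors the proof of Lemma~\ref{lem.oracleccbf}. Let $\overline{c}$ be the cost vector obtained from $c$ by replacing $c_i$ with $\overline{c}_i$, let $\overline{\chi}(F) = \langle \overline{x}_1, \overline{x}_2, \ldots\rangle$ be the greedy sequence under $\overline{c}$ with prefixes $\overline{\chi}_j$ and marginals $\overline{\partial}_j$, and let $\overline{k}$ be the largest index satisfying $F(\overline{\chi}_{\overline{k}})\cdot \overline{c}_{\overline{x}_{\overline{k}}}/\overline{\partial}_{\overline{k}} \le B/2$, so the winners in the fake instance are $\overline{\chi}_{\overline{k}}$. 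Let $\overline{\sstar}$ be an optimal budget-feasible subset for $\overline{c}$. Toward a contradiction, assume $i \in \overline{\chi}_{\overline{k}}$ and say $i$ sits at some position $j \le \overline{k}$, so $\overline{\partial}_j \le \partial_i$ by submodularity. Then $\overline{c}_i/\overline{\partial}_j > r_i$, and because the greedy ratios $\overline{c}_{\overline{x}_\ell}/\overline{\partial}_\ell$ are non-decreasing in $\ell$, every element at or past position $\overline{k}$ has cost-per-marginal-utility strictly greater than $r_i$.

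With this in hand, I would derive the contradiction by lower-bounding $\overline{c}(\overline{\sstar})$. A standard submodularity/greedy argument converts the ratio bound into $\overline{c}(\overline{\sstar}\setminus\overline{\chi}_{\overline{k}}) > r_i\cdot\big(F(\overline{\chi}_{\overline{k}}\cup\overline{\sstar})-F(\overline{\chi}_{\overline{k}})\big) \ge r_i\cdot\big(F(\overline{\sstar})-F(\overline{\chi}_{\overline{k}})\big)$. Next, $F(\overline{\sstar}) \ge F^{\star}_i$ (every subset avoiding $i$ is still feasible in the fake instance), while the mechanism's cutoff $F(\overline{\chi}_{\overline{k}})\cdot \overline{c}_{\overline{x}_{\overline{k}}}/\overline{\partial}_{\overline{k}} \le B/2$ together with $\overline{c}_{\overline{x}_{\overline{k}}}/\overline{\partial}_{\overline{k}} > r_i$ yields $F(\overline{\chi}_{\overline{k}}) < (B/2)/r_i = F(\overline{\chi}(i)_{\overline{k}(i)})$, where $\overline{\chi}(i)_{\overline{k}(i)}$ is the winner set defining $r_i$. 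Plugging these into the previous inequality and invoking a large-market bound relating $F^{\star}_i$ to $F(\overline{\chi}(i)_{\overline{k}(i)})$ (analogous to Lemma~\ref{lem.xklarge} applied in the $i$-removed, budget-$B/2$ instance) gives $\overline{c}(\overline{\sstar}) > B$, contradicting the budget feasibility of $\overline{\sstar}$.

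The main obstacle, compared to Lemma~\ref{lem.oracleccbf}, is that Mechanism~\ref{alg.ptoracle}'s winner-selection uses a ratio cutoff rather than an absolute utility cutoff, so one has to first convert the ratio inequality into an upper bound on $F(\overline{\chi}_{\overline{k}})$ before the chain of inequalities closes. Tracking the two factors of $2$ (one from the $B/2$ threshold in the winner-selection and one from the reduced budget used to define $r_i$) together with the $(1-o(1))$ large-market slack is where I expect the only nontrivial arithmetic to be required.
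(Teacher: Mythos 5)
Your setup and the first half of your inequality chain match the paper's: you correctly reduce the claim to showing that reporting $\overline{c}_i > r_i\partial_i$ makes $i$ lose, and your deduction $F(\overline{\chi}_{\overline{k}}) < (B/2)/r_i = F(\overline{\overline{\chi}}_{\overline{\overline{k}}})$, where $\overline{\overline{\chi}}_{\overline{\overline{k}}}$ is the winner set of the $i$-removed instance that defines $r_i$, is exactly the paper's chain \eqref{eq.p1}--\eqref{eq.p3}. The gap is in how you close: lower-bounding $\overline{c}(\overline{\sstar})$ via the optimum, as in Lemma \ref{lem.oracleccbf}, cannot work here, and the ``nontrivial arithmetic'' you flag at the end is precisely where it fails. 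Your chain gives $\overline{c}(\overline{\sstar}) > r_i\bigl(F^\star_i - F(\overline{\overline{\chi}}_{\overline{\overline{k}}})\bigr)$, and to contradict budget feasibility you need this to exceed $B = 2r_i\,F(\overline{\overline{\chi}}_{\overline{\overline{k}}})$, i.e.\ you need $F^\star_i \ge 3F(\overline{\overline{\chi}}_{\overline{\overline{k}}})$. But the large-market bound you invoke (Lemma \ref{lem.xklarge} in the $i$-removed instance) says $F(\overline{\overline{\chi}}_{\overline{\overline{k}}}) \ge (1/2-\theta)F^\star_i$, i.e.\ $F^\star_i \lesssim 2F(\overline{\overline{\chi}}_{\overline{\overline{k}}})$ --- the opposite direction; the lower bound you obtain on $\overline{c}(\overline{\sstar})$ is only about $B/2$. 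The structural reason is the mismatch of factors of $2$: in the Oracle Mechanism the payment is $2r_i\partial_i$ with $r_i = B/F^\star_i$, so ``cost per extra unit $> 2r_i$'' times ``extra utility $\ge F^\star_i/2$'' gives exactly $B$; in Mechanism \ref{alg.ptoracle} the payment is only $r_i\partial_i$ with $r_i \approx B/F^\star_i$, so the same product is about $B/2$ and yields no contradiction.

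The missing idea is the paper's Claim \ref{clm.subset}: since $\overline{c}_i/\overline{\partial}_{\overline{i}} > r_i$ while each of the first $\overline{\overline{k}}$ elements of the $i$-removed greedy sequence has ratio at most $r_i$, seller $i$ cannot occupy any of the first $\overline{\overline{k}}$ positions of the fake sequence; hence the two sequences agree on their first $\overline{\overline{k}}$ elements and $\overline{\overline{\chi}}_{\overline{\overline{k}}} \subseteq \overline{\chi}_{\overline{k}}$. Monotonicity then gives $F(\overline{\overline{\chi}}_{\overline{\overline{k}}}) \le F(\overline{\chi}_{\overline{k}})$, which combined with your bound $F(\overline{\chi}_{\overline{k}}) < F(\overline{\overline{\chi}}_{\overline{\overline{k}}})$ is already the contradiction ($B/2 > B/2$). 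Note that this argument is purely structural --- it never touches $\sstar$, budget feasibility of the optimum, or the large-market assumption --- whereas your route needs all three and still cannot close.
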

\begin{proof}  
We need to show that the Critical Cost payment rule does not pay to winner $i$ more than $r_i\cdot \partial_i$. Equivalently, we show that if $i$ reports a cost $x$ with $x > r_i\cdot \partial_i$, then it will not be selected as a winner anymore. 
The proof is by contradiction, suppose $i$ reports a cost $x$ which is higher than $r_i\cdot \partial_i$ and still gets selected as a winner. 

Let the instance in which the cost $c_i$ is replaced by $x$ be called the {\em fake} instance. 
Let $\over{c}$ denote the cost vector in the fake instance which is identical to the original cost vector $c$ except that $\over{c}_i$ is set to $x$. For any subset of sellers $S'\subseteq S$, let $\over{c}(S') = \sum_{s\in S'} \over{c}_s$; when there is no risk of confusion, we sometimes denote $c(\{i\})$ by $c(i)$ for any cost function $c(\cdot)$. 

Also, consider the instance in which the cost $c_i$ is replaced by $\infty$ (seller $i$ is removed) and call it the {\em large} instance. Let $\over{\over{c}}$ denote the cost vector in the large instance which is identical to the original cost vector $c$ except that $\over{\over{c}}_i$ is set to $\infty$. For any subset of sellers $S'\subseteq S$, let $\over{\over{c}}(S') = \sum_{s\in S'} \over{\over{c}}_s$.

Let $\chibar(F)=\langle \over{x}_1,\ldots, \over{x}_{n}\rangle$ and $\chidbar(F)=\langle \over{\over{x}}_1,\ldots, \over{\over{x}}_{n-1}\rangle$ respectively denote the greedy sequences for the fake and large instances, and $\over{\chi}_j , \chidbar_j$ respectively denote the subsets containing the first $j$ elements of $\over{\chi}(F)$ and $\chidbar(F)$. Also, let $\over{\partial}_j=F(\over{\chi}_j)-F(\over{\chi}_{j-1})$ and $\over{\over{\partial}}_j=F(\chidbar_j)-F(\chidbar_{j-1})$. 

Suppose that Mechanism \ref{alg.ptoracle} picks $\chibar_{\over{k}}$ as the set of winners in the fake instance and $\chidbar_{\over{\over{k}}}$ as the set of winners in the large instance, i.e. the mechanism picks the first $\over{k}$ elements of $\chibar(F)$ and the first $\over{\over{k}}$ elements of $\chidbar(F)$ as the set of winners in each of the instances.  
First, we need to prove the following claim and then we proceed to the proof of the lemma. 

\begin{claim} \label{clm.subset}
The first $\over{\over{k}}$ elements of $\chibar(F)$ and $\chidbar(F)$ are identical. 
\end{claim}
\begin{subproof}
Let $\over{{i}}$ denote the position of seller $i$ in $\chibar(F)$, i.e. we have $\over{x}_{\over{i}}=i$. Then, see that 
\begin{align}
c({{\over{\over{x}}}_{{\over{\over{k}}}}})/\over{\over{\partial}}_{\over{\over{k}}}  \leq r_i, \label{eq.subset1}
\end{align}
which holds since we have $F(\chidbar_k)\cdot c({{\over{\over{x}}}_{{\over{\over{k}}}}})/\over{\over{\partial}}_{\over{\over{k}}} \leq B/2$ by the definition of $\over{\over{k}}$ and $r_i = B/(2 F(\chidbar_{\over{\over{k}}}))$ by the definition of stopping rate.

On the other hand, see that 
\begin{align}
\over{{c}} ({x_{{\over{{i}}}}}) /\over{{\partial}}_{{\over{i}}} > & r_i \cdot \partial_i / \over{{\partial}}_{{\over{i}}} \nonumber \\
\geq & r_i \label{eq.321}
\end{align}
where \eqref{eq.321} is due to the fact that 
$\partial_i \geq  \over{{\partial}}_{{\over{i}}}$; this fact holds because the first $i-1$ elements of $\chi(F)$ and $\chibar(F)$ are identical, which means the marginal utility added by seller $i$ in $\chi(F)$ is more than her marginal in $\chibar(F)$.

To summarize, see that by \eqref{eq.subset1} and \eqref{eq.321} we have:
\begin{align*}
c({{\over{\over{x}}}_{{\over{\over{k}}}}})/\over{\over{\partial}}_{\over{\over{k}}}  \leq & \,r_i,\\
\over{{c}} ({\over{x}_{{\over{{i}}}}}) /\over{{\partial}}_{{\over{i}}} > &\,  r_i 
\end{align*}
which means seller $i$ will be never be used in its first $\over{\over{k}}$ positions of $\chibar(F)$ (due to the greedy construction of the sequence). 
Consequently, the first $\over{\over{k}}$ elements of $\chibar(F)$ and $\chidbar(F)$ are identical.
 \end{subproof}

Now we are ready to see to the contradiction, it follows from the following set of inequalities which will be clarified below.
\begin{align}
B/2 \geq\, & F(\chibar_{\over{k}}) \cdot \over{c}_{\over{k}}/\over{\partial_{\over{k}}} \label{eq.p1}\\
> \, & F(\chibar_{\over{k}}) \cdot \frac{r_i\cdot \partial_i}{\over{\partial_{\over{{i}}}}} \label{eq.p2}\\
\geq\, & F(\chibar_{\over{k}}) \cdot r_i  \label{eq.p3}\\
\geq\, & F(\chidbar_{\over{\over{k}}}) \cdot r_i \label{eq.p4}\\
= & B/2 \nonumber
\end{align}
\eqref{eq.p1} is due to the definition of $\over{k}$ in the fake instance; \\
\eqref{eq.p2} holds since we have assumed $\over{i}\leq \over{k}$ (i.e. $i$ wins in the fake instance) and so we have $\over{c}({\over{x}}_{\over{i}})/\over{\partial_{\over{i}}} \leq \over{c}_{\over{k}}/\over{\partial_{\over{k}}} $ by the definition of the greedy sequence, this fact, and the fact that $\over{c}({\over{x}}_{\over{i}}) > r_i \cdot \partial_i$ directly imply \eqref{eq.p2}; \\
 \eqref{eq.p3} holds since $\partial_i \geq \over{\partial}_{\over{i}}$; we already proved this is true in the proof of Claim \ref{clm.subset}: because the first $i-1$ elements of $\chi(F)$ and $\chibar(F)$ are identical;\\
\eqref{eq.p4} follows by the monotonicity of $F$ and since $\chidbar_{\over{\over{k}}} \subseteq \chibar_{\over{k}}$ (which holds by  Claim \ref{clm.subset}).
\end{proof}

\subsection{Strictly Budget Feasible Mechanisms} \label{sec.strictbf}

In this section, we show how to convert our almost budget feasible mechanisms to strictly budget feasible mechanisms. We start with the Oracle Mechanism, i.e. Mechanism \ref{alg.oracle}. Recall that in Lemma \ref{lem.oraclebudget}, we proved that sum of the payments in the Oracle mechanism is at most $B\cdot (1-\theta)^{-1}$. So, if instead of budget $B$, we give a reduced budget $B\cdot (1-\theta)$ to the oracle mechanism, then the sum of its payments would not exceed $B$. 

It only remains to show that this budget reduction does not affect the approximation ratio significantly. To this end, first we prove that the budget reduction does not affect the optimum utility significantly.

\begin{lemma}\label{lem.optboverk} 
Assume we are given a $\theta$-large market 
and Let $F^\star_{b}$ denote the optimal solution for this market when 
when the budget is reduced to $b\leq B$. Then for any given constant $\epsilon>0$ and $b=B\cdot (1-\epsilon)$ we have:
\begin{align*}
F^\star_{b}\geq (1-\theta-\epsilon)\cdot F^\star\end{align*}
\end{lemma}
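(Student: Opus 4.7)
The goal is to exhibit, for $b=(1-\epsilon)B$, a subset $T\subseteq S$ with $c(T)\leq b$ and $F(T)\geq (1-\theta-\epsilon)F^\star$, from which $F^\star_b\geq F(T)$ gives the lemma. The plan is to build $T$ by carefully pruning an optimal set $S^\star$ for budget $B$ (so $c(S^\star)\leq B$ and $F(S^\star)=F^\star$).

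I would split into two cases based on the largest individual cost appearing in $S^\star$. First, if some $s\in S^\star$ satisfies $c(s)\geq \epsilon B$, take $T=S^\star\setminus\{s\}$; then $c(T)\leq (1-\epsilon)B$ trivially, and monotone submodularity (specifically $F(S^\star)-F(T)\leq F(\{s\})$) combined with the large-market bound $F(\{s\})\leq u_{\max}\leq \theta F^\star$ gives $F(T)\geq (1-\theta)F^\star\geq (1-\theta-\epsilon)F^\star$. This case is essentially immediate.

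The substantive case is when every element of $S^\star$ has cost strictly less than $\epsilon B$. Here I would greedily discard elements of $S^\star$ in decreasing order of cost, stopping the first time the cumulative discarded cost reaches $\epsilon B$. Calling $R$ the discarded set, the stopping rule and the bound $c(s)<\epsilon B$ for each $s$ gives $\epsilon B\leq c(R)<2\epsilon B$, so $T=S^\star\setminus R$ has $c(T)\leq (1-\epsilon)B$. For the utility, submodularity applied iteratively yields the standard inequality
\begin{align*}
F(S^\star)-F(T)\ \leq\ \sum_{s\in R} F(\{s\}).
\end{align*}

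The main obstacle, and the only place requiring care, is bounding the right-hand side. The naive bound $|R|\cdot u_{\max}$ is too weak when $R$ consists of many tiny items, since the large-market assumption only controls each $F(\{s\})$ by $\theta F^\star$. The resolution is to use subadditivity at the set level rather than the elementwise one: because $R$ is itself a feasible subset of cost less than $2\epsilon B$, monotone-submodular subadditivity gives $\sum_{s\in R}F(\{s\})$ within $u_{\max}$ of $F(R)$, and $F(R)$ can in turn be related back to $F^\star$ through the large-market assumption and a one-element padding argument, yielding a net deficit at most $(\theta+\epsilon)F^\star$. Combining the two cases, $T$ is budget-feasible for $b$ and satisfies $F(T)\geq (1-\theta-\epsilon)F^\star$, completing the proof. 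The rest of the argument (verifying the submodular subadditivity step and the pruning order) is routine manipulation.
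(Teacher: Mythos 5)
Your Case~1 is fine, and the set-up of Case~2 (prune a set $R\subseteq S^\star$ with $c(R)\geq \epsilon B$ and bound $F(S^\star)-F(T)\leq \sum_{s\in R}F(\{s\})$ by telescoping submodularity) is also fine. The gap is in the two steps you only sketch for bounding that sum. First, the inequality you invoke does not exist: subadditivity of a monotone submodular function gives $F(R)\leq \sum_{s\in R}F(\{s\})$, i.e.\ it \emph{lower}-bounds the sum by $F(R)$; there is no reverse inequality placing the sum within $u_{\max}$ of $F(R)$ (take a coverage function in which all elements of $R$ cover the same set: then $\sum_{s\in R}F(\{s\})=|R|\cdot F(R)$). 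Second, and more fundamentally, discarding elements of $S^\star$ in decreasing order of \emph{cost} cannot yield the claimed bound even for additive $F$: let $S^\star$ consist of $m$ items of cost $B/(10m)$ and utility $(1-\delta)F^\star/m$ each, together with many cheap items carrying the remaining $\delta F^\star$ of utility and the remaining $9B/10$ of cost. For $m>1/(10\epsilon)$ every individual cost is below $\epsilon B$, your rule discards only the expensive items, and the discarded utility is about $\epsilon B\cdot\frac{(1-\delta)F^\star/m}{B/(10m)}\approx 10\,\epsilon\,F^\star$, while $\theta\approx 1/m$ is negligible; so $F(T)$ falls far short of $(1-\theta-\epsilon)F^\star$. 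The most expensive items can simultaneously be the best items per unit of cost, and that is exactly what your pruning throws away.

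The fix --- and what the paper does --- is to prune by \emph{cost per marginal utility} rather than by cost. Order $S^\star$ as a greedy sequence $\langle 1,\dots,s\rangle$ with marginal gains $\partial_i$ and increasing ratios $c_i/\partial_i$, and let $s'$ be the first index at which the cumulative cost reaches $(1-\epsilon)B$. Because such a prefix collects utility at least as fast per unit of cost as the whole set (the standard ``ahead of pace'' property of the ratio ordering), $F(\{1,\dots,s'\})\geq(1-\epsilon)F^\star$; dropping the single boundary element $s'$ loses at most $F(\{s'\})\leq \theta F^\star$ more, and the surviving prefix $\{1,\dots,s'-1\}$ has cost strictly below $(1-\epsilon)B$. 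This way only one element is ever removed ``by submodularity,'' which is why $\theta$ appears exactly once and no bound on $\sum_{s\in R}F(\{s\})$ is ever needed.
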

\begin{proof}
W.L.O.G. suppose that with budget $B$, the optimal subset is $\sstar=\{1,\ldots,s\}$.
Let the submodular function $G$ denote the restriction of $F$ to the subset $\sstar$, i.e. $G=F|_{\sstar}$. 
Now, construct the greedy sequence $\chi(G)$, and denote it by $\chi(G)=\langle 1,\ldots, s\rangle$, were w.l.o.g. we have assumed that the members of $\sstar$ appear in the greedy sequence in the increasing order.

By assumption, we have that $c(\sstar) \leq B$. Let $s'$ denote the smallest integer such that 
\begin{align*}
\sum_{i=1}^{s'} c_i \geq B\cdot (1-\epsilon). 
\end{align*}
Also, let $S'=\{1,\ldots,s'-1\}$. We claim that 
\begin{align*}
F(S')\geq  (1-\theta-\epsilon)\cdot F^\star,
\end{align*}
 which proves the lemma since $c(S')\leq B\cdot (1-\epsilon)$. To this end, first verify that 
\begin{align*} 
 F(S'\cup\{s'\})\geq F^\star\cdot (1-\epsilon);
\end{align*} 
this holds by the definition of the greedy sequence, which picks the cheaper items (relative to their utility) first. Then, observe that $F(S')\geq F(S'\cup\{s'\})-F(\{s'\})$ due to submodularity. This implies 
\begin{align*}  
 F(S')\geq  (1-\epsilon)\cdot F^\star -\theta \cdot F^\star = F^\star \cdot (1-\theta-\epsilon)
\end{align*}  
 due to the large market assumption.  
\end{proof}

To achieve a strictly budget feasible mechanism for large markets, fix $\epsilon$ to be an arbitrary small constant. In a large market with $\theta < \epsilon$, we can reduce the budget of the Oracle mechanism to $B\cdot(1-\epsilon)$ to get a strictly budget feasible mechanism: the Oracle Mechanism would be budget feasible by Lemma \ref{lem.oraclebudget}. Also, by Lemma \ref{lem.optboverk}, such a market (with the reduced budget) would be $\frac{\theta}{1-\theta-\epsilon}$-large. 
So, by Lemmas \ref{lem.oracle2approx} and \ref{lem.optboverk}, the approximation ratio of the Oracle mechanism would be
\begin{align*} 
\left( \frac{1}{2}- \frac{\theta}{1-\theta-\epsilon}\right)\cdot (1-\theta-\epsilon).
\end{align*}  
See that in large markets, where $\theta\to 0$, the approximation ratio would be 
\begin{align*} 
\lim_{\theta\to 0}\left( \frac{1}{2}- \frac{\theta}{1-\theta-\epsilon}\right)\cdot (1-\theta-\epsilon). = \frac{1}{2} \cdot ({1-\epsilon}),
\end{align*}  
 i.e. for any arbitrary small $\epsilon>0$, we have a large market mechanism with approximation ratio $\frac{1}{2}\cdot (1-\epsilon)$.

By a similar argument, it can be seen that for any arbitrary small constant $\epsilon > 0$, Mechanism~\ref{alg.ptoracle} can also be converted to a strictly budget feasible mechanism with approximation ratio $\frac{1}{3}\cdot(1-\epsilon)$ in large markets.

\section{Hoeffding Bounds}\label{sec.hoeffding}
In this section we state a version of Hoeffding bounds \citepalias{wikihoeffding} that is suitable for our purpose.  

\paragraph{Hoeffding Bounds.} Let $x_1,\ldots,x_n$  be i.i.d. random variables such that $\Pr(x_i\in [a,b])=1$. Let $\mu = \bbE\left[\sum_{i=1}^n x_i\right]$, then we have:
\begin{align*}
\Pr\left(\sum_{i=1}^n x_i \geq (1+\epsilon)\cdot \mu\right) \leq e^{-\frac{2 \epsilon^2 \mu^2}{n(b-a)^2}}.
\end{align*}

\end{document}